\documentclass{theoretics}

\newcommand\numberthis{\addtocounter{equation}{1}\tag{\theequation}}
\usepackage{bm}
\usepackage{mathtools}
\DeclarePairedDelimiter\ceil{\lceil}{\rceil}
\DeclarePairedDelimiter\floor{\lfloor}{\rfloor}

\usepackage{thm-restate}

\title{A Robust Version of Heged\H{u}s's Lemma, with Applications}

\ThCSauthor{Srikanth Srinivasan}{srikanth@cs.au.dk}[https://orcid.org/0000-0001-6491-124X]
\ThCSaffil{Department of Computer Science, Aarhus University, Denmark}
\ThCSthanks{Work done while at the Department of Mathematics, Indian Institute of Technology Bombay, Mumbai, India.  Supported by MATRICS grant MTR/2017/000958 awarded by SERB, Government of India. A preliminary version
of this paper appeared at STOC 2020.}

\ThCSyear{2023}
\ThCSarticlenum{5}
\ThCSreceived{Feb 28, 2022}
\ThCSrevised{Oct 21, 2022}
\ThCSaccepted{Dec 20, 2022}
\ThCSpublished{Feb 24, 2023}
\ThCSkeywords{Polynomial approximation, Boolean functions, Probabilistic degree, Coin Problem}
\ThCSdoi{10.46298/theoretics.23.5}
\ThCSshortnames{S. Srinivasan} 
\ThCSshorttitle{A Robust Version of Heged\H{u}s's Lemma}

\addbibresource{robust-hegedus.bib}



\newcommand{\prob}[2]{\mathop{\mathrm{Pr}}_{#1}\left[#2\right]}
\newcommand{\avg}[2]{\mathop{\textbf{E}}_{#1}[#2]}

\newcommand{\F}{\mathbb{F}}

\newcommand{\AC}{\mathrm{AC}}

\newcommand{\ip}[2]{\langle #1, #2 \rangle}

\newcommand{\mc}[1]{\mathcal{#1}}

\newcommand{\pdeg}{\mathrm{pdeg}}
\newcommand{\spec}{\mathrm{Spec}\,}
\newcommand{\per}{\mathrm{per}}
\newcommand{\sB}{s\mathcal{B}}
\newcommand{\Maj}{\mathrm{Maj}}
\newcommand{\Thr}{\mathrm{Thr}}
\newcommand{\MOD}{\mathrm{MOD}}

\newcommand{\EThr}{\mathrm{EThr}}

\newcommand{\mb}{\mathbb}
\newcommand{\tx}{\text}

\newcommand{\charac}{\tx{char}}


\begin{document}
	
	\maketitle
	
	\begin{abstract}
		Heged\H{u}s's lemma is the following combinatorial statement regarding polynomials over finite fields. Over a field $\F$ of characteristic $p > 0$ and for $q$ a  power of $p$, the lemma says that any multilinear polynomial $P\in \F[x_1,\ldots,x_n]$ of degree less than $q$ that vanishes at all points in $\{0,1\}^n$ of some fixed Hamming weight $k\in [q,n-q]$ must also vanish at all points in $\{0,1\}^n$ of weight $k + q$. This lemma was used by Heged\H{u}s (2009) to give a solution to \emph{Galvin's problem}, an extremal problem about set systems; by Alon, Kumar and Volk (2018) to improve the best-known multilinear circuit lower bounds; and by Hrube\v{s}, Ramamoorthy, Rao and Yehudayoff (2019) to prove optimal lower bounds against depth-$2$ threshold circuits for computing some symmetric functions. 
		
		In this paper, we formulate a robust version of Heged\H{u}s's lemma. Informally, this version says that if a polynomial of degree $o(q)$ vanishes at most points of weight $k$, then it vanishes at many points of weight $k+q$. We prove this lemma and give the following three different applications.
		\begin{itemize}
		\item Degree lower bounds for the coin problem: The \emph{$\delta$-Coin Problem} is the problem of distinguishing between a coin that is heads with probability $((1/2) + \delta)$ and a coin that is heads with probability $1/2$. We show that over a field of positive (fixed) characteristic, any polynomial that solves the $\delta$-coin problem with error $\varepsilon$ must have degree $\Omega(\frac{1}{\delta}\log(1/\varepsilon)),$ which is tight up to constant factors.
		
		\item Probabilistic degree lower bounds: The \emph{Probabilistic degree} of a Boolean function is the minimum $d$ such that there is a random polynomial of degree $d$ that agrees with the function at each point with high probability. We give tight lower bounds on the probabilistic degree of \emph{every} symmetric Boolean function over positive (fixed) characteristic. As far as we know, this was not known even for some very simple functions such as unweighted Exact Threshold functions, and constant error.
		
		\item A robust version of the combinatorial result of Heged\H{u}s (2009) mentioned above.
		\end{itemize}
	\end{abstract}
	
	\section{Introduction}
	
	The Polynomial Method is a technique of great utility in both Theoretical Computer Science and Combinatorics. The idea of associating polynomials with various combinatorial objects and then using algebraic or geometric techniques to analyze them has proven useful in many settings including, but not limited to, Computational Complexity (Circuit lower bounds~\cite{Razborov,Smolensky87,Beigel-survey,Wil_ACC}, Pseudorandom generators~\cite{Braverman}), Algorithm design (Learning Algorithms~\cite{LMN,KlivansServedio,KlivansODonnellServedio04}, Satisfiability algorithms~\cite{Wil_ACC,WilACCthr}, Combinatorial algorithms~\cite{Will-APSP,AbboudWY,AW}), and Extremal Combinatorics~\cite{Guthbook,CLP,EG}.
	
	The engine that drives the proofs of many of these results is our understanding of combinatorial and algebraic properties of polynomials. In this paper, we investigate another such naturally stated property of polynomials defined over the Boolean cube $\{0,1\}^n$ and strengthen known results in this direction. We then apply this result to sharpen known results in theoretical computer science and combinatorics. 
	
	The question we address is related to how well low-degree polynomials can `distinguish' between different layers of the Boolean cube $\{0,1\}^n.$ For $m\in \{0,\ldots,n\}$,  let $\{0,1\}^n_m$ be the elements of $\{0,1\}^n$ of Hamming weight exactly $m$. As a first approximation, let us say that a polynomial $P\in \F[x_1,\ldots,x_n]$ (here $\F$ is some field) distinguishes between level sets $\{0,1\}^n_k$ and $\{0,1\}^n_K$ if it vanishes at all points in the former set and at no point of the latter. Note that the ability of low-degree polynomials to do this depends on the properties of the underlying field $\F$: when $\F = \mathbb{Q}$ (or any field of characteristic $0$), the simple polynomial $\left(\sum_{i=1}^n x_i\right) - k$ does the job. However, if the field $\F$ has positive characteristic $p$ and more specifically if $K-k$ is divisible by~$p$, then this simple polynomial no longer works and the answer is not so clear.
	
	In this setting, a classical theorem of Lucas tells us that if $q$ is the largest power of $p$ dividing $K-k$, then there is a polynomial of degree $q$ that distinguishes between $\{0,1\}^n_k$ and $\{0,1\}^n_K.$ A very interesting lemma of Heged\H{u}s~\cite{Hegedus} shows that this is tight even if we only require $P$ to be non-zero at \emph{some} point of $\{0,1\}^n_K.$
	More precisely, Heged\H{u}s's lemma shows the following.\footnote{The lemma is usually stated~\cite{Hegedus,AKV,HRRY} for a more restricted choice of parameters. However, the known proofs extend to yield the stronger statement given here. A proof of a more general statement can be found in~\cite[Theorem 1.5]{SV-arxiv}.} 
	
	\begin{lemma}[Heged\H{u}s's lemma]
	\label{lem:Hegedus}
	Let $\mathbb{F}$ be a field of characteristic $p > 0$. Fix any positive integers $n,k,q$ such that $k\in [q,n-q],$ and $q$ a power of $p$. If $P\in \F[x_1,\ldots,x_n]$ is any  polynomial that vanishes at \emph{all} $a\in \{0,1\}^n_k$ but does not vanish at \emph{some} $b\in \{0,1\}^n_{k+q}$, then  $\deg(P) \geq q.$
	\end{lemma}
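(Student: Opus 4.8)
My plan is to prove the contrapositive via a symmetrization argument that reduces the multivariate statement to a univariate one about polynomials in a single "slack" variable, and then to invoke a divisibility obstruction coming from the binomial coefficients modulo $p$.

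First I would observe that it suffices to treat the case where $P$ is multilinear (over $\{0,1\}^n$ any polynomial can be multilinearized without increasing degree and without changing its values on the cube), and then I would symmetrize: consider the average $\tilde P = \frac{1}{n!}\sum_{\sigma\in S_n} P^\sigma$ over all coordinate permutations. This does not directly work because averaging might kill the nonzero value at $b$, so instead I would argue more carefully — fix the point $b\in\{0,1\}^n_{k+q}$ where $P(b)\neq 0$, and restrict attention to the subcube spanned by the coordinates in the support of $b$ together with a complementary block, reducing to the situation where the relevant combinatorics happen on two blocks of coordinates. The cleaner route, which I expect is the intended one, is to pass to the univariate polynomial obtained by plugging in a symmetric assignment: partition $[n]$ into two parts and look at $Q(t) = \sum_{\text{subsets } S \text{ of size } t} P(\mathbf{1}_S)$ restricted appropriately; since $P$ has degree $<q$, $Q(t)$ is a polynomial in $t$ (via the identity expressing symmetric evaluations in terms of binomial coefficients $\binom{t}{j}$ for $j<q$) of degree $<q$.

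Next, the key algebraic step: I would use the fact that the binomial coefficient $\binom{k+q}{k} \equiv \binom{k+q}{q}$ behaves in a controlled way modulo $p$ when $q$ is a power of $p$ — by Lucas's theorem, adding $q$ (a single digit $1$ in base $p$, since $k\in[q,n-q]$ guarantees no carry issues in the relevant range after suitable reduction) to $k$ interacts predictably with the base-$p$ digits. More precisely, the heart of Hegedűs's lemma is a statement of the form: a univariate polynomial of degree $<q$ over $\F$ that vanishes on an arithmetic-progression-like set forcing vanishing at the next point, and this is where one uses that the finite differences / Newton-forward-difference coefficients of such a polynomial, evaluated at multiples of $q$, must vanish modulo $p$. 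I would set up the finite difference operator $\Delta_q$ (the $q$-step difference) and show that $\deg Q < q$ together with the vanishing at weight $k$ forces, through a telescoping identity using $\binom{q}{i}\equiv 0 \pmod p$ for $0<i<q$, that $Q(k+q)=Q(k)=0$ — contradicting $P(b)\neq 0$ once the symmetrization is arranged so that $Q(k+q)$ is a nonzero multiple of $P(b)$.

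The main obstacle I anticipate is the reduction itself: naive full symmetrization over $S_n$ destroys the witness $b$, so I need a symmetrization that (i) preserves degree below $q$, (ii) preserves vanishing on the entire layer $\{0,1\}^n_k$, and (iii) keeps the value at (the image of) $b$ nonzero. The standard fix is to symmetrize only over permutations fixing the support of $b$ setwise and its complement setwise — i.e. work with the two-block symmetrization — which yields a bivariate polynomial $\hat P(s,t)$ symmetric in the two blocks separately, hence expressible in $\binom{s}{\cdot}\binom{t}{\cdot}$ with total degree $<q$; one then restricts to the diagonal line through $(|b\cap A|, |b\cap B|)$ and through enough points of weight $k$, and applies the univariate $q$-step difference argument. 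Verifying that enough weight-$k$ points lie on this line (this is exactly where the hypothesis $k\in[q,n-q]$ is used) and that the leading binomial coefficient in the telescoping is a unit mod $p$ are the routine-but-essential checks I would carry out last.
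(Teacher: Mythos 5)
The paper itself does not prove Lemma~\ref{lem:Hegedus}; it states the lemma and cites Heged\H{u}s's Gr\"{o}bner-basis proof and an elementary Combinatorial-Nullstellensatz proof due to the author and Alon (see~\cite{HRRY}). So the comparison below is with what is known rather than with a proof printed in the text.

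Your univariate ingredient is correct and is exactly where characteristic $p$ enters: if $f(s)=\sum_{i<q}c_i\binom{s}{i}$ with $c_i\in\F_p$ and $q=p^\ell$, then by the Vandermonde identity $f(s+q)-f(s)=\sum_{i<q}c_i\sum_{j=1}^{i}\binom{q}{j}\binom{s}{i-j}=0$, since $\binom{q}{j}\equiv 0\pmod{p}$ for $0<j<q$ by Lucas. The gap is in producing such an $f$ from $P$: every version of the reduction you describe involves an average that is unavailable over $\F$. Symmetrizing over the stabilizer $S_A\times S_B$ of $b$ (here $A=\supp(b)$) requires either dividing by $|A|!\,|B|!=(k+q)!\,(n-k-q)!$, which is $0$ in $\F$ as soon as $q>1$ (then $k\geq q\geq p$), or taking the unnormalized orbit sum $\hat P=\sum_\sigma P^\sigma$, in which case $\hat P(b)=(k+q)!\,(n-k-q)!\,P(b)=0$ and the witness is lost --- precisely the failure you already flagged for full $S_n$-symmetrization, and it persists unchanged in the two-block version. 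Your first ``cleaner route'' has the same defect: $Q(t)=\sum_{|S|=t}P(\mathbf{1}_S)=\sum_{d<q}C_d\binom{n-d}{t-d}$ (with $C_d$ the sum of the degree-$d$ coefficients of $P$), and $t\mapsto\binom{n-d}{t-d}$ has degree $n-d$, not $<q$; it becomes $\binom{t}{d}$ only after dividing by $\binom{n}{t}$, which again is not available over $\F$. Averaging-based reductions to a symmetric or univariate object are exactly the kind of step that breaks in positive characteristic, and your proposal does not yet supply a replacement. The known elementary proof avoids averaging entirely: restrict the variables outside $\supp(b)$ to $0$ (so $n=k+q$ and $b=1^n$), multiply $P$ by suitable auxiliary polynomials, and invoke Fact~\ref{fac:int-set} (a nonzero multilinear polynomial of degree at most $d$ cannot vanish on a Hamming ball of radius $d$); the paper's proof of Lemma~\ref{lem:Dij} is built on exactly that template, and that is where to look for the mechanism that replaces your symmetrization step.
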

	
	This lemma was first proved in~\cite{Hegedus} using Gr\"{o}bner basis techniques. An elementary proof of this was recently given by the author  and independently by Alon~(see~\cite{HRRY}) using the Combinatorial Nullstellensatz.
	
	Heged\H{u}s's lemma has been used to resolve various questions in both combinatorics and theoretical computer science.
	
	\begin{itemize}
	\item Heged\H{u}s used this lemma to give an alternate solution to a problem of Galvin, which is stated as follows. Given a positive integer $n$ divisible by $4$, what is the smallest size $m = m(n)$ of a family $\mc{F}$ of $(n/2)$-sized subsets of $[n]$ such that for any $S\subseteq [n]$ of size $n/2$, there is a $T\in \mc{F}$ with $|T\cap S| = n/4$? It is easy to see that $m(n)\leq n/2$ for any $n$. A matching lower bound was given by Enomoto, Frankl, Ito and Nomora~\cite{EFIN} in the case that $t := (n/4)$ is odd. Heged\H{u}s used the above lemma to give an alternate proof of a lower bound of $n$ in the case that $t$ is an odd prime. His proof was subsequently strengthened to a linear lower bound for all $t$ by Alon et al.~\cite{AKV} and more recently to a near-tight lower bound of $(n/2)-o(n)$ for all $t$ by Hrube\v{s} et al.~\cite{HRRY}. Both these results used the lemma above.
	
	\item Alon et al.~\cite{AKV} also used Heged\H{u}s's lemma to prove bounds for generalizations of Galvin's problem. Using this, they were able to prove improved lower bounds against \emph{syntatically multilinear algebraic circuits.} These are algebraic circuits that compute multilinear polynomials in a ``transparently multilinear'' way (see e.g.~\cite{SY} for more). Alon et al. used Heged\H{u}s's lemma to prove near-quadratic lower bounds against syntactically multilinear algebraic circuits computing certain explicitly defined multilinear polynomials, improving on an earlier $\tilde{\Omega}(n^{4/3})$ lower bound of Raz, Shpilka and Yehudayoff~\cite{RSY}.
	
	\item Hrube\v{s} et al.~\cite{HRRY} also used Heged\H{u}s's lemma to answer the following question of Kulikov and Podolskii~\cite{KP} on depth-$2$ threshold circuits. What is the smallest $k = k(n)$ such that there is a depth-$2$ circuit made up of Majority\footnote{The Majority function is the Boolean function $f$ which accepts  exactly those inputs that have more $1$s than $0$s.} gates of fan-in at most $k$ that computes the Majority function on $n$ bits? Using Heged\H{u}s's lemma, Hrube\v{s} et al. showed an asymptotically tight lower bound of $n/2-o(n)$ on $k(n)$. 
	\end{itemize}
	
	\paragraph{Main Result.} Our main result in this paper is a `robust' strengthening of Heged\H{u}s's lemma. Proving `robust' or `stability' versions of known results is standard research direction in combinatorics. Such questions are usually drawn from the following template. Given the fact that objects that satisfy a certain property have some fixed structure, we ask if a similar structure is shared by objects that `almost' or `somewhat' satisfy the property. 
	
	In our setting, we ask if we can recover the degree lower bound in Heged\H{u}s's lemma even if we have a polynomial $P$ that `approximately' distinguishes between $\{0,1\}^n_k$ and $\{0,1\}^n_{k+q}$: this means that the polynomial $P$ vanishes at `most' points of weight $k$ but is non-zero at `many' points of weight $k+q$. Our main lemma is that under suitable definitions of `most' and `many', we can recover (up to constant factors) the same degree lower bound as in Lemma~\ref{lem:Hegedus} above.
	
	\begin{lemma}[Main Result (Informal)]
	\label{lem:main-informal}
	Assume that $\F$ is a field of characteristic $p$. Let $n$ be a growing parameter and assume  we have positive integer parameters $k,q$ such that $100q < k < n-100q$ and $q$ is a power of $p$. For $\varepsilon = \varepsilon(n,k,q),$ if $P\in\F[x_1,\ldots,x_n]$ that vanishes at a $(1-\varepsilon)$-fraction  of points of $\{0,1\}^n_k$ but does not vanish at an $\varepsilon^{0.0001}$ fraction of points of $\{0,1\}^n_{k+q},$ then $\deg(P) = \Omega(q).$
	\end{lemma}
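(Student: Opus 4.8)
The plan is to reduce the robust statement to the exact statement (Lemma~\ref{lem:Hegedus}) by a ``random restriction plus symmetrization'' argument, turning the fractional hypothesis into an honest vanishing/non-vanishing statement on a smaller cube. Concretely, suppose towards a contradiction that $\deg(P) = d$ with $d = o(q)$. The first step is to \emph{symmetrize}: since the hypotheses only concern the fractions of points of each weight at which $P$ vanishes, and the conclusion we want ($\deg(P)\ge \Omega(q)$) is about a single polynomial, I would first replace $P$ by a related object that is invariant under the symmetric group $S_n$ while not increasing the degree --- for instance, by considering the action of $S_n$ on $P$ and working with $\widehat{P}(x) := \sum_{\sigma\in S_n} \pm P(\sigma x)$ for a suitable sign pattern, or more simply by passing to the ``profile'' of $P$ on each level (the distribution of nonzero values as a function of which coordinates are set). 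The point is that on the slice $\{0,1\}^n_k$, a symmetric polynomial of degree $d$ behaves like a univariate polynomial of degree $d$ in the Hamming weight, so ``vanishes at a $(1-\varepsilon)$ fraction of the slice'' should force it to vanish on large \emph{sub-slices}, i.e.\ on the slice of every restriction that fixes a modest number of coordinates.

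The core step is a \textbf{restriction argument}. Choose a random subset $S\subseteq[n]$ of the coordinates of size roughly $n - m$ for an appropriate $m = \Theta(q)$ (or slightly more), and fix those coordinates to $0/1$ values chosen so that the surviving sub-cube $\{0,1\}^S$ contains a full weight-$k'$ slice and a full weight-$(k'+q)$ slice for suitable $k'\in[q, m-q]$, where $m = |S|$. A uniformly random point of $\{0,1\}^n_k$ decomposes as a random restriction of the fixed coordinates plus a random point of the surviving slice; so the hypothesis ``$P$ vanishes on a $(1-\varepsilon)$ fraction of $\{0,1\}^n_k$'' translates, by an averaging/Markov argument over the choice of restriction, into: for most restrictions $\rho$, the restricted polynomial $P|_\rho$ vanishes on a $(1-\varepsilon')$ fraction of the weight-$k'$ slice of the small cube, with $\varepsilon'$ not much larger than $\varepsilon$. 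Symmetrically, ``$P$ is nonzero on an $\varepsilon^{0.0001}$ fraction of $\{0,1\}^n_{k+q}$'' gives that a non-negligible fraction of restrictions $\rho$ have $P|_\rho$ nonzero somewhere on the weight-$(k'+q)$ slice. The key is to choose the parameters $m$, $k'$, and the threshold so that these two events (over the random $\rho$) must overlap: there exists a single restriction $\rho$ such that $P|_\rho$ vanishes on \emph{all but a $(1-\varepsilon')$ fraction} of a small slice \emph{and} is nonzero on the next slice. If $\varepsilon'$ can be driven below $1/\binom{m}{k'}$ --- i.e.\ below the reciprocal of the slice size --- then ``vanishes on a $(1-\varepsilon')$ fraction'' means ``vanishes everywhere'' on that slice, and Lemma~\ref{lem:Hegedus} applied on the $m$-variable cube forces $\deg(P|_\rho)\ge q$, contradicting $\deg(P)=o(q)$.

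The \textbf{main obstacle} is precisely the tension between the exponents: to invoke the exact lemma on the small cube I need the error on the small slice to be less than $1/\binom{m}{k'} = 2^{-\Theta(m)} = 2^{-\Theta(q)}$, whereas the hypothesis only gives error $\varepsilon$ on the \emph{large} slice with $\varepsilon^{0.0001}$ a lower bound on the nonzero fraction on the large $(k+q)$-slice. So $\varepsilon$ could be as large as, say, a constant, and a naive restriction does not amplify a constant error down to $2^{-\Theta(q)}$. Resolving this is where the quantitative heart of the argument lies: rather than a single restriction, I would iterate or use a more careful counting --- e.g.\ fix the large deviation set of bad points of the $k$-slice and argue that a random restriction avoids all of them with probability bounded below by roughly $(1-\varepsilon)^{\text{something}}$ relative to the probability that it hits the good (nonzero) points of the $(k+q)$-slice, exploiting that hitting a nonzero point of the upper slice only requires probability $\approx \varepsilon^{0.0001}$ while avoiding the bad lower-slice points costs a factor that can be made to be roughly $\varepsilon^{o(1)}$ if $q$ (hence $m$) is taken slightly super-constant relative to $\log(1/\varepsilon)$. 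The precise bookkeeping --- relating $m$, $q$, $\varepsilon$, the $0.0001$ exponent, and the $100q$ spacing, and handling the regime where $\varepsilon$ itself is only inverse-polynomially small or smaller versus where it is constant --- is the delicate part, and is presumably why the informal statement hides the constants. I expect the clean way through is to first prove a ``one-sided amplification'' lemma: a degree-$d$ polynomial that vanishes on a $(1-\varepsilon)$ fraction of $\{0,1\}^n_k$ in fact vanishes on a $(1-\varepsilon^{\Omega(1)})$, and by iterating on $(1-\mathrm{negl})$, fraction of $\{0,1\}^n_{k'}$ for each restriction to a sub-cube of dimension $\Theta(d)$ --- this is itself a (non-robust-flavored) consequence of degree being small, essentially a hitting-set / Schwartz--Zippel-type statement on slices --- and then feed that into Hegedűs's lemma on a cube of dimension $\Theta(q)$.
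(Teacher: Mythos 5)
Your proposal takes a genuinely different route from the paper, and it has a real gap that your own last paragraph correctly identifies as ``the quantitative heart of the argument'' but does not resolve. You want to reduce to the \emph{exact} Heged\H{u}s lemma (Lemma~\ref{lem:Hegedus}) by random restriction to an $m$-dimensional subcube with $m = \Theta(q)$, which requires driving the error on the sub-slice below $1/\binom{m}{k'} = 2^{-\Theta(q)}$. But a random restriction only preserves the error fraction in expectation: if $P$ is nonzero on an $\varepsilon$-fraction of $\{0,1\}^n_k$, the restricted polynomial is (on average) nonzero on an $\varepsilon$-fraction of the induced sub-slice, with no exponential gain. Multiplying independently permuted copies (the device in Lemma~\ref{lem:repetition}) does reduce error, but costs a factor in degree equal to the number of copies, so reaching error $2^{-\Theta(q)}$ from constant $\varepsilon$ would cost a $\Theta(q)$ factor and destroy the bound. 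The ``one-sided amplification'' lemma you speculate about---that a degree-$d$ polynomial vanishing on a $(1-\varepsilon)$-fraction of a slice must, after restriction to a sub-cube of dimension $\Theta(d)$, vanish on all but a negligible fraction of the sub-slice---is false: the sampling-based construction in Section~\ref{sec:tightness} (Theorem~\ref{thm:tight}) produces degree-$o(q)$ polynomials vanishing on a $(1-\varepsilon)$-fraction of $\{0,1\}^n_k$ and nonzero on a $(1-\varepsilon)$-fraction of $\{0,1\}^n_{k+q}$ for any $\varepsilon = \exp(-o(\delta^2 n/\alpha))$, and restricting such a polynomial to a $\Theta(d)$-cube again leaves an $\approx\varepsilon$ error fraction, so no restriction can make it vanish exactly on a sub-slice.

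The paper does not reduce to the exact Heged\H{u}s lemma at all. Following the Smolensky / Aspnes--Beigel--Furst--Rudich template, it constructs two auxiliary polynomials: $Q_1$ of degree $\le t$ which is nonzero exactly on weights $\equiv m \pmod t$, and $Q_2$ of degree $\le m - t - \Theta(t)$ which vanishes on the error set $E_0$ (the points of weight $m-t$ where $P$ is nonzero) together with all lower weights $\equiv m \pmod t$, yet does not vanish somewhere on the good part of the weight-$m$ slice. The multilinear reduction $R$ of $P\cdot Q_1\cdot Q_2$ then vanishes on the whole Hamming ball of radius $m-1$ around $\vec{0}$ but is nonzero at a point of weight $m$, forcing $\deg(R)\ge m$ (Fact~\ref{fac:int-set}) and hence $\deg(P)\ge m - \deg Q_1 - \deg Q_2 = \Omega(t)$. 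The existence of $Q_2$ with the required low degree is where the actual work happens: it needs the bound of Nie and Wang (Theorem~\ref{thm:NW}) on the degree-$D$ closure of a small set, which shows that the closure of $E_0$ plus the low weights cannot cover the $e^{-\ell/2}$-fraction of the weight-$m$ slice where $P$ is nonzero. This is the mechanism that lets the argument tolerate a nonzero error set rather than forcing it empty; it is the key ingredient absent from your proposal. (Random restrictions \emph{do} appear in the paper, but only in the general case of Lemma~\ref{lem:main}, to normalize to $K = n/2$ and to lose only a constant factor via Lemma~\ref{lem:repetition}---not as a route back to the exact lemma.)
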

	
	\begin{remark}
	\label{rem:intromainlem}
	\begin{enumerate}
	\item To keep the exposition informal, we have not specified exactly what $\varepsilon$ is in the above lemma. However, we note below that the $\varepsilon$ chosen is nearly the best possible in the sense that if $\varepsilon$ is appreciably increased, then there is a sampling-based construction of a polynomial~$P$ of degree $o(q)$ satisfying the hypothesis of the above lemma (see Section~\ref{sec:tightness}).
	
	\item	The reader might wonder why the lemma above is a strengthening of Heged\H{u}s's lemma, given that we require the polynomial $P$ to be non-zero at many points of weight $k+q$, which is a seemingly stronger condition than required in Lemma~\ref{lem:Hegedus}. However, this is in fact a weaker condition. This is because of the following simple algebraic fact: if there is a polynomial $P$ of degree at most $d$ satisfying the hypothesis of Lemma~\ref{lem:Hegedus} (i.e. vanishing at all points of weight~$k$ but not at some point of weight $k+q$), then there is also a polynomial~$Q$ of degree at most~$d$ that vanishes at all points of weight~$k$ but does not vanish at a \emph{significant fraction} (at least a $(1-1/p)$ fraction) of points of weight $k+q$. We give a short proof of this in Appendix~\ref{appsec:lem2vslem1}. Hence, the above lemma is indeed a generalization of Lemma~\ref{lem:Hegedus} (up to the constant-factor losses in the degree lower bound). 
	\end{enumerate}
	\end{remark}
	
	\paragraph{Applications.} Our investigations into robust versions of Heged\H{u}s's lemma were motivated by questions in computational complexity theory. Using our main result, we are able to sharpen and strengthen known results in complexity as well as combinatorics. 
	
	\begin{enumerate}
	\item \textbf{Degree bounds for the Coin Problem:} For a parameter $\delta \in [0,1/2],$ we define the \emph{$\delta$-coin problem} as follows. We are given $N$ independent tosses of a coin, which is promised to either be of bias $1/2$ (i.e. unbiased) or $(1/2 )- \delta$, and we are required to guess which of these is the case with a high degree of accuracy, say with error probability at most $\varepsilon$. (See Definition~\ref{def:coinproblem} for the formal definition.)
	
	The coin problem has been studied in a variety of settings in complexity theory (see, e.g.~\cite{ABO,Valiant,Viola,SV,BV,CGR}) and for various reasons such as understanding the power of randomness in bounded-depth circuits, the limitations of blackbox hardness amplification, and devising pseudorandom generators for bounded-width branching programs. More recently, Limaye et al.~\cite{LSSTV-SICOMP} proved optimal lower bounds on the size of $\AC^0[\oplus]$~\footnote{Recall that these are bounded-depth circuits made up of AND, OR and $\oplus$ gates.} circuits solving the $\delta$-coin problem with constant error, strengthening an earlier lower bound of Shaltiel and Viola~\cite{SV}. This led to the first class of explicit functions for which we have tight (up to polynomial factors) $\AC^0[\oplus]$ lower bounds. These bounds were in turn  used by Golovnev, Ilango, Impagliazzo, Kabanets, Kolokolova and Tal~\cite{GIIKKT} to resolve a long-standing open problem regarding the complexity of MCSP in the $\AC^0[\oplus]$ model, and by Potukuchi~\cite{Potukuchi} to prove lower bounds for Andreev's problem.
	
	A key result in the lower bound of Limaye et al.~\cite{LSSTV-SICOMP} was a tight lower bound on the degree of any polynomial $P\in \F[x_1,\ldots,x_N]$ that solves the $\delta$-coin problem with constant error: they showed that any such polynomial $P$ must have degree at least $\Omega(1/\delta).$ As noted by Agrawal~\cite{Agrawal}, this is essentially equivalent to a recent result of Chattopadhyay, Hatami, Lovett and Tal~\cite{CHLT} on the level-$1$ Fourier coefficients of low-degree polynomials over finite fields, which in turn is connected to an intriguing new approach~\cite{CHLT} toward constructing pseudorandom generators secure against $\AC^0[\oplus].$
	
	Using the robust Heged\H{u}s lemma, we are able to strengthen the degree lower bound of~\cite{LSSTV-SICOMP} to a tight degree lower bound for \emph{all errors}. Specifically, we show that over any field $\F$ of fixed positive characteristic $p$, any polynomial $P$ that solves the $\delta$-coin problem with error $\varepsilon$ must have degree $\Omega(\frac{1}{\delta}\log(1/\varepsilon))$, which is tight for all $\delta$ and $\varepsilon.$
	
	\item \textbf{Probabilistic degrees of symmetric functions:} In a landmark paper~\cite{Razborov}, Razborov showed how to use polynomial approximations to prove lower bounds against $\AC^0[\oplus]$. The notion of polynomial approximation introduced (implicitly) in his result goes by the name of \emph{probabilistic polynomials,} and is defined as follows. An $\varepsilon$-error probabilistic polynomial of degree $d$ for a Boolean function $f:\{0,1\}^n\rightarrow \{0,1\}$ is a random  polynomial~$\bm{P}$ of degree at most $d$ that agrees with $f$ at each point with probability at least $1-\varepsilon$. The $\varepsilon$-error probabilistic degree of $f$ is the least $d$ for which this holds. (Roughly speaking, a low-degree probabilistic polynomial for $f$ is an efficient randomized algorithm for $f$, where we think of polynomials as algorithms and degree as a measure of efficiency.)
	
	Many applications of polynomial approximation in complexity theory~\cite{Beigel-survey} and algorithm design~\cite{Will-FSTsurvey} use probabilistic polynomials and specifically bounds on the probabilistic degrees of various \emph{symmetric} Boolean functions.\footnote{Recall that a Boolean function $f:\{0,1\}^n\rightarrow \{0,1\}$ is said to be symmetric if its output depends only on the Hamming weight of its input.} Motivated by this,  in a recent result with Tripathi and Venkitesh~\cite{STV}, we gave a near-tight characterization on the probabilistic degree of every symmetric Boolean function. Unfortunately, however, our upper and lower bounds were separated by logarithmic factors. This can be crucial: in certain algorithmic applications (see, e.g.,~\cite[Footnote, Page 138]{AW}), the appearance or non-appearance of an additional logarithmic factor in the degree can be the difference between (say) a truly subquadratic running time of $N^{2-\varepsilon}$ and a running time of $N^{2-o(1)}$, which might be less interesting.
	
	In the case of characteristic $0$ (or growing with $n$), such gaps look hard to close since we don't even understand completely the probabilistic degree of simple functions like the OR function~\cite{MNV,HS,BHMS}. However, in positive (fixed) characteristic, there are no obvious barrriers. Yet, even in this case, the probabilistic degree of very simple symmetric Boolean functions like the \emph{Exact Threshold functions} (functions that accept inputs of exactly one Hamming weight) remained unresolved until this paper. 
	
	In this paper, we resolve this question and more. We are able to give a tight (up to constants) lower bound (matching the upper bounds in~\cite{STV}) on the probabilistic degree of \emph{every} symmetric function over fields of positive (fixed) characteristic. 
	
	\item \textbf{Robust version of Galvin's problem:} Given that Heged\H{u}s's lemma was used to solve Galvin's problem, it is only natural that we consider the question of using the robust version to solve a robust version of Galvin's problem. More precisely, we consider the minimum size $m = m(n,\varepsilon)$ to be the minimum size of a family $\mc{F}$ of $(n/2)$-sized subsets of $[n]$  such that for all but an $\varepsilon$-fraction of sets $S$ of size $n/2$, there is a set $T\in \mc{F}$ such that $|S\cap T|  = n/4.$ 
	
	Following the proof of Galvin's theorem from Heged\H{u}s's lemma, we can prove a lower bound of $\Omega(\sqrt{n\log(1/\varepsilon}))$ for the above version of Galvin's problem for any $\varepsilon\in [2^{-n},1/2].$ Note that this interpolates smoothly between a bound of $\Omega(\sqrt{n})$ for constant $\varepsilon$ and $\Omega(n)$ for $\varepsilon = 2^{-\Omega(n)}$, both of which are tight. For general $\varepsilon$ in between these two extremes, we do not know if our bounds are tight (we suspect they are). However, our bounds \emph{are} tight for every $\varepsilon$ for a natural generalization of the above problem, where we allow intersections of any size (and not just $n/4$). We refer the reader to Section~\ref{sec:galvin} for details.
	\end{enumerate}
	
	\paragraph{Proof Outline.} We observe that the main lemma (Lemma~\ref{lem:main-informal}) is quite similar to classical polynomial approximation results of Razborov~\cite{Razborov} and Smolensky~\cite{Smolensky87, Smolensky93} (see also~\cite{Szegedy-thesis}). The main difference is that while these results hold for polynomials approximating some function on the whole cube $\{0,1\}^n$, the lemma deals with polynomial approximations that are more `local' in that they are restricted on just two layers of the cube. Nevertheless, we can show that the basic proof strategy of Smolensky (or more specifically a variant as in~\cite{ABFR94,KS}) can be used to prove our lemma as well.
	
	The main point of difference from these standard proofs is the employment of a result from discrete geometry due to Nie and Wang~\cite{NW}, that allows us to bound the size of the \emph{closure}\footnote{The \emph{degree-$D$ closure} $\mathrm{cl}_D(E)$ of a set $E$ is the set of points where any degree-$D$ polynomial $Q$ vanishing throughout $E$ is forced to vanish.} of a small set of points in the cube. This is a well-studied object in coding theory~\cite{Wei} and combinatorics~\cite{CL, KeevashSudakov, NW}, and turns out to be a crucial ingredient in our proof.
	
	For the application to the coin problem, we show that if a polynomial $P$ solves the coin problem (see Definition~\ref{def:coinproblem} for the formal definition of this), then it can be used to distinguish between Hamming weights $k$ and $k+q$ for $k$ and $q$ as in Lemma~\ref{lem:main-informal}. This reduction is done by a simple sampling argument. The degree lower bound in Lemma~\ref{lem:main-informal} then implies the desired degree lower bound on the degree of $P$.
	
	In the other applications to probabilistic degree and the robust version of Galvin's problem, the idea is to follow the proofs of the previous best results in this direction and apply the main lemma at suitable points. We defer more details to the actual proofs.

	\section{Preliminaries}
	\label{sec:prelims}
	
	  We use the notation $[a,b]$ to denote an interval in $\mb{R}$ as well as an interval in $\mb{Z}$. The distinction will be clear from context.  
	  
	  \paragraph{Multilinear polynomials and Multilinearization.} Fix any field $\F$. Throughout, we work with functions $f:\{0,1\}^n\rightarrow \F$ which are represented by multilinear polynomials.  Recall that each such function has a \emph{unique} multilinear polynomial representation. Further, given a (possibly non-multlinear) polynomial $P(x_1,\ldots,x_n)$ representing $f$ (i.e. $P(a) = f(a)$ for all $a\in \{0,1\}^n$), we can obtain a multilinear representation $Q$ by simply replacing each $x_i^r$ for $r > 1$ by $x_i$ in the polynomial $P$. This preserves the underlying function as $b^r = b$ for $b\in \{0,1\}$. Any polynomial $P$ can be \emph{multilinearized} this way without increasing the degree.

	  \paragraph{Bernstein's inequality.} The following standard deviation bound can be found in, e.g., the book of Dubhashi and Panconesi~\cite[Theorem 1.2]{dubhashi_panconesi_2009}.
	\begin{lemma}[Bernstein's inequality]\label{lem:bernstein}
		Let $X_1,\ldots,X_m$ be independent and identically distributed Bernoulli random variables with mean $q$.  Let $X=\sum_{i=1}^mX_i$.  Then for any $\theta>0$,
		\[
		\prob{}{|X-mq|>\theta}\le2\exp\bigg(-\frac{\theta^2}{2mq(1-q)+2\theta/3}\bigg).
		\]
	\end{lemma}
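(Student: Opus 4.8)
The plan is to use the Chernoff--Cram\'er exponential moment method. Fix a parameter $\lambda>0$. Applying Markov's inequality to the nonnegative random variable $e^{\lambda(X-mq)}$ gives $\Pr[X-mq>\theta]\le e^{-\lambda\theta}\,\mathbf{E}[e^{\lambda(X-mq)}]$, and since the $X_i$ are independent the moment generating function factorises, $\mathbf{E}[e^{\lambda(X-mq)}]=\prod_{i=1}^m\mathbf{E}[e^{\lambda(X_i-q)}]=\big(\mathbf{E}[e^{\lambda(X_1-q)}]\big)^m$. So the whole problem reduces to controlling the moment generating function of a single centred Bernoulli variable $Y:=X_1-q$, which satisfies $\mathbf{E}[Y]=0$, $\mathbf{E}[Y^2]=q(1-q)=:\sigma^2$, and $|Y|\le 1$.

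For the single-variable estimate I would expand in a power series: $\mathbf{E}[e^{\lambda Y}]=1+\sum_{j\ge 2}\lambda^j\mathbf{E}[Y^j]/j!$, and bound each term using $\lambda^j\mathbf{E}[Y^j]\le\lambda^j\mathbf{E}[|Y|^j]\le\lambda^j\mathbf{E}[Y^2]=\lambda^j\sigma^2$, which is valid because $|Y|\le 1$ forces $|Y|^j\le Y^2$ for $j\ge 2$ and because $\lambda>0$. This yields $\mathbf{E}[e^{\lambda Y}]\le 1+\sigma^2(e^\lambda-1-\lambda)\le\exp\!\big(\sigma^2(e^\lambda-1-\lambda)\big)$, and hence $\mathbf{E}[e^{\lambda(X-mq)}]\le\exp\!\big(m\sigma^2(e^\lambda-1-\lambda)\big)$. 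Next I would bound $e^\lambda-1-\lambda=\sum_{j\ge 2}\lambda^j/j!$ by $\tfrac{\lambda^2/2}{1-\lambda/3}$ for $0<\lambda<3$, comparing the tail with a geometric series via $j!\ge 2\cdot 3^{\,j-2}$. Combining the two gives $\Pr[X-mq>\theta]\le\exp\!\big(-\lambda\theta+\tfrac{m\sigma^2\lambda^2/2}{1-\lambda/3}\big)$.

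Finally I would (nearly) optimise over $\lambda$ with the standard choice $\lambda=\dfrac{\theta}{m\sigma^2+\theta/3}$, which lies in $(0,3)$ for every $\theta>0$. Using $1-\lambda/3=\dfrac{m\sigma^2}{m\sigma^2+\theta/3}$ and $\lambda(m\sigma^2+\theta/3)=\theta$, the exponent collapses to $-\lambda\theta/2=-\dfrac{\theta^2}{2m\sigma^2+2\theta/3}=-\dfrac{\theta^2}{2mq(1-q)+2\theta/3}$. Running the identical argument with each $X_i$ replaced by $1-X_i$ (equivalently, replacing $q$ by $1-q$, which leaves $q(1-q)$ and hence the bound unchanged) controls the lower tail $\Pr[mq-X>\theta]$ by the same quantity, and a union bound over the two tails produces the factor $2$ in the stated inequality. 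There is no real conceptual obstacle here; the only delicate point is pinning down the constants — the geometric-series bound on $e^\lambda-1-\lambda$ and the precise near-optimal $\lambda$ — so that the denominator emerges exactly as $2mq(1-q)+2\theta/3$ rather than merely as something of the same order.
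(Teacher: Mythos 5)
Your proof is correct. The paper does not actually prove Lemma~\ref{lem:bernstein}; it cites it as Theorem~1.2 of Dubhashi--Panconesi and uses it as a black box. Your argument is the standard Chernoff--Cram\'er derivation of Bernstein's inequality (bound the centred moments by the variance using $|Y|\le 1$, sum the exponential series, bound $e^\lambda-1-\lambda$ by a geometric series with ratio $\lambda/3$, and optimise $\lambda$), and the constants do come out to exactly $2mq(1-q)+2\theta/3$ as you claim, so the proposal would serve as a self-contained replacement for the citation.
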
	
	
	\subsection{Symmetric Boolean functions}
	Let $n$ be a growing integer parameter which will always be the number of input variables. We use $\sB_n$ to denote the set of all symmetric Boolean functions on $n$ variables. Note that each symmetric Boolean function $f:\{0,1\}^n\rightarrow \{0,1\}$ is uniquely specified by a string $\spec f:[0,n]\rightarrow \{0,1\}$, which we call the \emph{Spectrum} of $f$, in the sense that for any $a\in \{0,1\}^n$, we have
	\[
	f(a) = \spec f(|a|).
	\]
	
	Given a $f\in \sB_n$, we define the \emph{period of $f$}, denoted $\per(f),$ to be the smallest positive integer $b$ such that $\spec f(i) = \spec f(i+b)$ for all $i\in[0,n-b]$. We say $f$ is \emph{$k$-bounded} if $\spec f$ is constant on the interval $[k,n-k]$; let $B(f)$ denote the smallest $k$ such that $f$ is $k$-bounded.
	
	\subparagraph*{Standard decomposition of a symmetric Boolean function~\cite{Lu}.} Fix any $f\in \sB_n.$ Among all symmetric Boolean functions $f'\in \sB_n$ such that $\spec f'(i) = \spec f(i)$ for all $i\in[\lceil n/3\rceil + 1,\lfloor 2n/3\rfloor],$ we choose a function $g$ such that $\per(g)$ is as small as possible. We call $g$ the \emph{periodic part} of $f$. Define $h\in \sB_n$ by $h = f\oplus g.$ We call $h$ the \emph{bounded part} of $f$. 
	
	We will refer to the pair $(g,h)$ as a \emph{standard decomposition} of the function $f$. Note that we have $f = g\oplus h.$
	
	\begin{observation}
		\label{obs:decomp}
		Let $f\in \sB_n$ and let $(g,h)$ be a standard decomposition of $f$. Then, $\per(g)\leq \lfloor n/3\rfloor$ and $B(h)\leq \lceil n/3 \rceil.$
	\end{observation}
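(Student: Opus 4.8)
The plan is to derive both inequalities directly from the definition of the standard decomposition: the bound on $\per(g)$ by exhibiting a concrete low-period symmetric function that agrees with $f$ on the middle window and then invoking the minimality in the choice of $g$, and the bound on $B(h)$ by observing that $h$ must vanish on that same window and reconciling it with the interval appearing in the definition of $\lceil n/3\rceil$-boundedness.

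For $\per(g)\le\lfloor n/3\rfloor$: write $I=[\lceil n/3\rceil+1,\lfloor 2n/3\rfloor]$ for the window on which an eligible periodic part must copy $f$, and set $b:=|I|=\lfloor 2n/3\rfloor-\lceil n/3\rceil$. I would first record the elementary estimate $b\le\lfloor n/3\rfloor$ (from $\lceil n/3\rceil\ge n/3$, $\lfloor 2n/3\rfloor\le 2n/3$, and integrality; also $b\ge 1$ once $n$ exceeds a small constant, which is harmless since $n$ is a growing parameter). Then I would construct $g^\star\in\sB_n$ by letting $\spec g^\star$ be the unique $b$-periodic string on $[0,n]$ whose restriction to $I$ equals that of $\spec f$; this is well defined because $I$ is a block of exactly $b$ consecutive integers and a $b$-periodic string on $[0,n]$ is pinned down by any $b$ consecutive entries. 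Now $g^\star$ agrees with $f$ on $I$ and $\per(g^\star)\le b\le\lfloor n/3\rfloor$, so since $g$ is chosen to minimize the period over all symmetric functions agreeing with $f$ on $I$, we get $\per(g)\le\per(g^\star)\le\lfloor n/3\rfloor$.

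For $B(h)\le\lceil n/3\rceil$: the definition of the decomposition gives $\spec g(i)=\spec f(i)$ for every $i\in I$, so $\spec h(i)=\spec f(i)\oplus\spec g(i)=0$ for all $i\in I$; in particular $\spec h$ is constant on $I$. Using the identity $n-\lceil n/3\rceil=\lfloor 2n/3\rfloor$, this says $\spec h$ is constant on $[\lceil n/3\rceil+1,\,n-\lceil n/3\rceil]$, which is the interval $[\lceil n/3\rceil,\,n-\lceil n/3\rceil]$ governing $\lceil n/3\rceil$-boundedness up to its single left endpoint; I would close this gap by a short case analysis on $n\bmod 3$ (equivalently, by the mild sharpening of the window in the definition that makes $\spec h$ vanish on the full symmetric interval), and conclude $B(h)\le\lceil n/3\rceil$.

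The step I expect to be the main obstacle is precisely this last reconciliation of floors and ceilings: the window $I$ has to be narrow enough (width at most $\lfloor n/3\rfloor$) for the period bound yet wide enough and correctly placed for $\spec h$ to be constant on all of $[\lceil n/3\rceil,\,n-\lceil n/3\rceil]$, and these requirements are in tension by an additive constant. Handling it cleanly will require either a careful residue-by-residue computation or a Fine--Wilf-type argument showing that agreement with a periodic pattern on a slightly shorter window already forces the needed structure; everything else in the argument is routine.
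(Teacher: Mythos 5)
The paper states this observation without proof, so there is no ``paper's proof'' to compare against; I will instead assess correctness.

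Your argument for $\per(g)\le\lfloor n/3\rfloor$ is sound. Write $I=[\lceil n/3\rceil+1,\lfloor 2n/3\rfloor]$ and $b=|I|=\lfloor 2n/3\rfloor-\lceil n/3\rceil\le\lfloor n/3\rfloor$. For $b\ge 1$, the unique $b$-periodic string on $[0,n]$ whose restriction to $I$ matches $\spec f$ gives an eligible candidate $g^\star$ with $\per(g^\star)\le b$, and minimality yields $\per(g)\le\lfloor n/3\rfloor$. (If $I$ is empty, which happens only for $n\le 4$, take $g^\star$ constant so $\per(g^\star)=1\le\lfloor n/3\rfloor$ for $n\ge 3$.) This is the right proof.

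The gap you flag in the $B(h)$ half is real, but your proposed fix will not work: no case analysis on $n\bmod 3$ (nor any Fine--Wilf argument) can close it, because the bound as literally stated is off by one. The definition only forces $\spec h\equiv 0$ on $I=[\lceil n/3\rceil+1,\lfloor 2n/3\rfloor]$, whereas $\lceil n/3\rceil$-boundedness demands constancy on $[\lceil n/3\rceil,\,n-\lceil n/3\rceil]=[\lceil n/3\rceil,\lfloor 2n/3\rfloor]$; the single leftmost value $\spec h(\lceil n/3\rceil)$ is genuinely unconstrained, and in fact cannot be constrained, because $\spec g(\lceil n/3\rceil)$ is determined by periodicity from $I$, while $\spec f(\lceil n/3\rceil)$ is free. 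A concrete witness: for $n=6$ and $f=\EThr_6^4$, so $\spec f=(0,0,0,0,1,0,0)$, we have $I=[3,4]$, the unique minimal-period choice is $\spec g=(1,0,1,0,1,0,1)$ with $\per(g)=2$, and $\spec h=(1,0,1,0,0,0,1)$; then $\spec h$ is not constant on $[2,4]$, so $B(h)=3>2=\lceil 6/3\rceil$. (The same phenomenon occurs for $n=9$, $f=\EThr_9^6$, where $B(h)=4>3$.) What your argument does establish is $B(h)\le\lceil n/3\rceil+1$, and that is all the paper ever uses of this observation --- in Lemma~\ref{lem:preg-h-char+} only the asymptotic bound $B(h)=O(n)$ matters, and the single additive unit is absorbed into the $O(\cdot)$ everywhere. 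So you should not promise a residue-by-residue computation that cannot succeed; you should instead state the corrected bound $B(h)\le\lceil n/3\rceil+1$ (and note that it suffices for every downstream application), or alternatively observe that the statement becomes exact if the window in the definition of the standard decomposition is taken to start at $\lceil n/3\rceil$ rather than $\lceil n/3\rceil+1$, at the cost of loosening the other bound to $\per(g)\le\lfloor n/3\rfloor+1$.
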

	
	\subparagraph*{Some symmetric Boolean functions.} 
	Fix some positive $n\in \mathbb{N}$. The \emph{Majority} function $\Maj_n$ on $n$ Boolean variables accepts exactly the inputs of Hamming weight greater than $n/2.$ For $t\in [0,n]$, the \emph{Threshold} function $\Thr^t_n$ accepts exactly the inputs of Hamming weight at least~$t$; and similarly,  the \emph{Exact Threshold} function $\EThr^t_n$ accepts exactly the inputs of Hamming weight exactly $t$. Finally, for $b\in [2,n]$ and $i\in [0,b-1]$, the function $\MOD^{b,i}_n$ accepts exactly those inputs $a$ such that $|a| \equiv i\pmod{b}.$ In the special case that $i=0$, we also use $\MOD^b_n.$

	\subsection{Probabilistic polynomials}	

	\begin{definition}[Probabilistic polynomial and Probabilistic degree]
		A \emph{probabilistic polynomial} is a random  polynomial $\bm{P}$ (with some distribution having finite support) over $\F[x_1,\ldots,x_n].$ We say that the degree of $\bm{P}$, denoted $\deg(\bm{P})$, is at most $d$ if the probability distribution defining~$\bm{P}$ is supported on polynomials of degree at most $d$.
		 
		Given a Boolean function $f:\{0,1\}^n\rightarrow \{0,1\}$ and an $\varepsilon>0,$ an \emph{$\varepsilon$-error probabilistic polynomial} for $f$ is a probabilistic polynomial $\bm{P}$ such that for each $a\in \{0,1\}^n$,
		\[
		\prob{\bm{P}}{\bm{P}(a) \neq f(a)} \leq \varepsilon.
		\]
		We define the \emph{$\varepsilon$-error probabilistic degree} of $f$, denoted $\pdeg^{\F}_\varepsilon(f)$, to be the least $d$ such that $f$ has an $\varepsilon$-error probabilistic polynomial of degree at most $d$.
		
		When the field $\F$ is clear from context, we use $\pdeg_\varepsilon(f)$ instead of $\pdeg^\F_\varepsilon(f).$
	\end{definition}	
	
	\begin{fact}
		\label{fac:pdeg}
		We have the following simple facts about probabilistic degrees of Boolean functions. Let $\F$ be any field.
		\begin{enumerate}
			\item (Error reduction~\cite{HS}) For any $\delta < \varepsilon \leq 1/3$ and any Boolean function $f$, if $\bm{P}$ is an $\varepsilon$-error probabilistic polynomial for $f$, then $\bm{Q} = M(\bm{P}_1,\ldots,\bm{P}_\ell)$ is a $\delta$-error probabilistic polynomial for $f$ where $\ell = O(\log(1/\delta)/\log(1/\varepsilon)),$ $M$ is the exact multilinear polynomial for $\Maj_\ell,$ and $\bm{P}_1,\ldots,\bm{P}_\ell$ are independent copies of $\bm{P}.$ In particular, we have $\pdeg_{\delta}^\F(f) \leq \pdeg_{\varepsilon}^\F(f)\cdot O(\log(1/\delta)/\log(1/\varepsilon)).$
			\item (Composition) For any Boolean function $f$ on $k$ variables and any Boolean functions $g_1,\ldots,g_k$ on a common set of $m$ variables,  let $h$ denote the natural composed function $f(g_1,\ldots,g_k)$ on $m$ variables. Then, for any $\varepsilon, \delta > 0,$ we have $\pdeg_{\varepsilon + k\delta}^\F(h) \leq \pdeg_\varepsilon^\F(f)\cdot \max_{i\in [k]} \pdeg_\delta^\F(g_i).$ 
			\item (Sum) Assume that $f,g_1,\ldots,g_k$ are all Boolean functions on a common set of $m$ variables such that the functions $g_1,\ldots,g_k$ are mutually exclusive and $f = \sum_{i\in [k]}g_i$. Then, for any $\delta > 0,$ we have $\pdeg_{k\delta}^\F(f) \leq \max_{i\in [k]} \pdeg_\delta^\F(g_i).$
		\end{enumerate}
The first item above is not entirely obvious, as the polynomial $\bm{P}$ is not necessarily Boolean-valued at points when $\bm{P}(a)\neq f(a)$. Hence, it is not clear that composing with a polynomial that computes the Boolean Majority function achieves error-reduction. The second and third items above are trivial.
	\end{fact}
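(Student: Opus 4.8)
The plan is to handle items (3) and (2) first, where nothing surprising happens, and then concentrate on item (1), which is where the one subtle point lives.

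\emph{Items (3) and (2).} For item (3), choose mutually independent $\delta$-error probabilistic polynomials $\bm{S}_1,\ldots,\bm{S}_k$ for $g_1,\ldots,g_k$ with $\deg(\bm{S}_i)\leq\pdeg^\F_\delta(g_i)$, and set $\bm{P}=\sum_{i\in[k]}\bm{S}_i$, so that $\deg(\bm{P})\leq\max_i\pdeg^\F_\delta(g_i)$. For any $a\in\{0,1\}^n$, a union bound shows that with probability at least $1-k\delta$ all the events $\bm{S}_i(a)=g_i(a)$ hold simultaneously, and on this event $\bm{P}(a)=\sum_i g_i(a)=f(a)$, where we use that the $g_i$ are mutually exclusive so that the right-hand side is genuinely $\{0,1\}$-valued and equals $f(a)$. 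Item (2) is the same idea one level up: with $\bm{R}$ an optimal $\varepsilon$-error probabilistic polynomial for $f$ and $\bm{S}_1,\ldots,\bm{S}_k$ as above, all mutually independent, put $\bm{T}=\bm{R}(\bm{S}_1,\ldots,\bm{S}_k)$, so $\deg(\bm{T})\leq\deg(\bm{R})\cdot\max_i\deg(\bm{S}_i)$; for a fixed $a$, condition on the probability-$\geq1-k\delta$ event that $\bm{S}_i(a)=g_i(a)$ for all $i$, note that on this event the tuple fed to $\bm{R}$ equals the honest Boolean point $(g_1(a),\ldots,g_k(a))$, on which $\bm{R}$ (being independent of the $\bm{S}_i$) errs with probability at most $\varepsilon$, and take a union bound to get total error at most $\varepsilon+k\delta$. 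The one thing worth flagging is that conditioning on the inner approximations being \emph{correct} is precisely what makes their outputs Boolean, which is what legitimizes invoking the guarantee on $\bm{R}$ at a Boolean point — exactly the issue that must be confronted head-on in item (1).

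\emph{Item (1).} The naive attempt is to take $\ell$ independent copies $\bm{P}_1,\ldots,\bm{P}_\ell$ of $\bm{P}$, set $\bm{Q}=M(\bm{P}_1,\ldots,\bm{P}_\ell)$, and hope that when most of the copies are correct at $a$ the polynomial $M$ ``reads off'' the majority and returns $f(a)$. The obstruction, as the text remarks, is that when $\bm{P}_i(a)\neq f(a)$ the value $\bm{P}_i(a)$ may be an arbitrary element of $\F$, not merely $1-f(a)$, and $M$ evaluated at a tuple with non-Boolean coordinates has no reason to mimic Majority. The key fact that repairs this is purely algebraic: for any $c\in\{0,1\}$ and any $y_1,\ldots,y_\ell\in\F$, if strictly more than $\ell/2$ of the $y_i$ equal $c$, then $M(y_1,\ldots,y_\ell)=c$ no matter what the other coordinates are. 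To see it, fix those (more than $\ell/2$ many) coordinates to the value $c$; this turns $M$ into a multilinear polynomial in the remaining variables which, evaluated at \emph{any} Boolean assignment to them, computes $\Maj_\ell$ of a string having a strict majority of $c$'s — that is, the constant $c$; a multilinear polynomial that equals the constant $c$ on the entire Boolean cube is the constant polynomial $c$, and hence outputs $c$ everywhere on $\F^\ell$, in particular at our tuple. (This is the precise property of $M$ that is being used — Majority is robust to planting a strict majority of a fixed Boolean value, even with the rest garbage.)

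\emph{Assembling item (1).} Granting the algebraic fact, take $\ell$ odd, let $\bm{P}_1,\ldots,\bm{P}_\ell$ be independent copies of an optimal $\varepsilon$-error probabilistic polynomial $\bm{P}$ for $f$, and set $\bm{Q}=M(\bm{P}_1,\ldots,\bm{P}_\ell)$. Since $M$ is multilinear of degree at most $\ell$, we get $\deg(\bm{Q})\leq\ell\cdot\deg(\bm{P})=\ell\cdot\pdeg^\F_\varepsilon(f)$. For the error, fix $a\in\{0,1\}^n$: by the algebraic fact applied with $c=f(a)$, the event $\bm{Q}(a)\neq f(a)$ forces that \emph{not} more than $\ell/2$ of the $\bm{P}_i(a)$ equal $f(a)$, i.e.\ that at least $(\ell+1)/2$ of the mutually independent events ``$\bm{P}_i(a)\neq f(a)$'' occur, each of probability at most $\varepsilon\leq1/3<1/2$. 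A standard binomial/Chernoff tail bound makes this probability at most $\delta$ as soon as $\ell=\Theta\!\big(\log(1/\delta)/\log(1/\varepsilon)\big)$, splitting into the regime where $\varepsilon$ is a small constant (where the crude estimate $2^\ell\varepsilon^{(\ell+1)/2}\leq(4\varepsilon)^{\Omega(\ell)}$ already suffices) and the regime $\varepsilon$ close to $1/3$ (where $\log(1/\varepsilon)=\Theta(1)$ and Chernoff against mean $\leq\ell/3$ gives $2^{-\Omega(\ell)}$). Plugging this choice of $\ell$ into the degree bound gives $\pdeg^\F_\delta(f)\leq\deg(\bm{Q})\leq O\!\big(\log(1/\delta)/\log(1/\varepsilon)\big)\cdot\pdeg^\F_\varepsilon(f)$, as required. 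The only genuine obstacle in the whole fact is the algebraic observation about $M$ in the previous paragraph; once it is isolated, items (1)--(3) reduce to union bounds and an off-the-shelf concentration inequality.
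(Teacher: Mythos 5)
Your proof is correct, and the paper itself offers no proof of this fact --- it merely cites~\cite{HS} for item (1) and declares items (2) and (3) trivial, while flagging the exact subtlety you address. Your isolation of the algebraic observation that the multilinear $\Maj_\ell$ polynomial $M$ outputs $c\in\{0,1\}$ whenever a strict majority of its inputs equal $c$ (the remaining inputs being arbitrary field elements) --- proved by fixing the majority coordinates, noting the restriction of $M$ agrees with the constant $c$ on the full Boolean subcube, and invoking uniqueness of multilinear representations --- is precisely the ingredient the paper's remark signals is needed, and it is the standard route taken in~\cite{HS}. The split into the two regimes $\varepsilon\leq 1/16$ (crude union bound over error subsets) and $\varepsilon\in(1/16,1/3]$ (Chernoff with $\log(1/\varepsilon)=\Theta(1)$) correctly delivers $\ell=O(\log(1/\delta)/\log(1/\varepsilon))$, and the arguments for items (2) and (3) are the natural ones.
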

	
Building on work of Alman and Williams~\cite{AW} and Lu~\cite{Lu}, Tripathi, Venkitesh and the author~\cite{STV}  gave upper bounds on the probabilistic degree of any symmetric function. We recall below the statement in the case of fixed positive characteristic.
	
	\begin{theorem}[Known upper bounds on probabilistic degree of symmetric functions~\cite{STV}]
		\label{thm:ubd-STV}
		Let $\F$ be a field of constant characteristic $p>0$ and $n\in \mathbb{N}$ be a growing parameter. Let $f\in \sB_n$ be arbitrary and let $(g,h)$ be a standard decomposition of $f$. Then we have the following for any $\varepsilon > 0.$
		\begin{enumerate}
			\item If $\per(g) = 1,$ then $g$ is a constant and hence $\pdeg_\varepsilon(g) = 0.$ 
			
			If $\per(g)$ is a power of $p$, then $g$ can be \emph{exactly} represented\footnote{While this is not part of the formal theorem statement from~\cite{STV}, it follows readily from the proof.} as a polynomial of degree at most $\per(g)$, and hence $\pdeg^\F_\varepsilon(g) \leq \per(g),$ 
			\item $\pdeg_\varepsilon(h) = O(\sqrt{B(h)\log(1/\varepsilon)} + \log(1/\varepsilon))$ if $B(h) \geq 1$ and $0$ otherwise, and
			\item $\pdeg_\varepsilon(f)=
			\left\{
			\begin{array}{ll}
			O(\sqrt{n\log(1/\varepsilon)}) & \text{if $\per(g) > 1$ and not a power of $p$},\\
			O(\min\{\sqrt{n\log(1/\varepsilon)},\per(g)\}) & \text{if $\per(g)$ a power of $p$ and $B(h) =0$,}\\
			O(\min\{\sqrt{n\log(1/\varepsilon)},\per(g) +  & \text{otherwise.}\\
			\ \ \sqrt{B(h)\log(1/\varepsilon)} + \log(1/\varepsilon) \}) & 	
			\end{array}
			\right.$
		\end{enumerate}
	\end{theorem}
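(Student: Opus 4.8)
The plan is to establish the three parts largely independently; part~3 will then follow by combining parts~1 and~2 with a `universal' degree bound for arbitrary symmetric functions.

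\emph{Part 1.} If $\per(g)=1$, then $\spec g(i)=\spec g(i+1)$ for every $i$, so $\spec g$ is constant and $g$ is a constant function. Suppose instead $q:=\per(g)$ is a power of $p$, say $q=p^a$. The key point is that, over $\F$, the $j$-th base-$p$ digit of the Hamming weight $|x|$ is computed \emph{exactly} by a polynomial of degree $p^j$: by Lucas's theorem $\binom{|x|}{p^j}\bmod p$ equals this digit, and $\binom{|x|}{p^j}$ is just the elementary symmetric polynomial of degree $p^j$. Since $\spec g$ has period $q$ on $[0,n]$ (and $q\le\lfloor n/3\rfloor$ by Observation~\ref{obs:decomp}), $g$ depends only on $|x|\bmod q$, hence only on the digits $d_0(x),\dots,d_{a-1}(x)$ of $|x|\bmod q$. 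Interpolating over $\F_p$, write $g=\Phi(d_0,\dots,d_{a-1})$ with $\Phi$ of individual degree at most $p-1$ in each variable; substituting the degree-$p^j$ representations of the $d_j$ and multilinearizing gives a polynomial for $g$ of degree at most $\sum_{j=0}^{a-1}(p-1)p^j=q-1<q$. Hence $\pdeg_\varepsilon(g)\le\deg(g)\le\per(g)$ with zero error.

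\emph{Part 2.} If $B(h)=0$, then $h$ is constant. Otherwise set $k:=B(h)\ge1$; negating $h$ if needed, assume $\spec h\equiv0$ on $[k,n-k]$. Split $h=h^{\mathrm{lo}}+h^{\mathrm{hi}}$, where $h^{\mathrm{lo}},h^{\mathrm{hi}}\in\sB_n$ have spectra supported on $[0,k)$ and on $(n-k,n]$ respectively; these are mutually exclusive (as $k\le\lceil n/3\rceil$), so by Fact~\ref{fac:pdeg}(3) it is enough to bound the probabilistic degree of each. Negating all the input variables maps $h^{\mathrm{hi}}$ to a function supported on $[0,k)$ and preserves probabilistic degree, so part~2 reduces to the following claim: \emph{if $F\in\sB_n$ is supported on Hamming weights $[0,k)$ with $k\ge1$, then $\pdeg_\varepsilon(F)=O(\sqrt{k\log(1/\varepsilon)}+\log(1/\varepsilon))$}. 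This claim is the technical core, and I would prove it following Alman--Williams~\cite{AW} and Lu~\cite{Lu}: build the probabilistic polynomial in two stages --- (i) a random bucketing of $[n]$, each bucket contributing an $\OR$-polynomial of degree $O(\log(1/\varepsilon))$, which with error $\varepsilon$ either certifies ``$|x|\ge k$'' (and $F$ is output as $0$) or passes to the task of pinning down the weight inside a window; and (ii) a polynomial-interpolation step, driven by Bernstein's inequality (Lemma~\ref{lem:bernstein}) to force concentration of the relevant (sub)weights into a window of width $O(\sqrt{k\log(1/\varepsilon)})$ rather than the naive $k$, recovering the weight at degree $O(\sqrt{k\log(1/\varepsilon)})$. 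This is precisely the step I expect to be the main obstacle.

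\emph{Part 3.} The universal bound $\pdeg_\varepsilon(F')=O(\sqrt{n\log(1/\varepsilon)})$ for an arbitrary $F'\in\sB_n$ is the claim of part~2 with $k=n+1$ (every symmetric function is supported on $[0,n]$), hence available once part~2 holds. Now for the standard decomposition $f=g\oplus h$: the two-input $\oplus$ is computed exactly by a polynomial of degree at most $2$ (degree $1$ in characteristic~$2$), so Fact~\ref{fac:pdeg}(2) gives $\pdeg_\varepsilon(f)\le2\cdot\max\{\pdeg_{\varepsilon/2}(g),\pdeg_{\varepsilon/2}(h)\}$. Combining this with parts~1 and~2 and the universal bound applied directly to $f$ yields the three cases of item~3: $O(\sqrt{n\log(1/\varepsilon)})$ when $\per(g)$ is not a power of $p$ (we have no cheap representation of $g$ and fall back on the universal bound); $O(\min\{\sqrt{n\log(1/\varepsilon)},\per(g)\})$ when $\per(g)$ is a power of $p$ and $B(h)=0$ (so $f$ equals $g$ up to a constant and part~1 applies); and $O(\min\{\sqrt{n\log(1/\varepsilon)},\per(g)+\sqrt{B(h)\log(1/\varepsilon)}+\log(1/\varepsilon)\})$ otherwise. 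Apart from the window construction in part~2, all of this is routine bookkeeping.
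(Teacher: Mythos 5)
The paper does not prove Theorem~\ref{thm:ubd-STV}: it is imported wholesale from \cite{STV}, and the only original remark in the paper is the footnote that when $\per(g)$ is a power of $p$, the function $g$ admits an \emph{exact} degree-$\per(g)$ representation ``following readily from'' the proof in \cite{STV}. Your Part~1 is a correct, self-contained argument for exactly that footnote claim: periodicity forces $g$ to depend only on $|x|\bmod q$; by Lucas's theorem the $j$-th base-$p$ digit of $|x|$ is the elementary symmetric polynomial of degree $p^j$ reduced mod $p$; interpolating the induced map on digits over $\F_p$ and multilinearizing gives degree at most $\sum_{j<a}(p-1)p^j = q-1$. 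That is a genuine (if small) addition.

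For Parts~2 and~3 you are, like the paper, citing \cite{STV,AW,Lu} for the technical core (the $O(\sqrt{k\log(1/\varepsilon)}+\log(1/\varepsilon))$ bound for a symmetric function supported on $[0,k)$), and the structural reductions you perform around that citation are sound: the split $h=h^{\mathrm{lo}}+h^{\mathrm{hi}}$ is mutually exclusive because $B(h)\le\lceil n/3\rceil$, input negation preserves probabilistic degree and maps $h^{\mathrm{hi}}$ into the supported-on-$[0,k)$ form, and composing with the exact degree-$\le 2$ polynomial for two-input XOR via Fact~\ref{fac:pdeg}(2) gives the stated bound for $f$ up to constants. The one step you should not wave through is obtaining the universal $O(\sqrt{n\log(1/\varepsilon)})$ bound by ``setting $k=n+1$'' in the Part-2 claim: the hashing-based constructions behind that claim are stated (and proved) for $k$-bounded functions, not literally for $k>n$. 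It is cleaner either to cite the universal bound directly (it is item~3 of the very theorem in \cite{STV}, and also follows from the classical Smolensky/Szegedy approximation of $\Maj_n$ in fixed positive characteristic), or to observe that by input negation one may take $k\le\lceil n/2\rceil$, which is within the range where the \cite{STV} machinery does apply and still yields $O(\sqrt{n\log(1/\varepsilon)}+\log(1/\varepsilon))=O(\sqrt{n\log(1/\varepsilon)})$ for $\varepsilon\ge 2^{-n}$ (and the trivial degree-$n$ bound otherwise).
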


	\subsection{A string lemma}
	
	Given a function $w:I\rightarrow \{0,1\}$ where $I\subseteq \mathbb{N}$ is an interval, we think of $w$ as a string from the set $\{0,1\}^{|I|}$ in the natural way. For an interval $J\subseteq I,$ we denote by $w|_{J}$ the substring of $w$ obtained by restriction to $J$.

	The following simple lemma can be found, e.g. as a special case of ~\cite[Theorem 3.1]{word-survey}. For completeness, we give a short proof in Appendix~\ref{appsec:string-lemma}.

        \begin{restatable}{lemma}
	{lemstring}
		\label{lem:string}
		Let $w\in \{0,1\}^+$ be any non-empty string\footnote{Recall that, for any alphabet $\Sigma,$ the notation $\Sigma^+$ denotes the set of non-empty strings over this alphabet.} and $u,v\in \{0,1\}^+$ such that $w = uv = vu$. Then there exists a string $z\in \{0,1\}^+$ such that $w$ is a power of $z$ (i.e. $w= z^k$ for some $k\geq 2$).
      \end{restatable}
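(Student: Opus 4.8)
The plan is to deduce the lemma from the classical \emph{commutation lemma} for words, which I would state in the following form and prove by strong induction on $|u|+|v|$: if $u,v\in\{0,1\}^+$ satisfy $uv=vu$, then there exist $z\in\{0,1\}^+$ and integers $a,b\ge 1$ with $u=z^a$ and $v=z^b$. Granting this, the lemma follows at once, since $w=uv=z^a z^b=z^{a+b}$ and $a+b\ge 2$, so the desired power is this $z$ with $k=a+b$.

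To carry out the induction, I would first assume without loss of generality that $|u|\ge|v|$. The base case is $|u|=|v|$: here $uv$ and $vu$ are equal strings, so comparing their prefixes of length $|u|$ forces $u=v$, and we may take $z=u$ and $a=b=1$. (In particular this handles the smallest case $|u|=|v|=1$.) For the inductive step, suppose $|u|>|v|$. Comparing the length-$|v|$ prefixes of the equal strings $uv$ and $vu$ shows that $v$ is a prefix of $u$, so I can write $u=vu'$ with $u'\in\{0,1\}^+$ (non-empty, since $|u|>|v|$). Substituting into $uv=vu$ gives $vu'v=vvu'$, and cancelling the common prefix $v$ yields $u'v=vu'$. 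Since $|u'|+|v|=|u|<|u|+|v|$, the induction hypothesis applies to the pair $(u',v)$, producing $z\in\{0,1\}^+$ and $i,j\ge 1$ with $u'=z^i$ and $v=z^j$; then $u=vu'=z^{i+j}$ and $v=z^j$, which completes the step with $a=i+j$ and $b=j$.

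I do not expect any real obstacle. The one place needing a little care — and the only place where the full hypothesis $w=uv=vu$ is used rather than just $w=uv$ — is the claim that $v$ is a prefix of $u$ when $|u|>|v|$: this holds because the two expressions for $w$ are literally the same string, hence their length-$|v|$ prefixes agree, and that of $vu$ is exactly $v$ while that of $uv$ is a prefix of $u$. Everything else is routine bookkeeping with concatenation and cancellation of words.
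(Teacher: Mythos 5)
Your proof is correct and is essentially the paper's own argument: the paper likewise proves the stronger commutation statement (that $u$ and $v$ are powers of a common $z$) by induction, peeling the shorter word off the longer and cancelling a common prefix. The only cosmetic difference is that the paper takes $|u|\le|v|$ and writes $v=uv'$, whereas you take $|u|\ge|v|$ and write $u=vu'$.
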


	\begin{corollary}
		\label{cor:string}
		Let $g\in \sB_n$ be arbitrary with $\per(g) = b > 1.$ Then for all $i,j\in [0,n-b+1]$ such that $i \not\equiv j \pmod{b}$, we have $\spec g|_{[i,i+b-1]} \neq \spec g|_{[j,j+b-1]}.$
	\end{corollary}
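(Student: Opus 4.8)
The plan is to argue by contradiction: suppose there exist $i,j \in [0,n-b+1]$ with $i \not\equiv j \pmod{b}$ but $\spec g|_{[i,i+b-1]} = \spec g|_{[j,j+b-1]}$. In particular $i \neq j$, so $n-b+1 \geq 1$ and the length-$b$ prefix $t := \spec g|_{[0,b-1]}$ of $\spec g$ is well defined. The key idea is to realize every length-$b$ window of $\spec g$ as a cyclic rotation of this single string $t$, reduce the assumed equality to a nontrivial rotational symmetry of $t$, and then invoke Lemma~\ref{lem:string} to contradict the minimality of $\per(g)=b$.

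Concretely, I would proceed in three steps. First, since $\per(g)=b$ gives $\spec g(\ell)=\spec g(\ell+b)$ for all $\ell\in[0,n-b]$, iterating this identity --- each application keeping the relevant indices inside $[0,n-b]$ --- yields $\spec g(k)=\spec g(k\bmod b)=t[\,k\bmod b\,]$ for \emph{every} $k\in[0,n]$; in other words, $\spec g$ is the length-$(n+1)$ prefix of the infinite periodic word $t^{\infty}$. Second, it follows that for any $m\in[0,n-b+1]$ the window $\spec g|_{[m,m+b-1]}$ equals $\bigl(t[m\bmod b],\,t[(m{+}1)\bmod b],\,\ldots,\,t[(m{+}b{-}1)\bmod b]\bigr)$, i.e.\ the cyclic left-rotation of $t$ by $m\bmod b$ positions; applying this to $m=i$ and $m=j$, the assumed equality says precisely that $t$ is fixed by cyclic rotation through $c:=(j-i)\bmod b$ positions, and $c\in[1,b-1]$ because $i\not\equiv j\pmod{b}$. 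Third, writing $t=uv$ with $|u|=c$ and $|v|=b-c$ (both nonempty), invariance of $t$ under rotation by $c$ is exactly the equation $uv=vu$, so Lemma~\ref{lem:string} gives $t=z^{r}$ for some nonempty $z$ and some $r\geq 2$; but then, using $|z|\mid b$ together with the first step, $\spec g(k)=z[\,k\bmod|z|\,]$ for all $k\in[0,n]$, so $\spec g$ has period $|z|=b/r<b$, contradicting the minimality of $\per(g)=b$.

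I do not expect a genuine obstacle here: the only place requiring real care is the index bookkeeping in the first step (and its mirror image at the end of the third step), namely checking that the repeated applications of the period-$b$ identity never leave the legal range $[0,n-b]$, which is exactly where the hypothesis $i,j\leq n-b+1$ is used. Everything else is routine manipulation of periodic and conjugate words; indeed, this corollary is essentially the classical fact that a finite word of minimal period $b$ has $b$ pairwise distinct conjugates.
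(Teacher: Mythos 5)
Your proof is correct and follows essentially the same route as the paper: both reduce the assumed window-equality to the conjugacy equation $uv = vu$ for two non-empty substrings whose concatenation is a length-$b$ block of $\spec g$, then invoke Lemma~\ref{lem:string} to produce a shorter period, contradicting $\per(g) = b$. The paper's version argues locally with the two windows at $i$ and $j$ after a ``without loss of generality $i < j < i+b$'' step; your rotation-of-a-fixed-prefix framing absorbs that reduction automatically and also makes explicit why the resulting $t = z^r$ forces $\per(g) \leq |z| < b$, a step the paper leaves implicit.
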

	
	\begin{proof}
		Suppose $\spec g|_{[i,i+b-1]} = \spec g|_{[j,j+b-1]}$ for some $i \not\equiv j \pmod{b}.$ Assume without loss of generality that $i < j < i+b.$ Let $u = \spec g|_{[i,j-1]}, v = \spec g|_{[j,i+b-1]}, w = \spec g|_{[i+b,j+b-1]}$. Then $u=w$ and the assumption $uv=vw$ implies $uv=vu$. By Lemma~\ref{lem:string}, there exists a string $z$ such that $uv = z^k$ for $k\geq 2$ and therefore $\per(g) < b$. This contradicts our assumption on $b$.
	\end{proof}
	
	\subsection{Lucas's theorem}
	
	\begin{theorem}[Lucas's theorem]
	\label{thm:lucas}
	Let $A,B$ be any non-negative integers and $p$ any prime. Then
	\[
	\binom{A}{B}= \prod_{i \geq 0} \binom{A_i}{B_i}\pmod{p}
	\]
	where $A_i$ (resp. $B_i$) is the $(i+1)$th least significant digit of $A$ (resp. $B$) in base $p$.
	\end{theorem}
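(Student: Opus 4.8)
The plan is to prove this classical statement by working in the polynomial ring $\F_p[x]$ and extracting the congruence from two different ways of writing $(1+x)^A$. First I would record the elementary ``freshman's dream'' fact: for any prime $p$ we have $p \mid \binom{p}{j}$ for $1 \le j \le p-1$ (since $p$ divides the numerator $p!$ but none of the factors of $j!\,(p-j)!$), and therefore $(1+x)^p \equiv 1 + x^p$ in $\F_p[x]$. Iterating this identity gives $(1+x)^{p^i} \equiv 1 + x^{p^i} \pmod{p}$ for every $i \ge 0$, which is the only nontrivial input to the argument.

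Next, writing $A = \sum_{i \ge 0} A_i p^i$ with $0 \le A_i \le p-1$, I would compute in $\F_p[x]$:
\[
(1+x)^A = \prod_{i \ge 0} \left( (1+x)^{p^i} \right)^{A_i} = \prod_{i \ge 0} \left( 1 + x^{p^i} \right)^{A_i} = \prod_{i \ge 0} \sum_{c_i = 0}^{A_i} \binom{A_i}{c_i} x^{c_i p^i}.
\]
The coefficient of $x^B$ on the left-hand side is $\binom{A}{B} \bmod p$. On the right-hand side, a monomial $x^B$ arises precisely from tuples $(c_i)_{i \ge 0}$ with $0 \le c_i \le A_i$ and $\sum_{i} c_i p^i = B$; since each such $c_i$ satisfies $0 \le c_i \le A_i \le p-1$, uniqueness of the base-$p$ representation of $B$ forces $c_i = B_i$ for all $i$. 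Hence the coefficient of $x^B$ on the right equals $\prod_{i \ge 0} \binom{A_i}{B_i}$, and comparing the two expressions yields the theorem.

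The only point deserving a moment of care is the final coefficient-matching step: one must note that because $c_i < p$, the partial exponents $c_i p^i$ coming from distinct factors cannot interfere (no ``carries''), so matching the total exponent $B$ pins down each $c_i$ uniquely; and one must separately observe that if $B_i > A_i$ for some $i$, then no valid tuple exists, so the coefficient is $0$, which still agrees with $\prod_i \binom{A_i}{B_i}$ since $\binom{A_i}{B_i} = 0$ in that case. Apart from this bookkeeping the argument is a routine manipulation of finite products in $\F_p[x]$, and I do not anticipate any genuine obstacle.
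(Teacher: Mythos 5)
Your proof is correct and is the standard generating-function proof of Lucas's theorem. The paper states this theorem as a classical fact without supplying a proof of its own, so there is nothing in the paper to compare against; your argument (freshman's dream to get $(1+x)^{p^i} \equiv 1 + x^{p^i}$ in $\F_p[x]$, factor $(1+x)^A$ along the base-$p$ digits of $A$, and match the coefficient of $x^B$ using uniqueness of base-$p$ representations) is complete and handles the degenerate case $B_i > A_i$ correctly.
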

	
	The following is a standard application of Lucas's theorem, essentially observed by Lu~\cite{Lu} and Heged\H{u}s~\cite{Hegedus}, showing that Heged\H{u}s's lemma is tight.
	
	\begin{corollary}
	\label{cor:lucas}
	Fix any prime $p$ and positive integer $n$. Assume $i$ is a non-negative integer and $q$ a positive integer such that $i + q\leq n.$ Let $p^\ell$ be the largest power of $p$ dividing $q$. Then, there is a symmetric multilinear polynomial $Q\in \F_p[x_1,\ldots,x_n]$ of degree $p^\ell$ such that $Q$ vanishes at all points of $\{0,1\}^n_i$ but at no point of $\{0,1\}^n_{i+q}.$
	\end{corollary}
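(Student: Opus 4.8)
The plan is to write $Q$ explicitly as a short linear combination of elementary symmetric polynomials and then verify the two conditions using Vandermonde's convolution and Lucas's theorem. Let $e_j = e_j(x_1,\ldots,x_n)$ be the $j$-th elementary symmetric polynomial, and recall that $e_j(a) = \binom{|a|}{j}$ for every $a \in \{0,1\}^n$. I would set $Q := \sum_{j=0}^{p^\ell} \binom{-i}{p^\ell - j}\, e_j$, where $\binom{-i}{m}$ denotes the usual extension of binomial coefficients to a negative upper index, i.e. $\binom{-i}{m} = (-1)^m \binom{i+m-1}{m} \in \integer$, reduced mod $p$. By construction $Q$ is symmetric, multilinear, and lies in $\F_p[x_1,\ldots,x_n]$; its degree is exactly $p^\ell$, since the coefficient of $e_{p^\ell}$ is $\binom{-i}{0} = 1 \neq 0$ in $\F_p$ and $e_{p^\ell}$ is a genuine degree-$p^\ell$ polynomial in $x_1,\ldots,x_n$ (note $p^\ell \leq q \leq n$).

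The conceptual heart of the argument is the identity $Q(a) = \binom{|a|-i}{p^\ell}$, valid for all $a \in \{0,1\}^n$ as an equality of integers. Indeed $Q(a) = \sum_{j=0}^{p^\ell}\binom{|a|}{j}\binom{-i}{p^\ell-j}$, which is exactly Vandermonde's convolution $\binom{a'+b'}{m} = \sum_{j}\binom{a'}{j}\binom{b'}{m-j}$ with $a' = |a|$, $b' = -i$, $m = p^\ell$ (this identity holds for arbitrary integer upper arguments). This is the correct way to make sense of ``the shifted binomial coefficient $\binom{(\sum_t x_t) - i}{p^\ell}$'' over $\F_p$: one cannot reduce the rational polynomial $\binom{(\sum_t x_t) - i}{p^\ell}$ coefficient-wise, because its denominator $(p^\ell)!$ is divisible by $p$, but the combination above computes the same integer-valued function on the cube while having coefficients in $\integer$ (hence in $\F_p$).

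With this in hand the two conditions follow from Lucas's theorem (Theorem~\ref{thm:lucas}). For $a \in \{0,1\}^n_i$ we get $Q(a) = \binom{0}{p^\ell} = 0$, since $p^\ell \geq 1$. For $b \in \{0,1\}^n_{i+q}$ we get $Q(b) \equiv \binom{q}{p^\ell} \pmod p$; writing $q = p^\ell m$ with $\gcd(m,p) = 1$, the base-$p$ digits of $q$ are $0$ in positions $0,\ldots,\ell-1$, while $p^\ell$ has a single nonzero digit, a $1$ in position $\ell$, and the position-$\ell$ digit of $q$ equals $m \bmod p \neq 0$. Hence Lucas's theorem gives $\binom{q}{p^\ell} \equiv m \not\equiv 0 \pmod p$, so $Q$ does not vanish at $b$. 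This proves the corollary.

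I do not expect a genuine obstacle here; the only subtle point — and the one worth stating carefully in the write-up — is the issue flagged in the second paragraph: the degree-$p^\ell$ polynomial over $\F_p$ that ``ought to be'' the shifted binomial coefficient must be obtained through the $e_j$-expansion (equivalently, via Vandermonde's convolution, or via a finite-difference computation of $\binom{|a|-i}{p^\ell}$ as a function of $|a|$), not by reducing the rational polynomial whose denominator vanishes mod $p$. Everything else is a routine base-$p$ digit computation with Lucas's theorem.
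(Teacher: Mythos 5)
Your proof is correct, but it takes a genuinely different route from the paper's. The paper's polynomial is simply $e_{p^\ell}(x) - a_\ell$, where $a_\ell$ is the $(\ell+1)$th base-$p$ digit of $i$; on an input of weight $w$ it evaluates to $\binom{w}{p^\ell} - a_\ell$, which by Lucas is (the $\ell$th digit of $w$) $- a_\ell \pmod p$. The work then lies in a small digit calculation: the $\ell$th digit of $i+q$ differs from that of $i$ by exactly $s_0 \not\equiv 0$, since $q$ contributes nothing below position $\ell$. Your polynomial $\sum_{j=0}^{p^\ell}\binom{-i}{p^\ell-j}e_j$ is a different function on the cube (it evaluates to $\binom{w-i}{p^\ell}$, not $\binom{w}{p^\ell} - a_\ell$; already at $i=1$, $p=2$, $\ell=1$, $w=2$ the two disagree), obtained by first subtracting $i$ inside the binomial via Chu--Vandermonde. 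The trade-off: the paper's construction is a single elementary symmetric polynomial minus a constant, at the cost of tracking a digit of $i$ versus a digit of $i+q$; yours is a slightly longer linear combination of $e_0,\ldots,e_{p^\ell}$, but the verification collapses to one application of Lucas to $\binom{q}{p^\ell}$ with no bookkeeping about $i$. Your remark about why one cannot simply reduce $\binom{(\sum_t x_t)-i}{p^\ell}$ coefficient-wise is a worthwhile caution, and the route through the $e_j$-expansion handles it cleanly.
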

	
	\begin{proof}
	Assume $q = p^\ell s$ where $s$ is not divisible by $p$. Let $a_\ell, b_\ell \in \{0,\ldots,p-1\}$ be the $(\ell+1)$th least significant digit of $i$ and $i+q$ respectively in base $p$. Note that $b_\ell = a_\ell + s_0\pmod{p}$ where $s_0$ is the least significant digit of $s$ in base $p$ ($s_0$ is non-zero as $s$ is not divisible by $p$).
	
	Define the polynomial 
	\[
	Q(x_1,\ldots,x_n) = \left(\sum_{S\subseteq [n]: |S| = p^{\ell}} \prod_{i\in S}x_i \right)- a_\ell,
	\]
	which we consider an element of $\F_p[x_1,\ldots,x_n].$ Note that at any input $c\in \{0,1\}^n$ of Hamming weight $w$, we have
	\[
	Q(c) = \binom{w}{p^\ell} - a_\ell
	\]
	where the right hand side is interpreted modulo $p$. Lucas's theorem then easily implies that $Q(c) = 0$ if $w =i$ and $s_0$ if $w = i+q.$
	\end{proof}

\section{The Main Lemma}

	In this section, we prove the main lemma, which is a robust version of Lemma~\ref{lem:Hegedus}. 

	\begin{lemma}[A Robust Version of Heged\H{u}s's Lemma]
	\label{lem:main}
	Assume that $\F$ is a field of characteristic~$p$. Let $n$ be a growing parameter and assume  we have positive integer parameters $k,q$ such that $100q < k < n-100q$ and $q$ is a power of $p$. Define $\alpha = \min\{k/n,1-(k/n)\}$ and $\delta = q/n.$ Assume $P\in \F[x_1,\ldots,x_n]$ is a polynomial such that for some $K \in \{k+q,k-q\}$,
	\begin{subequations}
	\label{eq:mainlem0}
	\begin{align}
	\prob{\bm{a}\sim \{0,1\}^n_{k}}{P(\bm{a}) \neq 0} &\leq \min\{e^{-100\delta^2n/\alpha},1/1000\}	\label{eq:mainlem0a}\\
	\prob{\bm{a}\sim \{0,1\}^n_{K}}{P(\bm{a}) \neq 0} &\geq e^{-\delta^2n/100\alpha}.	\label{eq:mainlem0b} 
	\end{align}
	\end{subequations}
	Then, $\deg(P) = \Omega(q),$ where the $\Omega(\cdot)$ hides an absolute constant. 
	\end{lemma}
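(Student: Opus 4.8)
The plan is to assume $\deg(P) < q/2$ (otherwise $\deg(P) = \Omega(q)$ already) and derive a contradiction with~\eqref{eq:mainlem0b}. First I would reduce to the case $K = k+q$: the coordinatewise complementation $x_i \mapsto 1-x_i$ is a degree-preserving operation that maps $\{0,1\}^n_j$ to $\{0,1\}^n_{n-j}$ and leaves the quantities $\alpha$ and $\delta$ unchanged once $k$ is replaced by $n-k$ (and the constraint $100q < k < n-100q$ is symmetric under $k \leftrightarrow n-k$), so it turns an instance with $K = k-q$ into one with $K = k+q$. So from now on assume $K = k+q$, and write $d = \deg(P) < q/2$ and $E = \{a \in \{0,1\}^n_k : P(a) \neq 0\}$, the ``bad set'' on level $k$; by~\eqref{eq:mainlem0a}, $|E| \le \min\{e^{-100\delta^2 n/\alpha},\,1/1000\}\binom{n}{k}$.

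The key step is to combine the original Heged\H{u}s lemma with a multiplication trick to show that the set of nonzero points of $P$ on level $k+q$ lies in a bounded-degree closure of $E$. Let $Q$ be any polynomial with $\deg(Q) \le \lfloor q/2\rfloor$ that vanishes on $E$, and set $R = PQ$, so that $\deg(R) \le d + \lfloor q/2\rfloor \le q-1 < q$ (using $d < q/2$). For every $a \in \{0,1\}^n_k$ we have $R(a) = 0$, since either $a \notin E$ (so $P(a) = 0$) or $a \in E$ (so $Q(a) = 0$); thus $R$ vanishes on all of $\{0,1\}^n_k$. Since $\deg(R) < q$ and $k \in [q, n-q]$, Lemma~\ref{lem:Hegedus} forces $R$ to vanish on all of $\{0,1\}^n_{k+q}$ as well. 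Hence for every $b \in \{0,1\}^n_{k+q}$ with $P(b) \neq 0$ we must have $Q(b) = 0$; as $Q$ was an arbitrary polynomial of degree at most $\lfloor q/2\rfloor$ vanishing on $E$, this means
\[
\{b \in \{0,1\}^n_{k+q} : P(b) \neq 0\} \subseteq \mathrm{cl}_{\lfloor q/2\rfloor}(E).
\]

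It then remains to bound $|\mathrm{cl}_{\lfloor q/2\rfloor}(E) \cap \{0,1\}^n_{k+q}|$, and this is where I would invoke the closure bound of Nie and Wang~\cite{NW}: we are taking the closure at degree $\lfloor q/2 \rfloor = \Theta(\delta n)$ of a set $E$ whose relative density inside level $k$ is at most $e^{-100\delta^2 n / \alpha}$ (and at most $1/1000$). Feeding this into the Nie--Wang bound, together with standard binomial-coefficient estimates comparing the sizes of levels $k$ and $k+q$ in terms of $\alpha$ and $\delta$ (for which Bernstein's inequality, Lemma~\ref{lem:bernstein}, is convenient), should yield $|\mathrm{cl}_{\lfloor q/2\rfloor}(E) \cap \{0,1\}^n_{k+q}| < e^{-\delta^2 n/100\alpha}\binom{n}{k+q}$, contradicting~\eqref{eq:mainlem0b}. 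The gap between the exponents $100\delta^2 n/\alpha$ in~\eqref{eq:mainlem0a} and $\delta^2 n/(100\alpha)$ in~\eqref{eq:mainlem0b} is precisely the slack this estimate consumes, and the $\min$ with $1/1000$ covers the regime $\delta^2 n/\alpha = O(1)$ where the exponential bound is vacuous. I expect this last quantitative step to be the main obstacle: a naive argument that works subcube-by-subcube (for each level-$(k+q)$ point $b$, looking only at the level-$k$ points below $b$) loses a multiplicative factor of $\binom{k+q}{q}$, which is exponentially too large, so it is essential to argue globally through the closure operation and to arrange the parameters so that the resulting bound beats $\binom{n}{k+q}$ rather than merely $\binom{n}{k}$. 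Making the binomial bookkeeping interact correctly with the dependence on $\alpha$ (the reason the hypotheses feature $\delta^2 n/\alpha$ rather than $\delta^2 n$) is the delicate part.
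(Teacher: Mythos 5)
Your reduction to $K = k+q$ via complementation is fine, and the observation that $\{b\in\{0,1\}^n_{k+q}:P(b)\neq 0\}\subseteq\mathrm{cl}_{\lfloor q/2\rfloor}(E)$ is a clean and correct deduction from Lemma~\ref{lem:Hegedus}. But the final step---bounding $\abs{\mathrm{cl}_{\lfloor q/2\rfloor}(E)}$ via Theorem~\ref{thm:NW}---does not go through, and the failure is qualitative, not a matter of bookkeeping. Theorem~\ref{thm:NW} gives $\abs{\mathrm{cl}_D(E)}\leq \abs{E}\cdot 2^n/N_D$, and at $D=\lfloor q/2\rfloor$ one has $N_D$ of order $\binom{n}{q/2}$, which is exponentially smaller than $2^n$ (recall $q<n/200$). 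So the bound you get is $\abs{\mathrm{cl}_D(E)}\leq\abs{E}\cdot 2^n/\binom{n}{q/2}$, which already exceeds $2^n$ unless $\abs{E}<\binom{n}{q/2}$. The hypotheses only force $\abs{E}\leq\varepsilon_0\binom{n}{k}$ with $\varepsilon_0\geq\min\{e^{-q},10^{-3}\}$ (since $\delta^2 n/\alpha\leq q/100$, using $\alpha\geq 100\delta$), and $10^{-3}\binom{n}{k}$ is typically far larger than $\binom{n}{q/2}$---take $k=\lfloor n/2\rfloor$ and $q=\Theta(\sqrt n)$, for instance. In that regime a generic $E$ of the permitted size is already an interpolating set for degree $\lfloor q/2\rfloor$, so $\mathrm{cl}_{\lfloor q/2\rfloor}(E)$ is all of $\{0,1\}^n$ and your containment, while true, yields nothing. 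The obstacle you flagged is real, but its nature is different from what you guessed: it is not the $\binom{k+q}{q}$ loss of a local argument, but that the degree-$O(q)$ closure of a set of macroscopic density is uncontrollable.

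The paper's proof of the special case (Lemma~\ref{lem:Dij}), to which the general statement reduces, therefore does \emph{not} invoke Lemma~\ref{lem:Hegedus} at all. It instead multiplies $P$ by two auxiliary polynomials: a low-degree periodic $Q_1$ (degree $t$, the $\MOD_t$ indicator from Theorem~\ref{thm:ubd-STV}) and a \emph{high-degree} $Q_2$ (degree roughly $n/2-t$) vanishing on $E_0$ and on every lower level $\equiv m\pmod{t}$ but not on all of level $m=\lfloor n/2\rfloor$. The multilinearization $R$ of $PQ_1Q_2$ then vanishes on the entire Hamming ball $\{a:|a|<m\}$ yet is nonzero, so Fact~\ref{fac:int-set} gives $\deg R\geq m$ and hence $\deg P\geq m-\deg Q_1-\deg Q_2=\Omega(t)$. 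Theorem~\ref{thm:NW} is used only to show $Q_2$ exists---and crucially at degree $D\approx n/2-t$, where $N_D$ is close to $2^{n-1}$ and the closure bound $\abs{\mathrm{cl}_D(E)}\leq O(\abs{E})$ has real content. That is precisely why the multiplier must be allowed to have degree close to $n/2$: capping $\deg Q$ below $q$ so that Heged\H{u}s's lemma applies to the product puts you exactly in the regime where the Nie--Wang closure bound is vacuous.
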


	One can ask if the above lemma can be proved under weaker assumptions: specifically, if the upper bound in (\ref{eq:mainlem0a}) can be relaxed. It turns out that it cannot (up to changing the constant in the exponent) because for larger error parameters, there is a sampling-based construction of a polynomial with smaller degree that is zero on most of $\{0,1\}^n_k$ and non-zero on most of $\{0,1\}^n_K.$ We discuss this construction in Section~\ref{sec:tightness}.
	
	We  first prove a special case of the lemma which corresponds to the case when $K = k+q = \lfloor n/2\rfloor$ and $q$ sufficiently larger than $\sqrt{n}.$ This case suffices for most of our applications. The general case is a straightforward reduction to this special case.
	
	\subsection{A special case}
	\label{sec:specialcase}

	\begin{lemma}[A special case of Lemma~\ref{lem:main}]
	\label{lem:Dij}
	Let $n$ be a growing parameter and assume $\varepsilon \in [2^{-n/100},e^{-200}].$ Assume $t$ is an integer such that $t$ is a power of $p$ and furthermore, $t = \sqrt{n\ell}$ for some $\ell\in \mathbb{R}$ such that $100\leq \ell\leq \frac{1}{2}\cdot \ln(1/\varepsilon).$ Let $P\in \F[x_1,\ldots,x_n]$ be any polynomial such that 
	\begin{subequations}
	\begin{align}
	\prob{\bm{a}\sim \{0,1\}^n_{\lfloor n/2\rfloor - t}}{P(\bm{a}) \neq 0} &\leq \varepsilon	\label{eq:n/2-t}\\
	\prob{\bm{a}\sim \{0,1\}^n_{\lfloor n/2\rfloor}}{P(\bm{a}) \neq 0} &\geq e^{-\ell/2} \label{eq:n/2}.
	\end{align}
	\end{subequations}
	Then, $\deg(P) \geq t/25.$
	\end{lemma}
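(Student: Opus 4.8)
The strategy is to mimic the Razborov--Smolensky degree lower bound technique, in the variant of~\cite{ABFR94,KS}, feeding in Heged\H{u}s's lemma (Lemma~\ref{lem:Hegedus}) and the closure bound of Nie and Wang~\cite{NW} as the two key combinatorial inputs. Write $k = \lfloor n/2\rfloor - t$ and $K = \lfloor n/2\rfloor$, so that $K = k+q$ with $q := t$ a power of $p$, and (since $n$ is large and $\ell \le \frac12\ln(1/\varepsilon) = O(n)$) $k\in[q,n-q]$. Argue by contradiction: assume $d := \deg(P) < t/25$. Let $S = \{a\in\{0,1\}^n_k : P(a) = 0\}$, $M = \{0,1\}^n_k\setminus S$ and $G = \{b\in\{0,1\}^n_K : P(b)\neq 0\}$; by~(\ref{eq:n/2-t}) and~(\ref{eq:n/2}) we have $|M|\le\varepsilon\binom{n}{k}$ and $|G|\ge e^{-\ell/2}\binom{n}{K}$. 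Since $P$ has degree $d$ and vanishes on $S$, it vanishes on the degree-$d$ closure $\mathrm{cl}_d(S)$; hence $G\cap\mathrm{cl}_d(S) = \emptyset$, and it suffices to prove $|\{0,1\}^n_K\setminus\mathrm{cl}_d(S)| < e^{-\ell/2}\binom{n}{K}$.

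The crucial step bounds this complement by a closure of the \emph{small} set $M$. Fix $b\in\{0,1\}^n_K\setminus\mathrm{cl}_d(S)$ and let $Q$ be a polynomial of degree $\le d$ with $Q|_S\equiv 0$ and $Q(b)\neq 0$ (one exists, by definition of the closure). Let $R$ be a polynomial of least degree with $R|_M\equiv 0$ and $R(b)\neq 0$ (one exists, e.g.\ the multilinear indicator polynomial of $\{b\}$). Then $QR$ vanishes on all of $S\cup M = \{0,1\}^n_k$, and $(QR)(b) = Q(b)R(b)\neq 0$ with $b\in\{0,1\}^n_{k+q}$, so Heged\H{u}s's lemma forces $\deg(QR)\ge q$ and therefore $\deg(R)\ge q - d$. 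By the minimality of $R$ this is exactly the assertion that $b\in\mathrm{cl}_{q-d-1}(M)$. Writing $D := q - d - 1 > \frac{24}{25}t - 1$ (using $d < t/25$), we conclude
\[
\{0,1\}^n_K\setminus\mathrm{cl}_d(S)\ \subseteq\ \mathrm{cl}_{D}(M),
\]
and hence $|G|\le|\mathrm{cl}_{D}(M)|$.

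It now remains to show $|\mathrm{cl}_{D}(M)| < e^{-\ell/2}\binom{n}{K}$, and this is where the Nie--Wang theorem is invoked: $M$ has small density, $|M|\le\varepsilon\binom{n}{k}$, and is supported on the single slice $\{0,1\}^n_k$, while $D = \Theta(t)$; their bound controls $|\mathrm{cl}_D(M)|$ under exactly such conditions. Combining it with the numerical relations in the hypothesis — $t = \sqrt{n\ell}$, $100\le\ell\le\frac12\ln(1/\varepsilon)$ and $\varepsilon\le e^{-200}$ (from which, for instance, $e^{-\ell/2}\ge\varepsilon^{1/4}$, so $\varepsilon\le e^{-150}\cdot e^{-\ell/2}$, and $\binom{n}{\le t}/\binom{n}{K}$ is exponentially small) — should yield $|\mathrm{cl}_D(M)| < e^{-\ell/2}\binom{n}{K}$, contradicting $|G|\ge e^{-\ell/2}\binom{n}{K}$. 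Hence $d\ge t/25$, as claimed.

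I expect the bulk of the work, and the main obstacle, to lie in this last step: squeezing from the Nie--Wang closure bound an estimate on $|\mathrm{cl}_D(M)|$ sharp enough to beat $e^{-\ell/2}\binom{n}{K}$, bearing in mind that $M$ itself may be exponentially large (only its relative density is small) and that $D$ is merely a constant fraction of $q$ — it is precisely this quantitative balancing act that fixes the constant $25$ and pins down the admissible range of $\varepsilon$. A secondary technical point is the collection of Gaussian-type comparisons among $\binom{n}{k}$, $\binom{n}{K}$ and $\binom{n}{\le t}$, where the assumption $\ell\ge 100$ (that is, $q$ comfortably larger than $\sqrt n$) is used; the general Lemma~\ref{lem:main} should then follow by reducing to this special case.
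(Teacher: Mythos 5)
Your inclusion $\{0,1\}^n_K\setminus\mathrm{cl}_d(S)\subseteq\mathrm{cl}_{q-d-1}(M)$, derived via the product $QR$ and Heged\H{u}s's lemma, is correct and clean, but the proof cannot be completed as proposed because the closure-degree parameter $D = q-d-1 < t$ at which you would then need to apply Nie--Wang is far too small. Since $t=\sqrt{n\ell}$ and $\ell\le\frac12\ln(1/\varepsilon)\le\frac{n\ln 2}{200}$ (from $\varepsilon\ge 2^{-n/100}$), we always have $t<n/17$, so $D$ is a small constant fraction of $n/2$ and $N_D=\sum_{j\le D}\binom{n}{j}$ is exponentially smaller than $2^n$. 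But $|M|$ may be as large as $\varepsilon\binom{n}{k}$, which for $\varepsilon$ near $2^{-n/100}$ is exponentially \emph{larger} than $N_D$; in that range Theorem~\ref{thm:NW} gives $|\mathrm{cl}_D(M)|\le 2^n|M|/N_D>2^n$, a vacuous bound that is nowhere near the required $e^{-\ell/2}\binom{n}{K}$. (Indeed no closure bound could help here: once $|M|\ge N_D$ it may happen that no nonzero degree-$\le D$ polynomial vanishes on $M$, and then $\mathrm{cl}_D(M)=\{0,1\}^n$.) The ``quantitative balancing act'' you flag at the end is thus not merely delicate; it fails.

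The paper's proof is structurally different: it never invokes Heged\H{u}s's lemma, because the degree bound $\ge q$ it provides is too weak. Instead it uses Fact~\ref{fac:int-set} --- a nonzero multilinear polynomial of degree $d$ cannot vanish on a Hamming ball of radius $d$ --- which yields a degree bound of $\lfloor n/2\rfloor$ once one arranges for a suitable multiple of $P$ to vanish at \emph{every} weight below $\lfloor n/2\rfloor$, not just at weight $k$. Concretely, one multiplies $P$ by an explicit $Q_1$ of degree $\le t$ (the exact polynomial for the $t$-periodic indicator of $\{|a|\equiv\lfloor n/2\rfloor\pmod{t}\}$) and by a $Q_2$ of degree $\le\lfloor n/2\rfloor - t-\lceil t/25\rceil$ chosen to vanish on $M$ and on every full layer $\{0,1\}^n_j$ with $j<\lfloor n/2\rfloor-t$, $j\equiv\lfloor n/2\rfloor\pmod{t}$, yet remain nonzero somewhere in $\{0,1\}^n_{\lfloor n/2\rfloor}$ where $P\ne 0$. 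It is this existence statement for $Q_2$ that requires Nie--Wang, and the closure degree there is $D'\approx n/2-t$, for which $N_{D'}$ is comparable to $\binom{n}{\lfloor n/2\rfloor-t}$; the extra full layers enlarge the error set only by an $O(e^{-2\ell})$-fraction, and the estimates close with room to spare. The factor-$\Theta(\sqrt{n/\ell})$ gap between your $D\approx t$ and the paper's $D'\approx n/2$ is exactly what your route cannot overcome.
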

	
	\begin{remark}
	\label{rem:negation}
	By negating inputs (i.e. replacing $x_i$ with $1-x_i$ for each $i$), the above lemma also implies the analogous statements where $\lfloor n/2\rfloor - t$ and $\lfloor n/2\rfloor$ are replaced by $\lceil n/2\rceil+t$ and $\lceil n/2 \rceil$ respectively.
	\end{remark}
	
		Before we prove this lemma, we need to collect some technical facts and lemmas.
		
		The following is standard. See, e.g.,~\cite[Lemma 3.3]{KS} for a proof.
		\begin{fact}
		\label{fac:int-set}
		Let $R\in \F[x_1,\ldots,x_n]$ be a non-zero multilinear polynomial of degree at most $d\leq n$. Then $R$ cannot vanish at all points in any Hamming ball of radius $d$ in $\{0,1\}^n$.
		\end{fact}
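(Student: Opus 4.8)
The plan is to prove the contrapositive via a dimension-counting / linear algebra argument on the space of multilinear polynomials, exactly in the spirit of the classical "polynomials of degree $d$ cannot vanish on a Hamming ball of radius $d$" statement. Suppose for contradiction that $R$ is a nonzero multilinear polynomial of degree $\leq d$ that vanishes at every point of some Hamming ball $B(c,d) = \{a \in \{0,1\}^n : |a \oplus c| \leq d\}$ of radius $d$ centered at some $c \in \{0,1\}^n$. By the affine change of variables $x_i \mapsto x_i$ if $c_i = 0$ and $x_i \mapsto 1-x_i$ if $c_i = 1$ (which preserves both multilinearity and degree, and maps $B(c,d)$ to $B(\mathbf{0},d)$), we may assume $c = \mathbf{0}$, so $R$ vanishes on all $a \in \{0,1\}^n$ with $|a| \leq d$.

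The key step is then to show this forces $R \equiv 0$. Write $R = \sum_{S : |S| \leq d} \lambda_S \prod_{i \in S} x_i$. I would argue by induction on $|T|$ that $\lambda_T = 0$ for every $T$ with $|T| \leq d$: evaluating $R$ at the indicator point $e_T \in \{0,1\}^n$ (which has Hamming weight $|T| \leq d$, hence lies in the ball), every monomial $\prod_{i \in S} x_i$ with $S \not\subseteq T$ vanishes, while those with $S \subseteq T$ evaluate to $1$; so $0 = R(e_T) = \sum_{S \subseteq T} \lambda_S$, and by the inductive hypothesis $\lambda_S = 0$ for all proper subsets $S \subsetneq T$, leaving $\lambda_T = 0$. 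Since $\deg(R) \leq d$, all coefficients of $R$ are of this form, so $R \equiv 0$, contradicting the hypothesis that $R$ is nonzero.

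There is no serious obstacle here; the only point requiring a little care is ensuring the reduction to a ball centered at the origin is legitimate, i.e. that negating a subset of the variables preserves both the degree bound and multilinearity (it does, since $1-x_i$ is linear in $x_i$ and the substitution acts independently on each coordinate). One could alternatively phrase the argument purely as: the evaluation functionals $\{R \mapsto R(e_T) : |T| \leq d\}$, restricted to the space of multilinear polynomials of degree $\leq d$, form a basis of the dual space (the matrix relating coefficients to these evaluations is triangular with $1$'s on the diagonal under the subset partial order), so a multilinear polynomial of degree $\leq d$ vanishing on all weight-$\leq d$ points is identically zero. Since the paper cites \cite[Lemma 3.3]{KS} for this, I would keep the self-contained write-up to a few lines along the lines above.
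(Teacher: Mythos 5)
Your proof is correct. The paper does not include its own argument for this fact, instead citing it as standard with a pointer to~\cite[Lemma 3.3]{KS}; your write-up is exactly the standard linear-algebra/induction argument one would expect to find there (negate coordinates to center the ball at the origin, then show by induction on $|T|$ over the subset order that every coefficient $\lambda_T$ with $|T|\leq d$ must vanish, forcing $R\equiv 0$).
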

		
		\begin{lemma}
		\label{lem:technical-binom}
		Let $n, r,s$ be any non-negative integers with $r\leq s \leq n/4.$ Then we have
		\[
		e^{-8s(r-s)/n}\leq \frac{\binom{n}{\lfloor n/2\rfloor - s}}{\binom{n}{\lfloor n/2\rfloor - r}}\leq e^{-2r(r-s)/n}.
		\]
		\end{lemma}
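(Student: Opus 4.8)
The plan is to turn the ratio into an explicit product and then pin it between its two extreme factors; the whole argument is an elementary computation. Since $r \le s$, the binomial coefficient on top is the smaller of the two, so the ratio is at most $1$; accordingly the content of the lemma is the two-sided bound
\[
e^{-8s(s-r)/n} \;\le\; \frac{\binom{n}{\lfloor n/2\rfloor - s}}{\binom{n}{\lfloor n/2\rfloor - r}} \;\le\; e^{-2r(s-r)/n}.
\]
Using $\binom{n}{j-1}/\binom{n}{j} = j/(n-j+1)$ and telescoping, and then substituting $m = \lfloor n/2\rfloor - j$, I would rewrite the ratio as
\[
\prod_{j=\lfloor n/2\rfloor - s+1}^{\lfloor n/2\rfloor - r}\frac{j}{n-j+1} \;=\; \prod_{m=r}^{s-1} t_m, \qquad t_m := \frac{\lfloor n/2\rfloor - m}{\lceil n/2\rceil + m + 1}.
\]
This product has exactly $s-r$ factors; each $t_m$ is a quotient of positive integers because $m \le s-1 < n/4 < \lfloor n/2\rfloor$, and $t_m$ is decreasing in $m$ (numerator decreases, denominator increases). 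Hence $t_{s-1}^{\,s-r} \le \prod_{m=r}^{s-1} t_m \le t_r^{\,s-r}$, and it remains only to estimate the two extreme factors $t_r$ and $t_{s-1}$.

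For the upper bound I would show $t_r \le e^{-2r/n}$. Writing $t_r = 1 - \frac{(\lceil n/2\rceil - \lfloor n/2\rfloor) + 2r + 1}{\lceil n/2\rceil + r + 1}$, the numerator of the subtracted fraction is at least $2r$, and since $r \le n/4$ its denominator is at most $3n/4 + 3/2 \le n$ (for $n \ge 6$, which is harmless as $n$ is a growing parameter); so the fraction is at least $2r/n$, giving $t_r \le 1 - 2r/n \le e^{-2r/n}$. Raising this to the power $s-r$ yields the claimed upper bound.

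For the lower bound I would show $\ln t_{s-1} \ge -8s/n$. Here the naive estimate $t_{s-1} \ge 1 - 4s/n$ is too weak when $s$ is close to $n/4$, so instead I set $N := \lceil n/2\rceil \ge n/2$, note $t_{s-1} \ge \frac{N-s}{N+s}$, put $u := s/N$, and observe $u \le 2s/n \le 1/2$. From the elementary inequalities $\ln(1-u) \ge -u-u^2$ and $\ln(1+u)\le u$ on $[0,1/2]$ one gets $\ln\frac{1-u}{1+u} \ge -2u - u^2 \ge -(5/2)u$, hence $\ln t_{s-1} \ge -(5/2)u \ge -5s/n \ge -8s/n$; raising to the power $s-r$ finishes the proof. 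I expect the only delicate point to be precisely this last estimate near the boundary $s = n/4$, together with correctly tracking whether $\lceil n/2\rceil - \lfloor n/2\rfloor$ equals $0$ or $1$ according to the parity of $n$; the rest is routine, and the degenerate cases $s = r$ (empty product equal to $1$) and $r = 0$ (upper bound equal to $1$) are immediate.
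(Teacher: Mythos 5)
Your proof is correct and follows essentially the same route as the paper's: both write the ratio as a product of $s-r$ factors and bound it by the extreme (largest, resp.\ smallest) factor, then estimate that single factor. You have also correctly noticed that the exponents in the paper's lemma statement (and in its proof, which also invokes the false inequality $1-x\le e^{x}$ where $1-x\le e^{-x}$ is meant) carry a sign typo -- they should read $s-r$ rather than $r-s$, as confirmed by how the lemma is applied later with $r=t$, $s=(k+1)t$ -- and you work with the corrected form, tracking the $+1$ in the denominator and the parity of $n$ a bit more carefully than the paper does.
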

		
		\begin{proof}		
		Note that 
		\[
		\frac{\binom{n}{\lfloor n/2\rfloor - s}}{\binom{n}{\lfloor n/2\rfloor - r}} = \frac{(\lfloor n/2\rfloor - s+1)\cdots (\lfloor n/2\rfloor - r)}{(\lceil n/2\rceil +s)\cdots (\lceil n/2\rceil + r+1)} \leq \left(\frac{\lfloor n/2\rfloor - r}{\lceil n/2\rceil + r}\right)^{r-s}\leq \left(1-\frac{2r}{n}\right)^{r-s}\leq e^{-2r(r-s)/n},
		\]
		which implies the right inequality in the statement of the claim. We have used the inequality $1-x\leq e^{x}$ to deduce the final inequality above.
		
		For the left inequality, we similarly have
		\begin{align*}
		\frac{\binom{n}{\lfloor n/2\rfloor - s}}{\binom{n}{\lfloor n/2\rfloor - r}} \geq  \left(\frac{\lceil n/2\rceil - s}{\lceil n/2\rceil + s}\right)^{r-s} \geq \left(\left(1-\frac{2s}{n}\right)^{2}\right)^{r-s} \geq e^{-8s(r-s)/n}.
		\end{align*}
		where the final inequality follows from the fact that $(1-x)\geq e^{-2x}$ for $x\in [0,1/2]$.
		\end{proof}

		Given a set $E\subseteq \{0,1\}^n,$ and a parameter $D\leq n$, we define $\mc{I}_D(E)$ to be the set of all multilinear polynomials $Q$ of degree at most $D$ that vanish at all points of $E$. Further, we define the \emph{degree-$D$ closure of $E$}, denoted $\mathrm{cl}_D(E)$ as follows.
		\[
		\mathrm{cl}_D(E) := \{a \in \{0,1\}^n\ |\ Q(a) = 0\ \  \forall Q\in \mc{I}_D(E)\}.
		\]
		Note that $\mathrm{cl}_D(E)\supseteq E$ but could be much bigger than $E$. The following result of Nie and Wang~\cite{NW} gives a bound on $|\mathrm{cl}_D(E)|$ in terms of $|E|.$ (This particular form is noted and essentially proved in~\cite{NW}, and is explicitly stated and proved  in~\cite[Theorem A.1]{KS} for \emph{all} fields.)
		
		\begin{theorem}
		\label{thm:NW}
		For any $E\subseteq \{0,1\}^n$ and any $D\leq n$, we have
		\[
		\frac{\mathrm{cl}_D(E)}{2^n}\leq \frac{|E|}{N_D}
		\]
		where $N_D = \sum_{j=0}^D \binom{n}{j},$ the number of multilinear monomials of degree at most $D$.
		\end{theorem}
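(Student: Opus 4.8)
The plan is to reduce the theorem to a clean statement about the $2^n$ point‑evaluation functionals on the space $V_D$ of multilinear polynomials of degree at most $D$. Note $\dim V_D = N_D$, since the $N_D$ multilinear monomials of degree $\le D$ are linearly independent as functions on $\{0,1\}^n$ (and $V_D$ embeds in $\F^{\{0,1\}^n}$, so a nonzero such polynomial is a nonzero function). Write $\mathcal{I}_D(E)\subseteq V_D$ for the polynomials of degree $\le D$ vanishing on $E$, and let $W := \mathcal{I}_D(E)^{\circ}\subseteq V_D^{\ast}$ be its annihilator; then $\dim W = N_D - \dim \mathcal{I}_D(E) = \dim(V_D|_E) \le |E|$, because the restriction map $V_D\to\F^{E}$ has rank at most $|E|$. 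Unwinding definitions, $a\in\mathrm{cl}_D(E)$ exactly when the evaluation functional $\mathrm{ev}_a\in V_D^{\ast}$ annihilates $\mathcal{I}_D(E)$, i.e.\ $\mathrm{ev}_a\in W$; thus $\mathrm{cl}_D(E)=\{a\in\{0,1\}^n:\mathrm{ev}_a\in W\}$ and it suffices to prove that for \emph{every} subspace $W\subseteq V_D^{\ast}$,
\[
\bigl|\{a\in\{0,1\}^n:\mathrm{ev}_a\in W\}\bigr|\ \le\ \frac{2^n\dim(W)}{N_D}.
\]
Applying this with $\dim W\le|E|$ then gives the theorem.

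For the displayed inequality the key point is that the family $\{\mathrm{ev}_a\}_{a\in\{0,1\}^n}$ is an \emph{equidistributed} (tight) frame for $V_D^{\ast}$. Working in the ``Fourier'' basis $\{\,\prod_{i\in T}(2x_i-1)\,\}_{|T|\le D}$ of $V_D$, the functional $\mathrm{ev}_a$ has coordinate vector $\bigl(\prod_{i\in T}(2a_i-1)\bigr)_{|T|\le D}\in\{-1,+1\}^{N_D}$; hence each $\|\mathrm{ev}_a\|^2=N_D$, and a one‑line character sum (using $\sum_a\prod_{i\in S}(2a_i-1)=0$ for $S\neq\emptyset$) gives $\sum_a \mathrm{ev}_a\otimes\mathrm{ev}_a = 2^n\cdot I_{N_D}$. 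Over a field carrying a positive‑definite inner product this finishes the argument: let $\Pi_W$ be the orthogonal projection onto $W$, so that the summand $\|\Pi_W\mathrm{ev}_a\|^2$ equals $\|\mathrm{ev}_a\|^2=N_D$ precisely when $\mathrm{ev}_a\in W$ and is nonnegative otherwise, and compute
\[
\bigl|\{a:\mathrm{ev}_a\in W\}\bigr|\cdot N_D\ \le\ \sum_a\|\Pi_W\mathrm{ev}_a\|^2\ =\ \operatorname{tr}\!\Bigl(\Pi_W\sum_a\mathrm{ev}_a\otimes\mathrm{ev}_a\Bigr)\ =\ 2^n\operatorname{tr}(\Pi_W)\ =\ 2^n\dim(W).
\]

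The main obstacle is that the theorem is stated over an \emph{arbitrary} field of positive characteristic, where this last step is unavailable: there is no positive‑definite form, the projection $\Pi_W$ need not exist, and for $p=2$ the Fourier basis collapses ($2x_i-1=1$), so even the frame identity degrades. The remedy is to replace the trace/positivity argument by a pure dimension count, exploiting that the underlying code $\mathcal{C}=\{(Q(a))_a:Q\in V_D\}\subseteq\F^{2^n}$ is invariant under the coordinate‑translation action $\sigma_b\colon (v_a)_a\mapsto(v_{a\oplus b})_a$ of $(\mathbb{Z}/2\mathbb{Z})^n$: translation moves the support of any subcode by $a\mapsto a\oplus b$, any nonzero translation‑invariant subcode of $\F^{2^n}$ has full support (its translates already cover $\{0,1\}^n$), and feeding $\sum_{b}\sigma_b\widehat{U}$ — where $\widehat U\subseteq\mathcal C$ is the realisation of $W^{\perp}$, of dimension $N_D-\dim W$ and support $\{0,1\}^n\setminus\mathrm{cl}_D(E)$ — into the bound $\dim(\sum_b\sigma_b\widehat U)\le\dim\mathcal C=N_D$ yields $|\mathrm{supp}(\widehat U)|\ge \dim(\widehat U)\cdot 2^n/N_D$, which is equivalent to the inequality above. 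This characteristic‑free version is exactly what Nie–Wang and, explicitly, \cite[Theorem~A.1]{KS} establish, so in the paper one simply cites it; the outline above is the route I would take to reconstruct the proof.
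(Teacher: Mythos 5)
The paper never proves Theorem~\ref{thm:NW}: it cites Nie--Wang~\cite{NW} and the explicit statement in~\cite[Theorem~A.1]{KS}, so there is no ``paper proof'' to compare against, and you correctly note this in your last sentence. Your reduction is clean and correct: the set $\mathrm{cl}_D(E)$ is exactly $\{a:\mathrm{ev}_a\in W\}$ for $W=\mathcal I_D(E)^{\circ}$, $\dim W=\dim(V_D|_E)\le|E|$, and the theorem is equivalent to the generalized-Hamming-weight lower bound $|\supp(\widehat U)|\ge 2^n\dim(\widehat U)/N_D$ for every subcode $\widehat U$ of the code $\mathcal C$ of degree-$\le D$ multilinear polynomials. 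Your tight-frame computation over $\mathbb R$ (or $\mathbb C$) is also correct, though, as you observe, it does not apply to the field $\F$ in the theorem statement.

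The gap is in the characteristic-free step, which is the only case the paper actually needs. You claim that ``feeding $\sum_b\sigma_b\widehat U$ into the bound $\dim\bigl(\sum_b\sigma_b\widehat U\bigr)\le N_D$ yields $|\supp(\widehat U)|\ge\dim(\widehat U)\cdot 2^n/N_D$,'' but that implication does not follow from the premise. The premise $\dim\bigl(\sum_b\sigma_b\widehat U\bigr)\le N_D$ is just the trivial containment $\sum_b\sigma_b\widehat U\subseteq\mathcal C$, and it imposes no lower bound on $|\supp(\widehat U)|$ without a further packing or rank argument. The obvious attempt to supply one --- greedily choose translates $b_1,\dots,b_m$ with pairwise-disjoint support sets $S\oplus b_i$, so that $\sigma_{b_1}\widehat U,\dots,\sigma_{b_m}\widehat U$ are in direct sum --- produces $m\ge 2^n/|S|^2$ translates and hence only $|S|\ge\sqrt{2^n\dim(\widehat U)/N_D}$, which is strictly weaker than what is required whenever $2^n\dim(\widehat U)/N_D>1$. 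The actual characteristic-free proofs (Nie--Wang, Keevash--Sudakov, Wei, and~\cite[Theorem~A.1]{KS}) use more structure of the Reed--Muller code than bare translation invariance --- a monomial-order/standard-monomial or shifting argument --- and you would need to supply that ingredient to make your sketch a proof. As an outline of where to look and how the reduction goes, your proposal is a good road map, but it should not be read as a self-contained argument for the inequality over an arbitrary field.
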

		
		\begin{remark}
		\label{rem:NW}
		It should be noted that the above lemma generalizes the standard linear-algebraic fact that for any $E$ such that $|E| < N_D$, there is a non-zero multilinear polynomial of degree $D$ that vanishes on $E$. Or equivalently,
		\[
		|E| < N_D \Longrightarrow \mathrm{cl}_D(E) < 2^n.
		\]
		The inequality stated in the lemma is tight for certain sets $E$ of size $N_D$ (a good example of such a set is any Hamming ball of radius $D$). However, when $|E|$ is much smaller than $N_D$, the parameters can be tightened. A tight form of this lemma, that gives the best possible parameters depending on $|E|$, was proved in earlier work of Keevash and Sudakov~\cite{KeevashSudakov} (see also the works of Clements and Lindstr\"{o}m~\cite{CL}, Wei~\cite{Wei}, Heijnen and Pellikaan~\cite{HP}, and Beelen and Dutta~\cite{BD} that prove similar results). However, we don't need this general form of the lemma here.
		\end{remark}
	
	We now begin the proof of the Lemma~\ref{lem:Dij}.
	
	\begin{proof}[Proof of Lemma~\ref{lem:Dij}]
	Assume that $P$ is as given. Let $m = \lfloor n/2\rfloor$.
	
	Let $E_0,E_1$ be defined as follows. (Here, the notation ``$E$'' stands for ``error sets''.)
	\begin{align*}
	E_0 &= \{a\in \{0,1\}^n_{m-t}\ |\ P(a)\neq 0\}\\
	E_1 &= \{a\in \{0,1\}^n_m\ |\ P(a) = 0\}
	\end{align*}
	
	We show that there are polynomials $Q_1,Q_2\in\F[x_1,\ldots,x_n]$ such that the following conditions hold.
	\begin{description}
	\item{(Q1.1)} $Q_1(a) \neq 0$ if and only if $|a|\equiv m \pmod{t}.$
	\item{(Q2.1)} $Q_2(a) = 0$ for all $a\in E_0$.
	\item{(Q2.2)} $Q_2(a) = 0$ for all $a$ such that $|a| < m-t$ and $|a|\equiv m \pmod{t}.$
	\item{(Q2.3)} $Q_2(a) \neq 0$ for some $a\in \{0,1\}^n_m \setminus E_1.$
	\end{description}

	Given polynomials $Q_1,Q_2$ as above, we construct the polynomial $R$ to be the multilinear polynomial obtained by computing the formal product $P\cdot Q_1\cdot Q_2$ and replacing $x_i^r$ by $x_i$ for each $r > 1$. Note that $R(a) = P(a)Q_1(a)Q_2(a)$ for any $a\in \{0,1\}^n$.
	
	We observe that $R(a) = 0$ for all $|a| < m.$ This is based on a case analysis of whether $|a| \equiv m \pmod{t}$ or not. In the latter case, we see that $Q_1(a) = 0$ and hence $R(a) = 0.$ In the former case, we have either $a\in \{0,1\}^n_{m-t}\setminus E_0$, in which case $P(a) = 0$, or not, in which case $Q_2(a) = 0.$ Hence, $R(a) = 0$ for all $|a| < m.$
	
	On the other hand, we note that $R$ is a non-zero polynomial. This is because by (Q2.3), we know that there is some $a'\in \{0,1\}^n_m\setminus E_1$ where $Q_2(a') \neq 0.$ Further, $Q_1(a')\neq 0$ and $P(a')\neq 0$ by (Q1.1) and the definition of $E_1$ respectively. Hence, $R(a')\neq 0,$ implying that $R$ is a non-zero multilinear polynomial. 
	
	By Fact~\ref{fac:int-set}, we thus know that $R$ has degree at least $m$. In particular, we obtain
	\begin{equation}
	\deg(P) \geq \deg(R) - \deg(Q_1) - \deg(Q_2) \geq m- \deg(Q_1) - \deg(Q_2).\notag
	\end{equation}
	
	Hence, to finish the proof of the lemma, it suffices to prove the following claims.
	
	\begin{claim}
	\label{clm:Q1}
	There is a $Q_1$ of degree at most $t$ satisfying property (Q1.1).
	\end{claim}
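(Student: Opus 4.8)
The plan is to write $Q_1$ down explicitly, using that $t$ is a power of $p$ together with Lucas's theorem, much as in Corollary~\ref{cor:lucas}. Write $t = p^j$, and recall $m = \lfloor n/2\rfloor$; for a non-negative integer $w$ let $w_0,w_1,\ldots$ denote its base-$p$ digits, and let $m_0,\ldots,m_{j-1}$ be the low-order digits of $m$. For $0 \le i < j$, let $e_{p^i}(x) = \sum_{S\subseteq[n]:\,|S|=p^i}\prod_{s\in S}x_s$ be the degree-$p^i$ elementary symmetric polynomial. Exactly as in the proof of Corollary~\ref{cor:lucas}, for any $a\in\{0,1\}^n$ of Hamming weight $w$ we have $e_{p^i}(a) = \binom{w}{p^i}$, and Lucas's theorem (Theorem~\ref{thm:lucas}) says this equals $w_i$ modulo $p$; in particular $e_{p^i}(a)$ always lies in the prime subfield $\F_p\subseteq\F$. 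Since $|a|\equiv m \pmod{t}$ if and only if $w_i = m_i$ for every $i\in\{0,\ldots,j-1\}$, it only remains to combine the $e_{p^i}$ into a single polynomial that tests all of these digit equalities at once.

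For that I would use the standard fact that for $z\in\F_p$ one has $z^{p-1}=0$ when $z=0$ and $z^{p-1}=1$ otherwise, so that $1-z^{p-1}$ is the $\F_p$-valued indicator of $\{z=0\}$. Applying this with $z = e_{p^i}(x)-m_i$ (which, by the previous paragraph, evaluates at every point of $\{0,1\}^n$ to an element of $\F_p$), I set
\[
Q_1(x) \;=\; \prod_{i=0}^{j-1}\Bigl(1 - \bigl(e_{p^i}(x) - m_i\bigr)^{p-1}\Bigr)\;\in\;\F[x_1,\ldots,x_n].
\]
For $a\in\{0,1\}^n$ of weight $w$, the $i$-th factor equals $1$ if $w_i = m_i$ and $0$ otherwise, so $Q_1(a)$ equals $1$ when $|a|\equiv m\pmod{t}$ and $0$ otherwise. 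Hence $Q_1(a)\neq 0$ if and only if $|a|\equiv m\pmod{t}$, which is exactly property (Q1.1). (When $j=0$, i.e.\ $t=1$, the empty product gives $Q_1\equiv 1$, trivially satisfying (Q1.1).)

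Finally, I would bound the degree: the $i$-th factor has degree $(p-1)p^i$, so
\[
\deg(Q_1)\;\le\;\sum_{i=0}^{j-1}(p-1)p^i\;=\;(p-1)\cdot\frac{p^j-1}{p-1}\;=\;p^j-1\;=\;t-1\;<\;t,
\]
as required. I do not expect any real obstacle here; the only point that needs a moment's care is that $e_{p^i}(a)$ genuinely lands in $\F_p$ for every Boolean input $a$, so that the $(p-1)$-st power acts as a $\{0,1\}$-indicator — and this is precisely what Lucas's theorem guarantees.
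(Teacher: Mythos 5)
Your proof is correct, and it takes a genuinely different route from the paper's. The paper proves Claim~\ref{clm:Q1} as a one-line appeal to Theorem~\ref{thm:ubd-STV} (the upper bound of~\cite{STV}), which says that a $t$-periodic symmetric function is exactly representable by a polynomial of degree at most~$t$ when $t$ is a power of~$p$. You instead build $Q_1$ explicitly, combining the elementary symmetric polynomials $e_{p^0},\ldots,e_{p^{j-1}}$ with Lucas's theorem and the indicator $1-z^{p-1}$ for $z\in\F_p$, in the spirit of Corollary~\ref{cor:lucas}. Your construction is self-contained (it avoids the black-box reference), it makes transparent exactly where $t$ being a power of $p$ is used, and it even yields the marginally sharper bound $\deg(Q_1)\le t-1$. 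The one point you rightly flag --- that $e_{p^i}(a)=\binom{|a|}{p^i}$ always lands in the prime subfield $\F_p$, so that the $(p-1)$-st power really acts as a $\{0,1\}$-indicator --- is handled correctly by Lucas's theorem. In effect you have unrolled the content of the cited theorem into a direct argument; both proofs are valid, with yours being more explicit and the paper's being shorter by leaning on prior work.
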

	
	\begin{claim}
	\label{clm:Q2}
	There is a $Q_2$ of degree at most $m-t-t_1$ satisfying properties (Q2.1)-(Q2.3), where $t_1 = \lceil t/25\rceil.$
	\end{claim}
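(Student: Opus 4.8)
I would first dispose of Claim~\ref{clm:Q1}, which is short: since $t$ is a power of $p$, ``$|a|\equiv m\pmod t$'' says exactly that the lowest $\log_p t$ base-$p$ digits of $|a|$ agree with those of $m$, and by Lucas's theorem (Theorem~\ref{thm:lucas}) the $i$-th base-$p$ digit of $|a|$ is the value at $a$ of the symmetric polynomial $\sum_{|S|=p^i}\prod_{j\in S}x_j$ (which evaluates to $\binom{|a|}{p^i}\bmod p$), of degree $p^i$; so one takes $Q_1=\prod_{i=0}^{\log_p t-1}\prod_{c\neq m_i}\bigl(\sum_{|S|=p^i}\prod_{j\in S}x_j-c\bigr)$, with $m_i$ the $i$-th digit of $m$ and $c$ ranging over $\{0,\dots,p-1\}\setminus\{m_i\}$, which is nonzero exactly on the weights $\equiv m\pmod t$ and has degree $\sum_i(p-1)p^i=t-1\le t$ (this is why only $m-t$ of the degree budget remains for $Q_2$).

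For Claim~\ref{clm:Q2} the plan is a Smolensky-type dimension argument whose quantitative core is the Nie--Wang closure bound (Theorem~\ref{thm:NW}). Write $\sigma_1=\sum_i x_i$, put $t_1=\lceil t/25\rceil$ and $D:=m-t-t_1$, and let $L:=\{a:|a|<m-t,\ |a|\equiv m\pmod t\}$ be precisely the set on which (Q2.2) demands vanishing -- a union of Hamming layers at weights $m-2t,m-3t,\dots$, none of them layer $m$. It is enough to exhibit a \emph{nonzero} multilinear polynomial $Q_2$ of degree at most $D$ that vanishes on $E_0\cup L$ and is nonzero at \emph{some} point of $\{0,1\}^n_m\setminus E_1$: then (Q2.1) holds since $E_0\subseteq E_0\cup L$, (Q2.2) since $L\subseteq E_0\cup L$, (Q2.3) by the chosen point, and $\deg Q_2\le D=m-t-t_1$ is the asserted bound (and, as the excerpt already argues, this forces $\deg P\ge t_1=\lceil t/25\rceil$, finishing Lemma~\ref{lem:Dij}). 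Such a $Q_2$ exists precisely when $\{0,1\}^n_m\setminus E_1\not\subseteq\mathrm{cl}_D(E_0\cup L)$, and since $|\{0,1\}^n_m\setminus E_1|>|\mathrm{cl}_D(E_0\cup L)|$ would already give that, Theorem~\ref{thm:NW} reduces the whole claim to the single counting inequality
\[
\frac{2^n\,(|E_0|+|L|)}{N_D}\;<\;\binom nm-|E_1|,\qquad N_D=\sum_{j\le D}\binom nj .
\]

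The ingredients for this inequality are: $|E_0|\le\varepsilon\binom{n}{m-t}$ from~\eqref{eq:n/2-t}; $|L|\le 2\binom{n}{m-2t}$, because by Lemma~\ref{lem:technical-binom} the layers of $L$ shrink by a factor $e^{-\Omega(\ell)}\ll1$ from one to the next, so the sum is dominated by its heaviest term $\binom{n}{m-2t}$; $\binom nm-|E_1|\ge e^{-\ell/2}\binom nm$ from~\eqref{eq:n/2}; the standard $\binom nm=\Theta(2^n/\sqrt n)$; and -- the point that keeps the estimate from decaying with $n$ -- the sharp lower bound $N_D\ge\Omega(\sqrt{n/\ell})\binom nD$, valid because $D$ sits $\Theta(t)=\Theta(\sqrt{n\ell})$ below $n/2$, so its top $\Omega(n/t)=\Omega(\sqrt{n/\ell})$ terms are each within a constant factor of $\binom nD$. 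Using Lemma~\ref{lem:technical-binom} once more (all offsets $t,t+t_1,2t$ lie below $n/4$ since $t\le n/\sqrt{200}$) to bound $\binom{n}{m-t}/\binom nD$ by $e^{O(\ell t_1/t)}=e^{O(\ell/25)}$ and $\binom{n}{m-2t}/\binom nD$ by $e^{-\Omega(\ell)}$, the $\sqrt n$ from $\binom nm/2^n$ is exactly cancelled by the $\sqrt n$ gained in $N_D$, and the displayed inequality collapses to an $n$-free statement of the shape $\varepsilon\,e^{O(\ell/25)}+e^{-\Omega(\ell)}<c\,e^{-\ell/2}$. The hypotheses $\ell\ge100$ and $\ell\le\tfrac12\ln(1/\varepsilon)$ (i.e.\ $\varepsilon\le e^{-2\ell}$) are exactly what close it: the first term is then $\le e^{-2\ell}e^{O(\ell/25)}\le e^{-5\ell/4}$, well below $e^{-\ell/2}$, and the constant $25$ in $t_1$ is tuned so that the $e^{O(\ell/25)}$ binomial loss stays comfortably under the $e^{-\ell/2}$ slack (the lower bound $\varepsilon\ge 2^{-n/100}$ is used only to keep the auxiliary offsets in range). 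This parameter bookkeeping -- in particular noticing that $|L|$ is geometric and that the two $\sqrt n$'s must be made to cancel -- is the only real content; I expect that to be where the single genuine subtlety lies, the remaining pieces (the Smolensky product $R=PQ_1Q_2$ and the use of Fact~\ref{fac:int-set}) being already set up in the excerpt.
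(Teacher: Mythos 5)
Your proposal is correct and follows essentially the same approach as the paper: the same choice of $L$, the same reduction via closures to showing $|\mathrm{cl}_D(E_0\cup L)|<|\{0,1\}^n_m\setminus E_1|$, the same Nie--Wang bound (Theorem~\ref{thm:NW}), and the same binomial-ratio estimates from Lemma~\ref{lem:technical-binom}. The only difference is minor bookkeeping---you lower-bound $N_D$ by its top $\Theta(n/t)$ terms relative to $\binom{n}{D}$ and split $|E_0|+|L|$ against $\binom{n}{m-t}$ and $\binom{n}{m-2t}$ separately, while the paper takes $t_1$ terms bounded by $\binom{n}{m-t-2t_1}$ and collects everything against $\binom{n}{m-t}$; both close using $\varepsilon\le e^{-2\ell}$ and $\ell\ge 100$.
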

	
	We now prove the above claims.
	
	\begin{subproof}[Proof of Claim~\ref{clm:Q1}]
	This follows immediately from the upper bound for periodic functions in Theorem~\ref{thm:ubd-STV}. Consider the $t$-periodic function that takes the value $1$ at point $a\in \{0,1\}^n$ if and only if $|a|\equiv m\pmod{t}.$ Since this function is $t$-periodic, it can be represented exactly as a polynomial of degree at most $t$. This yields the claim.
	\end{subproof}

	\begin{subproof}[Proof of Claim~\ref{clm:Q2}]
	Let $D$ denote $m-t-t_1.$ Let $E = E_0 \cup \bigcup_{j < m-t: j\equiv m \pmod{t}}\{0,1\}^n_j.$ We want to show the existence of a polynomial $Q_2$ of degree at most $D$ such that $Q_2$ vanishes at all points of $E$ but $Q_2$ does not vanish at some point in ${E}_1' := \{0,1\}^n_m\setminus E_1$. Note that this is equivalent to saying that $\mathrm{cl}_D(E)\nsupseteq {E}_1'$. To show this, it suffices to show that 
	\begin{equation}
	\label{eq:clmQ2.1}
	|\mathrm{cl}_D(E)| <  e^{-\ell/2}\cdot \binom{n}{m}
	\end{equation}
	since by hypothesis we have $|{E}_1'|\geq e^{-\ell/2}\cdot \binom{n}{m}.$
	
	To do this, we use Theorem~\ref{thm:NW}. Note that we have
	\begin{align*}
	|E| &\leq |E_0| + \sum_{j < m-t: j\equiv m \pmod{t}} \binom{n}{j}\\
	&\leq \varepsilon\cdot \binom{n}{m-t} + \sum_{k\geq 1} \binom{n}{m-t-k\cdot t}\\
	&\leq \varepsilon\cdot \binom{n}{m-t} + \binom{n}{m-t}\cdot \left(e^{-2\ell} + e^{-4\ell} + \ldots\right)\\
	&\leq \binom{n}{m-t}\cdot (\varepsilon + 2\cdot e^{-2\ell}) \leq \binom{n}{m-t}\cdot (3e^{-2\ell})\numberthis \label{eq:clmQ2.2}
	\end{align*}
	where the third inequality is a consequence of Lemma~\ref{lem:technical-binom} (with $r = t$ and $s = (k+1)t$ for various~$k$) and the final inequality uses $\varepsilon\leq e^{-2\ell}.$
	
	On the other hand, the parameter $N_D$ from the statement of Theorem~\ref{thm:NW} can be lower bounded as follows.
	\begin{align*}
	N_D &= \sum_{j=0}^D\binom{n}{D-j} \geq t_1 \binom{n}{m-t-2t_1}\\
	&\geq t_1 e^{-\ell}\cdot \binom{n}{m-t} > e^{-\ell}\cdot \frac{\sqrt{n}}{3}\cdot \binom{n}{m-t}
	\end{align*}
	where the second inequality follows from Lemma~\ref{lem:technical-binom} (with $r = t$ and $s = t+2t_1$) and the final inequality uses the fact that $t_1 > t/30 = \sqrt{n\ell}/30 \geq \sqrt{n}/3.$
	
	Putting the above together with (\ref{eq:clmQ2.2}) immediately yields
	\[
	\frac{|E|}{N_D} < 9e^{-\ell}\cdot \frac{\binom{n}{m-t}}{\sqrt{n}\cdot\binom{n}{m-t}} =9e^{-\ell}\cdot n^{-1/2}.
	\]
	Using Theorem~\ref{thm:NW}, we thus obtain
	\[
	\mathrm{cl}_D(E) < 9e^{-\ell}\cdot \frac{2^n}{\sqrt{n}}\leq e^{-\ell/2}\cdot  \frac{2^n}{2\sqrt{n}} \leq e^{-\ell/2}\cdot\binom{n}{m}
	\]
	where the last inequality follows from Stirling's approximation. Having shown (\ref{eq:clmQ2.1}), the claim now follows.
	\end{subproof}
	\end{proof}
	
	\subsection{The General Case}
	\label{sec:generalcase}
	
	We start with some preliminaries.

	We first show a simple `error-reduction' procedure for polynomials. For any polynomial $P\in \F[x_1,\ldots,x_n]$ and any $m\in [0,n]$, let  $\mathrm{NZ}_m(P)$ denote the set of points of $\{0,1\}^n_m$ where $P$ does not vanish. Let $\psi_m(P)$ denote $|\mathrm{NZ}_m(P)|/\binom{n}{m}.$
	
	\begin{lemma}
	\label{lem:repetition}
	 For any $Q\in \F[x_1,\ldots,x_n]$ and any $r\geq 1$, there is a probabilistic polynomial $\bm{Q^{(r)}}$ of degree at most $r\cdot \deg(Q)$ such that for all $m\in [0,n]$, $\avg{\bm{Q^{(r)}}}{\psi_m(\bm{Q^{(r)}})} = \psi_m(Q)^r.$
	\end{lemma}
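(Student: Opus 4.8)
The plan is to construct $\bm{Q^{(r)}}$ as a product of $r$ independent, randomly variable-permuted copies of $Q$. Concretely, I would sample permutations $\sigma_1,\dots,\sigma_r$ independently and uniformly from the symmetric group $S_n$, set $Q_i(x_1,\dots,x_n) := Q(x_{\sigma_i(1)},\dots,x_{\sigma_i(n)})$, and define $\bm{Q^{(r)}} := \prod_{i=1}^r Q_i$. Since permuting variables does not change the degree, each $Q_i$ has degree exactly $\deg(Q)$, so $\deg(\bm{Q^{(r)}}) \le r\cdot\deg(Q)$; and since $S_n$ is finite, the induced distribution has finite support, so this is a legitimate probabilistic polynomial. (If one wishes $\bm{Q^{(r)}}$ to be multilinear, one may multilinearize each sampled polynomial; this does not change its values on $\{0,1\}^n$ and only decreases the degree, so both the degree bound and every quantity $\psi_m$ are unaffected.)

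The core of the argument is to compute $\prob{}{\bm{Q^{(r)}}(a)\neq 0}$ for a fixed $a\in\{0,1\}^n_m$. The point is that $\bm{Q^{(r)}}(a)\neq 0$ precisely when $Q_i(a)\neq 0$ for every $i$, i.e. when $Q(\sigma_i\cdot a)\neq 0$ for every $i$, where $\sigma_i\cdot a$ denotes the string obtained from $a$ by permuting its coordinates according to $\sigma_i$. The one small fact that needs to be checked is that, for a fixed string $a$ of Hamming weight $m$, a uniformly random permutation $\sigma\in S_n$ maps $a$ to a uniformly random element of $\{0,1\}^n_m$: this holds because $S_n$ acts transitively on $\{0,1\}^n_m$ and the pushforward of the uniform measure on $S_n$ is $S_n$-invariant on the orbit, hence uniform. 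Consequently $\prob{\sigma}{Q(\sigma\cdot a)\neq 0} = |\mathrm{NZ}_m(Q)|/\binom{n}{m} = \psi_m(Q)$, and by independence of $\sigma_1,\dots,\sigma_r$ we obtain $\prob{}{\bm{Q^{(r)}}(a)\neq 0} = \psi_m(Q)^r$.

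Finally I would average over the whole $m$-th layer, using linearity of expectation to interchange the sum over $a\in\{0,1\}^n_m$ with the expectation over $\bm{Q^{(r)}}$: $\avg{\bm{Q^{(r)}}}{\psi_m(\bm{Q^{(r)}})} = \frac{1}{\binom{n}{m}}\sum_{a\in\{0,1\}^n_m}\prob{}{\bm{Q^{(r)}}(a)\neq 0} = \frac{1}{\binom{n}{m}}\cdot\binom{n}{m}\cdot\psi_m(Q)^r = \psi_m(Q)^r$. Since the same random object $\bm{Q^{(r)}}$ is used for every $m$, this establishes the claim for all $m\in[0,n]$ simultaneously.

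There is essentially no serious obstacle here; the only step requiring a moment's care is the transitivity/symmetry observation that a uniformly random permutation sends a fixed weight-$m$ string to a uniformly random weight-$m$ string, which is what converts the "random permutation of $Q$" into a uniform sample over the layer $\{0,1\}^n_m$.
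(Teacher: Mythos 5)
Your proposal is correct and matches the paper's proof essentially verbatim: both construct $\bm{Q^{(r)}}$ as a product of $r$ independently and uniformly variable-permuted copies of $Q$, observe that a random permutation sends a fixed weight-$m$ point to a uniform point of $\{0,1\}^n_m$, use independence to get the pointwise probability $\psi_m(Q)^r$, and average over the layer. The brief remark about multilinearization is an unobjectionable extra; nothing else differs.
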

	
	\begin{proof}
	For a permutation $\pi\in S_n,$ and $a\in \{0,1\}^n,$ define $a^\pi = (a_{\pi(1)},\ldots, a_{\pi(n)}).$ Also, define $Q^\pi(x_1,\ldots,x_n) = Q(x^\pi) = Q(x_{\pi(1)},\ldots,x_{\pi(n)}).$
	
	For a uniformly random $\bm{\pi}\in S_n,$ and any $a\in \{0,1\}^n_m$, the probabilistic polynomial $Q^{\bm{\pi}}$ satisfies 
	\[
	\prob{\bm{\pi}}{Q^{\bm{\pi}}(a) \neq 0} = \prob{\bm{\pi}}{Q(a^{\bm{\pi}}) \neq 0} =  \prob{\bm{\pi}}{a^{\bm{\pi}}\in \mathrm{NZ}_m(Q)} = \psi_m(Q)
	\]
	as $a^{\bm{\pi}}$ is uniformly distributed over $\{0,1\}^n_m.$
	
	Choose $\bm{\pi_1},\ldots,\bm{\pi_r}$ i.u.a.r. from $S_n,$ and define $\bm{Q^{(r)}} = \prod_{i=1}^r Q^{\bm{\pi_i}}.$ For any $a\in \{0,1\}^n_m$
	\[
	\prob{\bm{Q^{(r)}}}{\bm{Q^{(r)}}(a) \neq 0} = (\psi_m(Q))^r.
	\]
	In particular, the above holds for a uniformly random $\bm{a}$ chosen from $\{0,1\}^n_m.$ Hence, we have 
	\[
	\avg{\bm{Q^{(r)}}}{\psi_m(\bm{Q^{(r)}})} = \prob{\bm{Q^{(r)}}, \bm{a}\sim \{0,1\}^n_m}{\bm{Q^{(r)}}(\bm{a}) \neq 0} = \psi_m(Q)^r.
	\]
	\end{proof}
	
	We are now ready to prove the main lemma in its full generality.
	
	\begin{proof}[Proof of Lemma~\ref{lem:main}]
	W.l.o.g. we assume that $k\leq n/2$. (To prove the lemma for $k > n/2$, consider the polynomial $Q(x) = P(1-x_1,\ldots,1-x_n)$ instead.)
	
	We first reduce to the case where $K = n/2$. 
	
	More precisely, note that there exist non-negative integers $r\leq 2q$ and $s$ so that $2(K-r) = n-r-s.$  This can be seen by a simple case analysis. If $K = k-q$, we can choose $r=0$, $s = n-2k + 2q$; if $K = k+q$ and $n-2k\geq 2q$, we can choose $r=0$ and $s = n-2k -2q$; and if $K = k+q$ and $n-2k < 2q$, we can choose $r = 2q-(n-2k)$ and $s = 0$.
	
	Having chosen $r,s$ as above, we set $K' = K-r$, $k' = k-r$ and $n' = n-r-s$. Let $\bm{S}$ be a uniformly random subset of $[n]$ of size $r+s$ and $\bm{y}$ a uniformly random point in $\{0,1\}^{r+s}_r$. We set $P_{\bm{S},\bm{y}}(x_i: i\not\in S)$ to be the probabilistic polynomial obtained by setting all the variables indexed by $\bm{S}$ according to $\bm{y}$. Note that we have
	\[
	\avg{\bm{S},\bm{y}}{\psi_{k'}(P_{\bm{S},\bm{y}})} = \psi_k(P) =: \varepsilon_0 \phantom{xxx}\text{ and }\phantom{xxx} \avg{\bm{S},\bm{y}}{\psi_{K'}(P_{\bm{S},\bm{y}})} = \psi_K(P) =: \varepsilon_1.
	\]
	By Markov's inequality, we have
	\[
	\prob{\bm{S},\bm{y}}{\psi_{k'}(P_{\bm{S},\bm{y}}) > \frac{2\varepsilon_0}{\varepsilon_1}} < \frac{\varepsilon_1}{2} \phantom{xxx}\text{ and }\phantom{xxx} \prob{\bm{S},\bm{y}}{\psi_{K'}(P_{\bm{S},\bm{y}}) > \frac{\varepsilon_1}{2}} \geq \frac{\varepsilon_1}{2}.
	\]
	Hence, with positive probability over the choice of $\bm{S}$ and $\bm{y},$ we have both $\psi_{k'}(P_{\bm{S},\bm{y}}) \leq 2\varepsilon_0/{\varepsilon_1}$ and $\psi_{K'}(P_{\bm{S},\bm{y}}) > {\varepsilon_1}/{2}.$ We fix such a choice $S,y$ for $\bm{S},\bm{y}$ and let $P'$ denote $P_{S,y}.$ Clearly, $\deg(P) \geq \deg(P')$ and hence it suffices to lower bound $\deg(P').$ 
	
	We will now use Lemma~\ref{lem:Dij} to obtain the desired lower bound on $\deg(P').$ First of all, note that $\ell' := q^2/n'$ satisfies
	\[
	\ell' = \frac{q^2}{n'}\leq \frac{k^2}{10000 n'} \leq \frac{n'}{10000},
	\]
	by the bounds on $q$ in the statement of the lemma and the fact that $k\leq 2K = n'$. 
	
	We consider now two cases. 
	
	\paragraph{Case 1:} Assume first that $\ell'\geq 100.$ Using the bounds on $\varepsilon_0$ and $\varepsilon_1$ that follow from the lemma statement and the bounds above,  $P'$ is a polynomial in $n'$ variables satisfying
	\begin{align*}
	\psi_{k'}(P') &\leq \frac{2\varepsilon_0}{\varepsilon_1}\leq 2\varepsilon_0^{0.99}\leq  2\exp(-99 \delta^2 n/\alpha) = 2\exp(-99\delta^2n^2/(\alpha n)) \leq 2\exp(-99q^2/n'), \text{  and } \\
	\psi_{K'}(P') &\geq  \frac{\varepsilon_1}{2}\geq  \frac{1}{2}\exp(-(1/100)\cdot \delta^2 n/\alpha)=  \frac{1}{2}\exp(-(1/100)\cdot \delta^2 n^2/(\alpha n))\\
 &\geq \frac{1}{2} \exp(-(1/25)\cdot q^2/n') .
	\end{align*}
	where  we have used the inequalities $n' \geq 2(k-q) \geq \alpha n$ and $n' = 2K' \leq 2(k+q) \leq 4\alpha n.$

	Define $\varepsilon = \exp(-2\ell').$ Note that we have $\varepsilon \geq \exp(-n'/5000)$ by the bound on $\ell'$ above. Further,
	\begin{align*}
	\psi_{(n'/2) - q}(P') &= \psi_{k'}(P') \leq 2\exp(-99q^2/n') = 2\exp(-99\ell') \leq \exp(-2\ell') = \varepsilon, \text{ and }\\
	\psi_{n'/2}(P') &= \psi_{K'}(P') \geq \frac{1}{2} \exp(-(1/25)\cdot q^2/n') \geq \exp(-\ell'/2).
	\end{align*}
	Applying Lemma~\ref{lem:Dij} to $P'$ (see also Remark~\ref{rem:negation}), we immediately obtain $\deg(P') \geq q/25$ and hence we are done in this case.
	
	\paragraph{Case 2:} Now consider the case when $\ell'< 100$. In this case, the hypothesis of the lemma assures us that $\varepsilon_0 \leq 1/1000$ and $\varepsilon_1 \geq \exp(-q^2/100\alpha n) \geq \exp(-\ell'/25) \geq e^{-4}$ where the second inequality uses $n'\leq 4\alpha n$ as argued above. Then, we have
	\begin{subequations}
	\label{eq:mainlem2}
	\begin{align}
	\psi_{k'}(P') &\leq \frac{2\varepsilon_0}{\varepsilon_1}\leq 2\varepsilon_0^{0.99}\leq \frac{1}{400},\label{eq:mainlem2a}\\
	\psi_{K'}(P') &\geq  \frac{\varepsilon_1}{2} \geq \frac{\varepsilon_0^{0.01}}{2}\geq\frac{1}{2^{100/99}}\cdot \psi_{k'}(P')^{1/99}\geq  \psi_{k'}(P')^{1/7},\label{eq:mainlem2b}\\
	\psi_{K'}(P') &\geq  \frac{\varepsilon_1}{2} \geq e^{-5}.\label{eq:mainlem2c}
	\end{align}
	\end{subequations}
	where (\ref{eq:mainlem2b}) uses $\varepsilon_0^{0.01}\geq (\psi_{k'}(P')/2)^{1/99}$ and $\psi_{k'}(P')\leq 1/400$, both of which follow from (\ref{eq:mainlem2a}).
	
	Let $r$ be a large constant that will be fixed below. By Lemma~\ref{lem:repetition}, we know that there is a probabilistic polynomial $\bm{P'^{(r)}}$ of degree at most $r\cdot \deg(P')$ such that for each $m\in \{k',K'\}$, we have $\avg{\bm{P'^{(r)}}}{\psi_m(\bm{P'^{(r)}})} = \psi_m(P')^r.$ 
	
	The proof will proceed by another restriction to $n''$ variables, where $n''$ is defined to be the largest even integer such that $100 n'' \leq q^2.$  We assume that $n''$ is greater than a large enough absolute constant, since otherwise $q$ is upper bounded by a fixed constant, in which case the degree bound to be proved is trivial. Note that $\ell'' := q^2/n'' \geq 100$ by definition. We also have $n'' = (q^2/100) - 2,$ which implies that $\ell'' \leq 100 + O(1)/q^2 \leq 101$, as long as $q$ is greater than a large enough absolute constant.
	
	Relabel the variables so that $P'$ is a polynomial in $x_1,\ldots,x_{n'}.$ Let $\bm{T}$ be a uniformly random subset of $[n']$  of size $n'-n''$ and let $\bm{z}$ be a uniformly random point in $\{0,1\}^{n'-n''}_{(n'-n'')/2}.$ Define the probabilistic polynomial $\bm{P'^{(r)}}_{\bm{T},\bm{z}}$ obtained by setting the variables indexed by $\bm{T}$ according to $\bm{z}$ in the probabilistic polynomial $\bm{P'^{(r)}}.$ Let $K'' := n''/2$ and $k'' := k'-(n'-n'')/2$. As above, we have 
	\[
	\avg{\bm{P'^{(r)}},\bm{T},\bm{z}}{\psi_{k''}(\bm{P'^{(r)}}_{\bm{T},\bm{z}})} = \psi_{k'}(P')^r =: \varepsilon_0' \phantom{xxx}\text{ and }\phantom{xxx} \avg{\bm{P'^{(r)}},\bm{T},\bm{z}}{\psi_{K''}(\bm{P'^{(r)}}_{\bm{T},\bm{z}})} = \psi_{K'}(P')^r=: \varepsilon_1'.
	\]
	
	Let $r$ be the smallest positive integer so that $\varepsilon_0'= \psi_{k'}(P')^r \leq e^{-300}.$ Note that $r$ is upper bounded by an absolute constant, as $\psi_{k'}(P')\leq 1/400$ by (\ref{eq:mainlem2a}). Further, we have $\psi_{k'}(P')^{r-1} > e^{-300}$ and hence
	\[
	\varepsilon_1' = \psi_{K'}(P')^r =\psi_{K'}(P')^{r-1}\cdot \psi_{K'}(P') \geq \left((\psi_{k'}(P'))^{r-1}\right)^{1/7}\cdot e^{-5} > e^{-48}
	\]
	where the first inequality uses (\ref{eq:mainlem2}).
	
	By Markov's inequality as above, there is a fixed choice of $\bm{P'^{(r)}}, \bm{T},$ and $\bm{z}$ such that the corresponding polynomial $P''$ is a polynomial on $n''$ variables satisfying
	\[
	\psi_{k''}(P'') \leq \frac{2\varepsilon_0'}{\varepsilon_1'} < e^{-210} < e^{-2\ell''} \phantom{xxx}\text{ and }\phantom{xxx} \psi_{K''}(P'') \geq \frac{\varepsilon_1'}{2} > e^{-50} \geq e^{-\ell''/2}.
	\]
	Applying Lemma~\ref{lem:Dij} to $P''$ with error parameter $\varepsilon = \frac{2\varepsilon_0'}{\varepsilon_1'} $ yields $\deg(P'') \geq q/25.$ As $\deg(P'')\leq r\cdot \deg(P'),$ we also get $\deg(P') = \Omega(q)$, finishing the proof in this case as well. (Note that the $\Omega(\cdot)$ hides an absolute constant.)
	\end{proof}
	
	\subsection{Tightness of the Main Lemma (Lemma~\ref{lem:main})}
	\label{sec:tightness}
	
	In this section, we discuss the near-optimality of Lemma~\ref{lem:main} w.r.t. to the various parameters. Fix $n,k,q,\alpha,\delta$ and $\F$ as in the statement of Lemma~\ref{lem:main}. Assume that $K = k+q$ (the case when $K = k-q$ is similar) and that $k\leq n/2$. Let $\varepsilon\in (0,1)$ be arbitrary. 
	
	First of all, we note that the degree lower bound obtained cannot be larger than $q$, because by Corollary~\ref{cor:lucas}, it follows that there is a degree-$q$ polynomial that vanishes at all points of weight $k$ but no points of weight $K$. 
	
	So, the statement of Lemma~\ref{lem:main} proves a lower bound on the degree that nearly (up to constant factors) matches this trivial upper bound, under the weaker assumption that the polynomial is forced to be zero only on most (say a $1-\varepsilon$ fraction) of $\{0,1\}^n_k$ and non-zero on most (say a $1-\varepsilon$ fraction) of $\{0,1\}^n_K$. (Lemma~\ref{lem:main} is a stronger statement, but we will show that even this weaker statement is tight.)
	
	In this section, we show that the value of $\varepsilon$ cannot be increased beyond $\varepsilon = \exp(-O(\delta^2n/\alpha))$, if we want to prove a lower bound of $\Omega(q)$ on the degree. More precisely, we show the following.
	
	\begin{theorem}
	\label{thm:tight}
	Assume that $\varepsilon = \exp(-o(\delta^2n/\alpha)).$ Then, there is a polynomial $P$ of degree $o(q)$ such that 
	\begin{align*}
	\prob{\bm{a}\sim \{0,1\}^n_{k}}{P(\bm{a}) \neq 0} &\leq \varepsilon\\
	\prob{\bm{a}\sim \{0,1\}^n_{K}}{P(\bm{a}) \neq 0} &\geq 1-\varepsilon.	
	\end{align*}
	\end{theorem}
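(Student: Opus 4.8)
The plan is to construct $P$ by a random sampling argument, in the spirit of the construction alluded to in Remark~\ref{rem:intromainlem}. Let $\bm{S}\subseteq[n]$ be a random subset obtained by including each coordinate independently with probability $\rho$ (to be chosen), and let $P_{\bm{S}}$ be a polynomial depending only on the variables indexed by $\bm{S}$, through their sum modulo a carefully chosen modulus $t$. Two observations make this effective. First, if $t=p^{\ell}$ is a power of $p$, then any function of $(\sum_{i\in \bm{S}}x_i)\bmod t$ is computable by a polynomial of degree at most $t-1$: by Lucas's theorem (Theorem~\ref{thm:lucas}) the $j$-th base-$p$ digit of a weight $w$ equals $\binom{w}{p^j}\bmod p$, which is represented by the degree-$p^j$ elementary symmetric polynomial, and an arbitrary function of the $\ell$ low-order digits is a polynomial of individual degree at most $p-1$ in them, hence of total degree at most $\sum_{0\le j<\ell}(p-1)p^j=t-1$. (This is exactly the periodic upper bound invoked in the proof of Claim~\ref{clm:Q1}.) Second, sampling ``smears out'' the Hamming weight: if $a$ has weight $w$, then $\sum_{i\in\bm{S}}a_i$ is distributed as $\mathrm{Bin}(w,\rho)$, so the weights $k$ and $K=k+q$ induce binomials whose means $k\rho$ and $k\rho+q\rho$ differ by exactly $q\rho$; choosing $\rho$ with $q\rho\approx t/2$ places these means at cyclic distance $\approx t/2$ modulo $t$, the largest possible separation.

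Concretely, I would take $t$ to be the least power of $p$ that is at least $C\alpha\ln(1/\varepsilon)/\delta$ for a large enough absolute constant $C$, set $\rho$ so that $q\rho\approx t/2$ (note $\rho=o(1)$ since $t=o(q)$), let $A\subseteq\mathbb{Z}/t\mathbb{Z}$ be a cyclic interval of length $\Theta(t)$ centred at $\lfloor k\rho\rfloor\bmod t$ (say of ``radius'' $t/10$, so that $A$ is disjoint from the radius-$t/10$ interval around $\lfloor K\rho\rfloor$), and let $P_{\bm{S}}(x)$ be the degree-$(t-1)$ polynomial representing the indicator of the complement of $A$ evaluated at $(\sum_{i\in\bm{S}}x_i)\bmod t$. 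Such a $t$ is well defined and satisfies $t=o(q)$: it obeys $t\le p\cdot C\alpha\ln(1/\varepsilon)/\delta=O(\alpha\ln(1/\varepsilon)/\delta)$, and the hypothesis $\varepsilon=\exp(-o(\delta^2n/\alpha))$ gives $\alpha\ln(1/\varepsilon)/\delta=(\alpha/\delta)\cdot o(\delta^2n/\alpha)=o(\delta n)=o(q)$. For the analysis, note that whenever $\mathrm{Bin}(k,\rho)$ is within $t/10-1$ of $k\rho$ it lands (mod $t$) inside $A$, so $P_{\bm{S}}$ vanishes there, and whenever $\mathrm{Bin}(K,\rho)$ is within $t/10-1$ of $K\rho$ it lands outside $A$, so $P_{\bm{S}}$ is nonzero there. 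By Bernstein's inequality (Lemma~\ref{lem:bernstein}), using $k\rho(1-\rho)\le k\rho=\Theta(\alpha t/\delta)$ (here $k\le n/2$, so $k/n=\alpha$; the same estimate holds with $K\le n$ in place of $k$), each of these deviation probabilities is at most $2\exp(-\Omega(t\delta/\alpha))$, which is at most $\varepsilon$ once $C$ is large enough. Hence $\prob{\bm{S}}{P_{\bm{S}}(a)\neq0}\le\varepsilon$ for every $a\in\{0,1\}^n_k$ and $\prob{\bm{S}}{P_{\bm{S}}(b)=0}\le\varepsilon$ for every $b\in\{0,1\}^n_K$. Averaging over $\bm{S}$ and swapping expectations gives $\avg{\bm{S}}{\psi_k(P_{\bm{S}})}\le\varepsilon$ and $\avg{\bm{S}}{1-\psi_K(P_{\bm{S}})}\le\varepsilon$, so by Markov's inequality and a union bound there is a fixed $S$ with $\psi_k(P_S)\le4\varepsilon$ and $\psi_K(P_S)\ge1-4\varepsilon$; setting $P:=P_S$ and rescaling $\varepsilon$ (equivalently, enlarging $C$) yields a polynomial of degree at most $t-1=o(q)$ with the exact bounds claimed.

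I expect the calibration of the modulus $t$ to be the main obstacle: $t$ must be a power of $p$, must be large enough ($\gtrsim\alpha\ln(1/\varepsilon)/\delta$) for the Bernstein tails to fall below $\varepsilon$, and must be small enough ($o(q)$) to give the degree bound — and these demands are simultaneously satisfiable \emph{precisely} under the stated hypothesis on $\varepsilon$, which is what makes the error bound in Lemma~\ref{lem:main} tight. The remaining ingredients — verifying the cyclic-interval disjointness, the Bernstein bookkeeping, the bounds $\rho<1$ and (for the counting identities) the standing assumption $k\le n/2$, and the degenerate regime where $\delta^2n/\alpha=O(1)$ (in which $\varepsilon\to1$, the two conditions on $P$ become essentially vacuous, and one may instead simply take $P$ to be a threshold or a low-order elementary-symmetric polynomial in $o(q)$ of the variables, tuned by its value modulo $p$) — are routine.
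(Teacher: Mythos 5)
Your proposal is correct, and it reaches the same destination as the paper's proof, but by a recognizably different polynomial construction. The paper's argument also proceeds by sampling coordinates and applying Bernstein and Markov, but the inner polynomial is not a modular one: it samples $m = C(\alpha/\delta^2)\log(1/\varepsilon)$ indices $\bm{i_1},\ldots,\bm{i_m}$ i.i.d.\ uniformly from $[n]$ (with replacement) and feeds the sampled bits into the Alman--Williams interpolation polynomial (Lemma~\ref{lem:AW}) --- a degree-$O(\delta m)$ multilinear polynomial over $\mathbb{Z}$ that is prescribed to be $0$ on a weight window around $\alpha m$ and $1$ on a disjoint window around $(\alpha+\delta)m$, then reduced mod $p$. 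This gives a direct $O(\delta m)=O((\alpha/\delta)\log(1/\varepsilon))=o(q)$ degree bound with no appeal to periodicity or to the characteristic at all (the characteristic only enters when reducing coefficients). Your construction instead exploits the same Lucas/periodicity mechanism used in the proof of Claim~\ref{clm:Q1}: you choose $t$ a power of $p$, sample a $\rho$-random subset $\bm{S}$ calibrated so $q\rho\approx t/2$, and use a degree-$(t-1)$ polynomial in $(\sum_{i\in\bm{S}}x_i)\bmod t$. This buys you a construction that is self-contained within the paper's existing toolkit (no AW lemma needed), and it makes transparent \emph{why} the bound on $\varepsilon$ in Lemma~\ref{lem:main} is tight --- the constraints ``$t\ge C\alpha\ln(1/\varepsilon)/\delta$ for the Bernstein tail'' and ``$t=o(q)$ for the degree'' are simultaneously satisfiable exactly under $\varepsilon=\exp(-o(\delta^2n/\alpha))$, precisely as you say. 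The trade-off is that you must keep track of a cyclic interval and verify the two residue windows stay disjoint, whereas the paper's interpolation polynomial makes the separation of the two weight windows immediate. Both the Bernstein bookkeeping (variance $k\rho(1-\rho)=\Theta(\alpha t/\delta)$, deviation $\Theta(t)$, hence tail $\exp(-\Omega(t\delta/\alpha))$) and the final Markov/union-bound fixing of $\bm{S}$ in your writeup check out.
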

	
	\begin{proof}
		To prove this theorem, we analyze a different polynomial construction to achieve this based on sampling.  We will need the following interpolation lemma that can be found in a paper of Alman and Williams~\cite{AW}.\footnote{This lemma has a trivial proof via univariate polynomial interpolation if we only want the polynomial $Q$ to have rational coefficients. However, here it important that $Q$ has integer coefficients.}
	\begin{lemma}
	\label{lem:AW}
	Let $n$ be arbitrary and $I\subseteq [0,n]$ be any interval of integers. Given any $f:I\rightarrow \{0,1\}$, there is a multilinear polynomial $Q\in \mathbb{Z}[x_1,\ldots,x_n]$ of degree at most $|I|-1$ such that $Q(a) = f(|a|)$ for each $a\in \bigcup_{i\in I}\{0,1\}^n_i$.
	\end{lemma}
	
	 Fix any positive integer $m$. By Lemma~\ref{lem:AW}, it follows that there is a multilinear polynomial $Q\in \mathbb{Z}[y_1,\ldots,y_m]$ of degree $O(\delta m)$ such that $Q(b) = 0$ for each $b\in \{0,1\}^m$ such that $|b|\in ((\alpha-\delta/2)m, (\alpha+\delta/2)m)$ and $Q(b) = 1$ for each $b\in \{0,1\}^m$ such that $|b|\in ((\alpha+\delta/2)m, (\alpha+3\delta/2)m)$. Reducing the coefficients modulo $p$, we obtain a polynomial $\tilde{Q}\in \F[y_1,\ldots,y_m]$ with the same property. Fix this $\tilde{Q}.$
	
	Consider the probabilistic polynomial $\bm{P}(x_1,\ldots,x_n)$ defined as follows. Choose $\bm{i_1},\ldots,\bm{i_m}$ i.u.a.r. from $[n]$ where $m = C \cdot (\alpha/\delta^2)\log(1/\varepsilon)$ for a large enough constant $C$ we will fix below. We define $\bm{P}(x_1,\ldots,x_n)$ to be the polynomial $\tilde{Q}(x_{\bm{i_1}},\ldots,x_{\bm{i_m}})$. Note that 
	
	\[
	\deg(\bm{P}) \leq \deg(Q) = O(\delta m ) = O((\alpha/\delta)\log(1/\varepsilon)) = o(\delta n) = o(q)
	\]
	where the second-last equality uses our assumption that $\varepsilon = \exp(-o(\delta^2 n/\alpha)).$
	
	Let $a\in \{0,1\}^n_{k}$ be arbitrary. We analyze the random variable $\bm{P}(a)$. Note that as long as the Hamming weight of $\bm{b} = (a_{\bm{i_1}},\ldots,a_{\bm{i_m}})$ is in the interval $((\alpha-\delta/2)m, (\alpha+\delta/2)m)$, we have $\bm{P}(a) = 0.$ As each co-ordinate of $\bm{b}$ is $1$ with probability $k/n = \alpha\in [0,1/2]$, Bernstein's inequality (Lemma~\ref{lem:bernstein}) yields
	\begin{align*}
	\prob{\bm{P}}{\bm{P}(a) \neq 0} &\leq \prob{\bm{i_1},\ldots,\bm{i_m}}{||\bm{b}|-\alpha m| > \delta m/3}\leq \exp(-\Omega(\delta^2 m/\alpha)) < \varepsilon/2
	\end{align*}
	as long as $C$ is a large enough constant. In a similar way, we also see that for any $a\in \{0,1\}^n_{K}$, we have $\prob{\bm{P}}{\bm{P}(a) \neq 1} < \varepsilon/2$ and hence, in particular, $\prob{\bm{P}}{\bm{P}(a) \neq 0} >1- (\varepsilon/2)$, as long as $C$ is a large enough constant.
	
	In particular, by Markov's inequality and the union bound, we see that there is a $P$ of degree at most $\deg(\bm{P})$ such that
	\[
	\psi_k(P) \leq \varepsilon \phantom{xxxx}\text{ and }\phantom{xxxx} \psi_K(P) \geq 1-\varepsilon.
	\]
	Thus, we  have a polynomial $P$ as claimed in Theorem~\ref{thm:tight}.
	\end{proof}
	
	\subsection{An extension to the case when \texorpdfstring{$q$}{q} is not a power of \texorpdfstring{$p$}{p}}
	
	An anonymous reviewer suggested the following extension of the main lemma (Lemma~\ref{lem:main}). We prove this by a simple reduction to the main lemma. (This leads to a  worsening in the constants involved.)
	
	\begin{lemma}[An extension to the case when $q$ is not a power of $p$]
	\label{lem:main-extension}
	Assume that $\F$ is a field of characteristic $p$. Let $n$ be a growing parameter and assume  we have positive integer parameters $k,q$ such that $200q < k < n-200q$.  Let $q'$ be the largest power of $p$ that divides $q$ and assume $q = q's$. Define $\alpha = \min\{k/n,1-(k/n)\}$ and $\delta = q/n$. Assume that $Q\in \F[x_1,\ldots,x_n]$ is a polynomial such that for some $K \in \{k+q,k-q\}$,
	\begin{subequations}
	\label{eq:mainlem-extension}
	\begin{align}
	\prob{\bm{a}\sim \{0,1\}^n_{k}}{Q(\bm{a}) \neq 0} &\leq \min\{e^{-1000\delta^2n/s\alpha},1/2000\}	\label{eq:mainlem-extension-a}\\
	\prob{\bm{a}\sim \{0,1\}^n_{K}}{Q(\bm{a}) \neq 0} &\geq e^{-\delta^2n/1000s\alpha}.	\label{eq:mainlem-extension-b} 
	\end{align}
	\end{subequations}
	Then, $\deg(Q) = \Omega(q'),$ where the $\Omega(\cdot)$ hides an absolute constant. 
	\end{lemma}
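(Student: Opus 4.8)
The plan is to reduce to Lemma~\ref{lem:main} by replacing the gap $q=q's$, which need not be a power of $p$, by one that is. Two normalizations come first. The layer-$k$ hypothesis, the layer-$K$ hypothesis, the constraint on $k$, and the quantities $\alpha,\delta$ are all symmetric under $k\mapsto n-k$ (with $K\mapsto n-K$, which merely exchanges the two cases $K\in\{k+q,k-q\}$), and this symmetry is realized by replacing $Q$ with $Q(1-x_1,\dots,1-x_n)$; so we may assume $k\le n/2$, hence $\alpha=k/n$, and I will treat $K=k+q$, the case $K=k-q$ being identical with the chain below running downwards. Secondly, if $s=1$ there is nothing to do: $q=q'$ is already a power of $p$, and a comparison of constants ($e^{-1000\delta^2n/\alpha}\le e^{-100\delta^2n/\alpha}$, $e^{-\delta^2n/1000\alpha}\ge e^{-\delta^2n/100\alpha}$, $1/2000\le 1/1000$, and $200q<k<n-200q$ forces $100q<k<n-100q$) shows the present hypotheses imply those of Lemma~\ref{lem:main} applied to $Q$ with this $q$, which therefore gives $\deg(Q)=\Omega(q')$ directly. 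So assume $s\ge 2$.

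Write $m_j=k+jq'$ for $0\le j\le s$ and $\rho_j=\psi_{m_j}(Q)$; thus $m_0=k$, $m_s=k+q$, $\rho_0\le\min\{e^{-1000\gamma},1/2000\}$ and $\rho_s\ge e^{-\gamma/1000}$, where $\gamma:=\delta^2n/(s\alpha)=q'^2s/(n\alpha)$. A degenerate case is handled at once: if $\rho_j=0$ for some $j<s$, let $j^*$ be the least index exceeding $j$ with $\rho_{j^*}>0$ (it exists since $\rho_s>0$); then $\rho_{j^*-1}=0$, so $Q$ vanishes on \emph{all} of the weight-$m_{j^*-1}$ layer and is nonzero at some point of the weight-$(m_{j^*-1}+q')$ layer, and Lemma~\ref{lem:Hegedus}, applied with the gap $q'$ (a power of $p$), already gives $\deg(Q)\ge q'$. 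So we may assume every $\rho_j>0$, whence $\sum_{j=0}^{s-1}\bigl(\ln(1/\rho_j)-\ln(1/\rho_{j+1})\bigr)=\ln(1/\rho_0)-\ln(1/\rho_s)\ge 1000\gamma-\gamma/1000\ge 999\gamma$, i.e.\ the chain ``rises'' by $\ge 999\gamma/s$ per step on average.

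What remains is to exhibit a single step of this chain at which the two hypotheses of Lemma~\ref{lem:main} hold ($\psi$ small at the bottom layer, $\psi$ large at the top). The tension is this: used with the plain gap $q'$, the required lower bound $e^{-\delta^2n/100\alpha}$ of~(\ref{eq:mainlem0b}) reads $e^{-q'^2/(100n\alpha)}=e^{-\gamma/(100s)}$, which for large $s$ is larger than the available $\rho_s\ge e^{-\gamma/1000}$, so no single step of size $q'$ suffices; the fix is to coarsen a step to a jump $m_j\to m_{j+p^a}$ of size $p^aq'=p^{a+\ell}$ (still a power of $p$), chosen with $p^{2a}$ of order $s$, so that — using $q'^2s/(n\alpha)=\gamma$ — the top-layer requirement becomes $e^{-p^{2a}q'^2/(100n\alpha)}=e^{-\Theta(\gamma)}$, which $\rho_s\ge e^{-\gamma/1000}$ does supply, while the very strong bound $\rho_0\le e^{-1000\gamma}$ (note the factor $1000$ and the $1/s$ baked into the hypothesis) keeps the bottom layer tiny; the slack between $1000$ and $100$ and between $1/2000$ and $1/1000$ is exactly what this balancing consumes, accounting for the worsened constants. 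The obstacle — and the hard part — is that after coarsening, the bottom layer $m_{j+p^a}$ of a jump ending at $m_s$ is an interior layer where we know nothing about $\rho$; my plan is to first multiply $Q$ by a symmetric ``cleaning'' polynomial built from a few Lucas-type indicators of Corollary~\ref{cor:lucas} (e.g.\ products $\prod_c\bigl((\sum_{|S|=q'}\prod_{i\in S}x_i)-c\bigr)$, nonzero on one residue class of layers modulo $p$ by Wilson's theorem), which zeroes out the undesired layers at degree cost only $O_p(q')$ while leaving $\psi_{k+q}$ intact, iterating a bounded number of times to thin the chain enough; one must then check that Lemma~\ref{lem:main} applied to the modified polynomial returns a degree bound $\Omega(p^{a+\ell})$ that beats the $O(q')$ degree of the auxiliary factors (which it does once $p^{a+\ell}\sim q'\sqrt s\gg q'$ for large $s$, while for bounded $s$ one takes $a=0$ and the plain chain), leaving $\deg(Q)=\Omega(q')$. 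Getting the coarsening, the thinning and the choice of $a$ to mesh cleanly in the large-$s$ regime is where the real work lies.
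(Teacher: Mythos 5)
Your approach --- stepping through the chain $m_j=k+jq'$ and coarsening it by multiplying $Q$ with Lucas-type cleaners --- is genuinely different from the paper's. The paper instead shrinks the number of variables: writing $k=k's+r_1$ and $n=n's+s+r_2$, it repeats each of $n'$ fresh variables $s$ times, pads with the fixed word $1^{r_1}0^{s+r_2-r_1}$, and applies a uniformly random permutation of coordinates, so that a random weight-$k'$ (resp.\ weight-$(k'+q')$) point of the new cube maps to a uniformly random weight-$k$ (resp.\ weight-$(k+q)$) point of the old one. This converts a gap of $q$ into a gap of $q'$ for free in degree, and Markov's inequality on the random substitution then yields a deterministic restriction to which Lemma~\ref{lem:main} applies, after a case analysis to check its error hypotheses.

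The chain-plus-cleaning route does not close, and the gap is structural, not a matter of tuning constants. To apply Lemma~\ref{lem:main} to a jump $m_{s-p^a}\to m_s$ of size $p^{a+\ell}$ you need $\psi_{m_{s-p^a}}$ of the cleaned polynomial to be small, but you control nothing about $\rho_{s-p^a}$, so the cleaner $R$ must by itself vanish on the entire weight-$m_{s-p^a}$ layer while staying nonzero at $m_s$. Since $R$ is symmetric it is constant on each layer, and those two layers are at distance exactly $p^{a+\ell}$, so Lemma~\ref{lem:Hegedus} itself forces $\deg(R)\ge p^{a+\ell}$. Your iterated digit-by-digit cleaning lands in the same place: killing all but one residue of digit $\ell$, then of digit $\ell+1$, up to digit $a+\ell-1$ costs $\sum_{i=0}^{a-1}(p-1)p^{\ell+i}=p^{a+\ell}-p^\ell\ge p^{a+\ell}/2$ once $a\ge 1$, which is not the $O_p(q')$ your plan requires. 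Meanwhile Lemma~\ref{lem:main}, whose constant is inherited from the $t/25$ of Lemma~\ref{lem:Dij}, returns at most $\deg(QR)\ge p^{a+\ell}/25$, so $\deg(Q)\ge\deg(QR)-\deg(R)$ is a negative quantity: the cleaning needed to set up the coarse jump already costs, by the very lemma you are extending, more degree than the jump gives back, for every $a\ge 1$. There is no choice of $a$ in the large-$s$ regime where this arithmetic comes out positive. Your handling of the degenerate case ($\rho_j=0$, falling back on Lemma~\ref{lem:Hegedus}) and the $s=1$ reduction are correct, but the large-$s$ coarsening --- which you flagged as ``where the real work lies'' --- is where the argument actually breaks.
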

	
	\begin{remark}
	\label{rem:mainlem-extend-errorfree}
	The `non-robust' version of this lemma (when $Q$ vanishes everywhere on $\{0,1\}^n_k$ but not on some point in $\{0,1\}^n_K$) yields a degree lower bound of $q'$, and can be proved using similar techniques to those used in proving Heged\H{u}s's lemma. A proof can be found in~\cite{SV-arxiv}.
	\end{remark}
	
	\begin{remark}
	\label{rem:mainlem-extend-tight}
	As in the case of the main lemma, the degree lower bound obtained above is tight, using the same reasoning as in Section~\ref{sec:tightness}.
	\end{remark}
	
	\begin{proof}
	W.l.o.g. assume $K = k+q$. 
	
	Let $k' = \lfloor k/s\rfloor$ and $n' = \lfloor n/s\rfloor - 1$. Our aim will be to show using the polynomial $Q$ that there is a polynomial $P$ on $n'$ variables that distinguishes between Hamming weights $k'$ and $K' := k'+q'$. We will then appeal to Lemma~\ref{lem:main} to get the degree lower bound. 
	
	It is easy to check that $100q' < k' < n'-100q'$ as
	\begin{align*}
	100 q's = 100 q &< k - q < (k'+1)s - q \leq k's \\
	k's \leq k < n - 102 q &< (n'+2)s - 102q \leq n's - 100q = (n'-100q')s
	\end{align*}
	where we used the hypotheses that $200q < k < n-200q.$
	
	We construct the polynomial $P$ as follows. 	Assume that $k = k's + r_1$ and $n = n's+ s + r_2$ for $r_1,r_2\in \{0,\ldots,s-1\}.$ On an input $x\in \{0,1\}^{n'}$, we consider the \emph{random} input $\bm{y}\in \{0,1\}^n$ defined as follows.
	\begin{itemize}
	\item Each co-ordinate of $x$ is repeated $s$ times to get an $X\in \{0,1\}^{sn'}$.
	\item We concatenate $X$ with the string $1^{r_1} 0^{s+r_2-r_1}$ to get a string $Y\in \{0,1\}^{n}.$
	\item A uniformly random permutation $\bm{\pi}$ is applied to the $n$ coordinates of $Y$ to get $\bm{y}.$
	\end{itemize}
	Finally, we define the probabilistic polynomial $\bm{P}(x) := Q(\bm{y}).$ For a fixed permutation $\bm{\pi}$, each coordinate of $\bm{y}$ is a polynomial of degree at most $1$ in the variables $x_1,\ldots, x_{n'},$ and hence, $\deg(\bm{P}) \leq \deg(Q).$ We will show that there is some polynomial $P$ in the support of $\bm{P}$ that has the desired properties.
	
	Let $\varepsilon_0$ and $\varepsilon_1$ denote the right hand sides of inequalities (\ref{eq:mainlem-extension-a}) and (\ref{eq:mainlem-extension-b}) respectively. Observe that when $x\in\{0,1\}^{n'}_w$, then the random $\bm{y}\in \{0,1\}^n$ is uniformly distributed over $\{0,1\}^n_{ws + r_1}.$ In particular, setting $w = k'$ and $k'+q'$, we get the following.
	\begin{align*}
	\prob{\bm{a}\sim \{0,1\}^{n'}_{k'},\bm{P}}{\bm{P}(\bm{a}) \neq 0} = \prob{\bm{b}\sim \{0,1\}^{n}_{k}}{Q(\bm{b}) \neq 0} &\leq \varepsilon_0\\
	\prob{\bm{a}\sim \{0,1\}^{n'}_{k'+q'},\bm{P}}{\bm{P}(\bm{a}) \neq 0} = \prob{\bm{b}\sim \{0,1\}^{n}_{k+q}}{Q(\bm{b}) \neq 0} &\geq \varepsilon_1.	
	\end{align*}
	
    To find a suitable fixing of $\bm{P}$, we consider two cases.
	
	\begin{itemize}
	\item \textbf{Case 1: $e^{-\delta^2n/250 s\alpha} \geq 1/2$:} In this case, define two events $\mc{E}_0$ and $\mc{E}_1$ (depending only on the probabilistic polynomial $\bm{P}$) as follows.
	\[
	\mc{E}_0 := \prob{\bm{a}\sim \{0,1\}^{n'}_{k'}}{\bm{P}(\bm{a}) \neq 0} \geq 2\varepsilon_0, \text{ and } \ \ 
	\mc{E}_1 := \prob{\bm{a}\sim \{0,1\}^{n'}_{k'+q'}}{\bm{P}(\bm{a}) = 0}  \geq 2.5 \zeta_1
	\]
	where $\zeta_1 := 1-\varepsilon_1.$ Note that $\varepsilon_1 = e^{-\delta^2n/1000 s\alpha} \geq 2^{-0.25} > 0.8$ and hence $\zeta_1 < 0.2.$
	
	By Markov's inequality, with positive probability over the choice of $\bm{P}$, neither of the above events occurs. Fix such a polynomial $P$. Then, we have 
	\begin{align}
		 \prob{\bm{a}\sim \{0,1\}^{n'}_{k'}}{P(\bm{a}) \neq 0} &<  2\varepsilon_0 = \min\{2e^{-1000\delta^2n/s\alpha},2/2000\}\notag\\ 
        &\leq \min\{e^{-200\delta^2n/s\alpha}, 1/1000\},\label{eq:lem-ext-case1a}
	\end{align}
	where we used the simple fact that for any non-negative real number $\gamma$, we have the inequality $\min\{2\gamma,1/1000\}\leq \min\{\gamma^{0.2},1/1000\}$. We also have
	\begin{equation}
	\label{eq:lem-ext-case1b}
		 \prob{\bm{a}\sim \{0,1\}^{n'}_{k'+q'}}{P(\bm{a}) \neq 0} >  1-2.5\zeta_1  \geq (1-\zeta_1)^5 = \varepsilon_1^5 \geq e^{-\delta^2n/200s\alpha},
	\end{equation}
	where the second inequality uses the fact that $\zeta_1 \leq 0.2$ for any $\gamma\in [0,1],$ we have\footnote{This is a special case of the Boole-Bonferroni inequalities, which are closely related to the Principle of Inclusion-Exclusion.} $(1-\gamma)^5 \leq 1-5\gamma + 10\gamma^2$.
	 
	 \item \textbf{Case 2: $e^{-\delta^2n/250 s\alpha} < 1/2$:}  In this case, we proceed analogously, but define the `bad' events as follows.
	 \[
	\mc{E}_0' := \prob{\bm{a}\sim \{0,1\}^{n'}_{k'}}{\bm{P}(\bm{a}) \neq 0} \geq \frac{2\varepsilon_0}{\varepsilon_1}, \text{ and } \ \ 
	\mc{E}_1' := \prob{\bm{a}\sim \{0,1\}^{n'}_{k'+q'}}{\bm{P}(\bm{a}) \neq 0} < \frac{\varepsilon_1}{2}.
	\]
	By Markov's inequality, there is again a fixing $P$ of $\bm{P}$ such that neither of the above two events occurs. For such a polynomial $P$, we have
	\begin{equation}
	\label{eq:lem-ext-case2b}
		 \prob{\bm{a}\sim \{0,1\}^{n'}_{k'+q'}}{P(\bm{a}) \neq 0} \geq  \frac{\varepsilon_1}{2} \geq e^{-\delta^2n/1000s\alpha}\cdot e^{-\delta^2n/250s\alpha} = e^{-\delta^2n/200s\alpha},
	\end{equation}
	where the second inequality used our assumption that $e^{-\delta^2n/250 s\alpha} < 1/2$. We also have
	\begin{align}
		 \prob{\bm{a}\sim \{0,1\}^{n'}_{k'}}{P(\bm{a}) \neq 0} &<   \frac{\varepsilon_0}{\varepsilon_1/2} \leq e^{\delta^2n/200s\alpha}\cdot \varepsilon_0 \leq e^{-999\delta^2 n/s\alpha}\notag\\
		 &\leq e^{200\delta^2n/s\alpha} \leq \min\{e^{-200\delta^2n/s\alpha}, 1/1000 \}\label{eq:lem-ext-case2a},
	\end{align}
	where the second inequality used (\ref{eq:lem-ext-case2b}) above and the third and last inequalities use the fact that $e^{-\delta^2n/250 s\alpha} < 1/2$ to deduce that $e^{-1000\delta^2n/s\alpha} \leq 1/2000$ and $e^{-200\delta^2n/s\alpha} < 1/1000$ respectively.
	\end{itemize}
	
	Putting (\ref{eq:lem-ext-case1a}), (\ref{eq:lem-ext-case1b}), (\ref{eq:lem-ext-case2a}) and (\ref{eq:lem-ext-case2b}) together gives us that in both cases we have
	
	\begin{subequations}
	\label{eq:fixP}
	\begin{align}
	\prob{\bm{a}\sim \{0,1\}^{n'}_{k'}}{P(\bm{a}) \neq 0} &\leq \min\{e^{-200\delta^2n/s\alpha}, 1/1000\}	\label{eq:fixPa}\\
	\prob{\bm{a}\sim \{0,1\}^{n'}_{k'+q'}}{P(\bm{a}) \neq 0} &\geq e^{-\delta^2n/200s\alpha}.	\label{eq:fixPb} 
	\end{align}
	\end{subequations}
	
	To apply Lemma~\ref{lem:main} to $P$, we need to relate the above bounds to quantities defined in terms of $\delta' := q'/n'$ and $\alpha' := k'/n'.$ We claim that
	
	\begin{equation}
	\label{eq:lem-ext-ineqs}
		\frac{\delta^2n}{s\alpha} \leq \frac{(\delta')^2n'}{\alpha'} \leq \frac{2\delta^2n}{s\alpha}.
	\end{equation}
	 Assuming these inequalities, we observe that $P$ satisfies the hypotheses of Lemma~\ref{lem:main}. Applying this lemma gives us 
	\[
	\deg(Q) \geq \deg(\bm{P}) \geq \deg(P) = \Omega(q'),
	\]
	finishing the proof of Lemma~\ref{lem:main-extension}.
	
	It remains to prove (\ref{eq:lem-ext-ineqs}), which is a simple calculation.
	\begin{align*}
	 \frac{(\delta')^2n'}{\alpha'} &= \frac{(\delta'n')^2}{\alpha'n'} = \frac{(q')^2}{k'} = \frac{(q's)^2}{k's^2} \geq \frac{q^2}{sk} = \frac{\delta^2 n}{s\alpha},\\
	 \frac{(\delta')^2n'}{\alpha'} &= \frac{(q')^2}{k'} =\frac{(q's)^2}{k's^2} \leq \frac{q^2}{s(k-s)} = \frac{\delta^2 n}{s\alpha}\cdot \left(1-\frac{s}{k}\right)^{-1} \leq \frac{2\delta^2 n}{s\alpha}
	\end{align*}
	where the final inequality uses the fact that $\frac{s}{k}\leq \frac{q}{k} \leq 0.01.$
	\end{proof}
	
	\section{Applications}
	\label{sec:apps}
	
	\subsection{Tight Degree Lower Bounds for the Coin Problem}
	\label{sec:cp}
	
	We start with a definition.
	
	\begin{definition}[The $\delta$-Coin Problem] 
\label{def:coinproblem}
For any $\alpha\in [0,1]$ and integer $n\geq 1$, let $\mu_{\alpha}^n$ be the product distribution over $\{0,1\}^n$ obtained by setting each bit to $1$ independently with probability $\alpha.$ Let $\delta \in (0,1)$ be a parameter. 

Given a function $g: \{0,1\}^n \rightarrow \{0,1\}$, we say that \emph{$g$ solves the $\delta$-coin problem with error $\varepsilon$} if
\begin{equation}
\prob{\bm{x}\sim \mu_{(1/2)-\delta}^n}{g(\bm{x}) = 1} \leq \varepsilon \text{  and   } \prob{\bm{x}\sim \mu_{1/2}^n}{g(\bm{x}) = 1} \geq 1-\varepsilon.
\end{equation}
(This definition is sometimes~\cite{LSSTV-SICOMP} stated in terms of the distributions $\mu_{(1/2)-\delta}$ and $\mu_{(1/2)+\delta}$. This is essentially equivalent to the definition above.)
\end{definition}

Let $\F$ be a prime field of characteristic $p$, where $p$ is a fixed constant. We consider here the minimum degree of a polynomial $P\in \F[x_1,\ldots,x_n]$ that solves the $\delta$-coin problem with error $\varepsilon.$ 

By Lemma~\ref{lem:AW}, for any $n\geq 1$, there is a polynomial $P\in \F[x_1,\ldots,x_n]$ of degree $O(\delta n)$ that outputs $0$ on all inputs of weight $w\in (n((1/2)-3\delta/2),n(1/2-\delta/2))$ and $1$ on all inputs of weight $(n(1/2-\delta/2), n(1/2+\delta/2)).$ Using Lemma~\ref{lem:bernstein} (Bernstein's inequality), it can be easily checked that $P$ solves the $\delta$-coin problem with error $\varepsilon$ as long as $n \geq C\frac{1}{\delta^2}\log(1/\varepsilon)$ for some large enough constant $C > 0$. This yields a polynomial $P$ of degree $O(\frac{1}{\delta}\log(1/\varepsilon))$.

In earlier work~\cite{LSSTV-SICOMP}, we showed that this was tight for constant $\varepsilon.$ That is, we showed that any polynomial $P$ that solves the $\delta$-coin problem with error at most $1/10$ (say) must have degree $\Omega(1/\delta).$ This was also implied by an independent result of Chattopadhyay, Hosseini, Lovett and Tal~\cite{CHLT} (see~\cite{Agrawal}). Both proofs relied on slight strengthenings of Smolensky's~\cite{Smolensky87} lower bound on polynomials approximating the Majority function. It is not clear from these proofs, however, if this continues to be true for subconstant $\varepsilon.$ The main lemma (Lemma~\ref{lem:main}), or even its simpler version Lemma~\ref{lem:Dij}, shows that this is indeed true.

\begin{theorem}[Tight Degree Lower Bound for the $\delta$-coin problem for all errors]
\label{thm:coinproblem}
Assume $\F$ has characteristic $p$ and $\delta,\varepsilon$ are parameters going to $0$. Let $N\geq 1$ be any positive integer. Any polynomial $P\in \F[x_1,\ldots,x_N]$ that solves the $\delta$-coin problem with error $\varepsilon$ must have degree $\Omega(\frac{1}{\delta}\log(1/\varepsilon)).$
\end{theorem}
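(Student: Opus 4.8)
The plan is to reduce the task of solving the coin problem to that of \emph{distinguishing two Hamming layers of the cube}, and then to invoke the robust Heged\H{u}s lemma (Lemma~\ref{lem:main}). We may assume $P$ is multilinear, since multilinearising neither raises the degree nor changes the values on $\{0,1\}^N$, and hence preserves the hypothesis. It is convenient to pass to $Q_0 := 1-P$: then $Q_0$ vanishes at a point $b\in\{0,1\}^N$ precisely when $P(b)=1$, so the hypothesis becomes that $Q_0$ vanishes on a $(1-\varepsilon)$-fraction of $\mu_{1/2}^N$ and is nonzero on a $(1-\varepsilon)$-fraction of $\mu_{(1/2)-\delta}^N$. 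The aim is to extract from $Q_0$ a polynomial on some number $n$ of variables that vanishes on almost all of $\{0,1\}^n_k$ and is nonzero on a good fraction of $\{0,1\}^n_{k-q}$, with $n,k,q$ inside the window of Lemma~\ref{lem:main}.

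First I would fix parameters: let $q$ be the largest power of $p$ with $q\le \frac{c}{\delta}\log(1/\varepsilon)$ for a sufficiently small absolute constant $c$, so that $q=\Theta(\frac1\delta\log(1/\varepsilon))$ (consecutive powers of $p$ differ by the fixed factor $p$); let $n$ be essentially $q/\delta$, so $n=\Theta(\frac1{\delta^2}\log(1/\varepsilon))$; and set $k=n/2$, $K=k-q$, so that $K/n=\frac12-\delta$. Now define a probabilistic polynomial $\bm{P'}$ on $n$ variables by $\bm{P'}(\bm a):=P(\bm x)$, where each coordinate $\bm x_j$ of $\bm x\in\{0,1\}^N$ is an independent, uniformly random coordinate of $\bm a$. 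Each fixing of these choices replaces every variable of $P$ by a single variable of $\bm a$, so $\deg(\bm{P'})\le\deg(P)$; and conditioned on $|\bm a|=w$, the string $\bm x$ is distributed exactly as $\mu_{w/n}^N$ --- in particular $\mu_{1/2}^N$ when $|\bm a|=k$ and $\mu_{(1/2)-\delta}^N$ when $|\bm a|=K$. Setting $\bm Q:=1-\bm{P'}$, this gives $\avg{\bm Q}{\psi_k(\bm Q)}\le\varepsilon$ and $\avg{\bm Q}{1-\psi_K(\bm Q)}\le\varepsilon$, so by Markov's inequality we may fix the randomness and obtain a single polynomial $Q$ with $\deg(Q)\le\deg(P)$, $\psi_k(Q)\le 4\varepsilon$ and $\psi_K(Q)\ge 1-4\varepsilon$. (Since $q/\delta$ need not be an integer, $n$ cannot literally equal $q/\delta$; I would absorb this by replacing a tiny $\eta=O(\delta/q)$ fraction of the coordinates $\bm x_j$ with fresh fair coins, which makes the two layers $k$ and $K$ hit the biases $\frac12$ and $\frac12-\delta$ exactly. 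This changes nothing of substance.)

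It then remains to apply Lemma~\ref{lem:main} to $Q$ with this $n,k,q$ and $K=k-q$. Here $\alpha=\min\{k/n,1-k/n\}=\frac12$ and $\delta_{\mathrm{lem}}:=q/n=(1+o(1))\delta$, and $100q<k<n-100q$ holds since $k=n/2\approx q/(2\delta)\gg 100q$ once $\delta$ is small. Hypothesis~\eqref{eq:mainlem0a} asks for $\psi_k(Q)\le\min\{e^{-100\delta_{\mathrm{lem}}^2 n/\alpha},1/1000\}$; as $\delta_{\mathrm{lem}}^2 n/\alpha=2q\delta_{\mathrm{lem}}=(2+o(1))q\delta\le(2+o(1))c\log(1/\varepsilon)$, choosing $c$ small enough makes $e^{-100\delta_{\mathrm{lem}}^2 n/\alpha}\ge\sqrt{\varepsilon}\ge 4\varepsilon$, while $4\varepsilon\le 1/1000$ holds as $\varepsilon\to 0$. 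Hypothesis~\eqref{eq:mainlem0b} asks for $\psi_K(Q)\ge e^{-\delta_{\mathrm{lem}}^2 n/(100\alpha)}=e^{-q\delta_{\mathrm{lem}}/50}$, which is immediate from $\psi_K(Q)\ge 1-4\varepsilon\ge\frac12$ together with $q\delta_{\mathrm{lem}}\to\infty$. Lemma~\ref{lem:main} now yields $\deg(P)\ge\deg(Q)=\Omega(q)=\Omega(\frac1\delta\log(1/\varepsilon))$, which is the theorem.

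I expect the only genuinely delicate point to be the parameter choice in the sampling step: one must set $n$ and $q$ so that the layer-distinguishing instance produced from $P$ falls inside the rather narrow regime in which Lemma~\ref{lem:main} holds. The binding constraint is that the error $e^{-\Theta(\delta^2 n/\alpha)}=e^{-\Theta(\delta q)}$ that the lemma demands on the ``zero'' layer must be no smaller than the $\Theta(\varepsilon)$ that the reduction actually supplies there; this forces $\delta q=O(\log(1/\varepsilon))$, i.e.\ $q=\Theta(\frac1\delta\log(1/\varepsilon))$, which is exactly why the bound comes out as claimed. Everything else --- the $\eta$-perturbation to handle rounding, the Markov averaging, and chasing the absolute constants --- is routine.
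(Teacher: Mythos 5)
Your proof is correct and follows essentially the same strategy as the paper: reduce the coin problem to a layer-distinguishing instance by sampling each input bit of $P$ as a uniformly random coordinate of a point $\bm a\in\{0,1\}^n$, observe that conditioned on $|\bm a|=w$ the sampled input is exactly $\mu_{w/n}^N$, fix the randomness by Markov, and apply the robust Heged\H{u}s lemma at weights $n/2$ and $n/2-q$. The one genuine point of divergence is the rounding bookkeeping. You choose $q$ first (largest power of $p$ below $c\log(1/\varepsilon)/\delta$), take $n$ close to $q/\delta$, and repair the mismatch $q/n\neq\delta$ by mixing in fresh fair coins with probability $\eta=O(\delta/q)$ so that layers $n/2$ and $n/2-q$ land exactly on biases $1/2$ and $1/2-\delta$. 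The paper instead first treats the case $\delta=1/k$ for an integer $k$ (so $n$ can be picked with $\delta n$ exactly a power of $p$, and Lemma~\ref{lem:Dij} applies directly), then reduces arbitrary $\delta$ to this case by XOR-ing the input with an independent noise string. Your perturbation is a somewhat more unified way of absorbing both rounding issues at once, while the paper's two-stage route avoids having to state Lemma~\ref{lem:main}'s hypotheses verbatim. Invoking the general Lemma~\ref{lem:main} rather than the special case Lemma~\ref{lem:Dij} is cosmetic since you set $k=n/2$. One detail worth making explicit: the $\eta$-perturbation must be applied \emph{independently per coordinate} (each $\bm x_j$ is a fresh fair coin with probability $\eta$, else a uniformly random coordinate of $\bm a$), rather than to a fixed $\eta$-fraction of the coordinates; the former gives, conditioned on $|\bm a|=w$, exactly the product distribution $\mu_{\eta/2+(1-\eta)w/n}^N$ on which the coin-problem guarantee can be invoked, whereas a fixed subset would produce a two-bias product measure that the hypothesis on $P$ says nothing about. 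Also, as in the paper, one should fall back to the constant-error lower bound of~\cite{LSSTV-SICOMP} when $\varepsilon$ is not yet small enough for the parameter window of the main lemma to open.
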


\begin{proof}
We assume that $\varepsilon$ is smaller than some small enough constant $\varepsilon_0$ (for larger $\varepsilon$, we can just appeal to the lower bound of~\cite{LSSTV-SICOMP}). 

Assume for now that $\delta = 1/k$ for some integer $k\geq 1$. Fix $n$ to be the least even integer such that $n \geq \frac{C}{\delta^2}\log(1/\varepsilon)$ for a large constant $C$  and $q := \delta n$ is a power of the characteristic~$p$. Note that $n\leq O(p)\cdot \frac{C}{\delta^2}\log(1/\varepsilon)= O(\frac{1}{\delta^2}\log(1/\varepsilon))$ as $p$ is a constant. Define the probabilistic polynomial $\bm{Q}\in \F[y_1,\ldots,y_n]$ obtained from $P$ by randomly replacing each variable of $P$ by a uniformly random variable among $y_1,\ldots,y_n$. For any $a\in \{0,1\}^n_{n/2}$, we have
\[
\prob{\bm{Q}}{\bm{Q}(a) = 0} = \prob{\bm{b} \sim \mu_{1/2}}{P(\bm{b}) = 0} \leq  \varepsilon,
\]
and similarly for $a\in \{0,1\}^n_{(n/2)-q}$, we have $\prob{\bm{Q}}{\bm{Q}(a) \neq 0}\leq  \varepsilon$. In particular, by Markov's inequality, there is a fixed polynomial $Q$ of degree at most $\deg(P)$ that satisfies
\[
\prob{\bm{a}\sim \{0,1\}^n_{n/2}}{Q(a) = 0} \leq  2\varepsilon \text{ and } \prob{\bm{a}\sim \{0,1\}^n_{(n/2)-q}}{Q(a) \neq 0} \leq 2\varepsilon.
\]
Hence, by Lemma~\ref{lem:Dij}, we have $\deg(P) = \Omega(\delta n) = \Omega(\frac{1}{\delta}\log(1/\varepsilon)).$

Now, if $\delta$ is not of the assumed form, we consider $k$ be the largest integer such that $\delta\leq 1/k$ and set $\delta' := 1/k$. Define $\alpha\in (0,1)$ by $\alpha = \delta/\delta'.$ Note that if $\bm{a},\bm{b}\in \{0,1\}$ are sampled independently from the distributions $\mu_{1/2-\delta'}^1$ and $\mu_{1/2-(\alpha/2)}^1$ respectively, then their parity $\bm{a}\oplus \bm{b}$ has the distribution $\mu_{1/2-\delta}^1.$ Now, if we define the probabilistic polynomial $\bm{R}(x_1,\ldots,x_n)$ by 
\[
\bm{R}(x_1,\ldots,x_n) = P(x_1\oplus \bm{y}_1,\ldots,x_n\oplus \bm{y}_n)
\]
where $\bm{y} = (\bm{y}_1,\ldots,\bm{y}_n)$ is sampled from $\mu_{1/2-(\alpha/2)}^n,$ then $\bm{R}$ solves the $\delta'$-coin problem with error at most $\varepsilon.$ Note also that $\deg(\bm{R}) \leq \deg(P)$ as for each fixed $\bm{y},$ each $x_i \oplus \bm{y}_i$ is a linear function of $x_i$.

Repeating the above argument with $\bm{R}$ instead of $P$ yields that $\deg(\bm{R}) = \Omega(\frac{1}{\delta'}\log(1/\varepsilon)) = \Omega(\frac{1}{\delta}\log(1/\varepsilon)).$ We thus get the same lower bound for $\deg(P).$
\end{proof}

\subsection{Tight Probabilistic Degree Lower bounds for Positive Characteristic}
\label{sec:pdeg}

We start with some basic notation and definitions and then state our result.

Throughout this section, let $\F$ be a field of fixed (i.e. independent of $n$) characteristic $p > 0$.  The main theorem of this section characterizes (up to constant factors) the $\varepsilon$-error probabilistic degree of every symmetric function and for almost all interesting values of $\varepsilon$.
	
	\begin{theorem}[Probabilistic Degree lower bounds over positive characteristic]
		\label{thm:pdeglbd}
		Let $n\in \mathbb{N}$ be a growing parameter. Let $f\in \sB_n$ be arbitrary and let $(g,h)$ be a standard decomposition of $f$ (see Section~\ref{sec:prelims} for the definition). Then for any $\varepsilon\in [1/2^n,1/3]$, we have
		 \[
		 \pdeg_\varepsilon^\F(f) = 
			\left\{
			\begin{array}{ll}
			\Omega(\sqrt{n\log(1/\varepsilon)}) & \text{if $\per(g) > 1$ and not a power of $p$,}\\
			\Omega(\min\{\sqrt{n\log(1/\varepsilon)},\per(g)\}) & \text{if $\per(g)$ a power of $p$ and $B(h) =0$,}\\
			\Omega(\min\{\sqrt{n\log(1/\varepsilon)},\per(g)  & \text{otherwise.}\\
			\ \ + \sqrt{B(h)\log(1/\varepsilon)} +\log(1/\varepsilon)\}) 
			\end{array}
			\right.
			\]
			Here the $\Omega(\cdot)$ notation hides constants depending on the characteristic $p$ of the field $\mathbb{F}.$
	\end{theorem}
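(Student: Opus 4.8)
The plan is to combine Lemma~\ref{lem:main}, applied to carefully chosen pairs of Hamming layers, with the combinatorial structure of the standard decomposition $(g,h)$, plus one elementary argument (not using Lemma~\ref{lem:main}) for the additive $\log(1/\varepsilon)$ term. First I would record a convenient consequence of Lemma~\ref{lem:main} as a \emph{core lemma}: there are absolute constants $c_0,\varepsilon_0>0$ such that if $f\in\sB_n$, $\varepsilon\le\varepsilon_0$, $q$ is a power of $p$ with $q\le c_0\sqrt{n\log(1/\varepsilon)}$, and some $k$ with $\lceil n/3\rceil<k<\lfloor 2n/3\rfloor$ satisfies $f(k)\neq f(k+q)$ or $f(k)\neq f(k-q)$, then $\pdeg^\F_\varepsilon(f)=\Omega(q)$. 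Indeed, given an $\varepsilon$-error probabilistic polynomial $\bm P$ for $f$, Markov's inequality applied to the sum of the two disagreement fractions of $\bm P$ with $f$ on layers $k$ and $k\pm q$ yields a fixed $P$ in the support disagreeing with $f$ on at most a $4\varepsilon$-fraction of each; since $f$ is constant on each layer and differs on the two, $P$ vanishes on at least a $(1-4\varepsilon)$-fraction of one layer and is nonzero on at least a $(1-4\varepsilon)$-fraction of the other, and with $\alpha=\min\{k/n,1-k/n\}=\Theta(1)$, $\delta=q/n$, the inequalities $4\varepsilon\le\min\{e^{-100\delta^2n/\alpha},1/1000\}$ and $1-4\varepsilon\ge e^{-\delta^2n/100\alpha}$ hold for $c_0,\varepsilon_0$ small enough (this is exactly where the requirement $q=O(\sqrt{n\log(1/\varepsilon)})$ — hence the $\sqrt{n\log(1/\varepsilon)}$ cap in the theorem — comes from), so Lemma~\ref{lem:main} gives $\deg(P)=\Omega(q)$. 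The case $\varepsilon\in(\varepsilon_0,1/3]$ reduces to $\varepsilon=\varepsilon_0$ using error reduction (Fact~\ref{fac:pdeg}(1)), which costs only a constant factor when $\log(1/\varepsilon)=\Theta(1)$, and all three quantities $\sqrt{n\log(1/\varepsilon)}$, $\sqrt{B(h)\log(1/\varepsilon)}$, $\log(1/\varepsilon)$ are $\Theta$ of their $\varepsilon_0$-values in that range. Finally, since $\max\{A,B,C\}\ge\tfrac13(A+B+C)$ and a short calculation checks that the sum of the three lower bounds below is $\Omega$ of the bound claimed in each case of the theorem, it suffices to prove: (a) $\pdeg_\varepsilon(f)=\Omega(\min\{\sqrt{n\log(1/\varepsilon)},\per(g)\})$ when $\per(g)$ is a power of $p$, and $\Omega(\sqrt{n\log(1/\varepsilon)})$ when $\per(g)>1$ is not a power of $p$; (b) $\pdeg_\varepsilon(f)=\Omega(\sqrt{B(h)\log(1/\varepsilon)})$ when $B(h)\ge1$; and (c) $\pdeg_\varepsilon(f)=\Omega(\log(1/\varepsilon))$ when $B(h)\ge1$.

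For (a), I would take $q$ to be the largest power of $p$ with $q\le c_0\sqrt{n\log(1/\varepsilon)}$, additionally imposing $q<\per(g)$ when $\per(g)$ is a power of $p$; in either case $q\not\equiv 0\pmod{\per(g)}$ (when $\per(g)$ is a power of $p$, because $0<q<\per(g)$; otherwise, because no power of $p$ is divisible by $\per(g)$) and $q$ has magnitude $\Theta$ of the target quantity. The key point is that when $\per(g)>1$, the minimality in the standard decomposition forces $\spec g$ to be non-constant on the middle third, so $f=g$ is genuinely non-constant there, and by Corollary~\ref{cor:string} (the length-$\per(g)$ windows of $\spec g$ at two positions inequivalent mod $\per(g)$ are distinct, using $\per(g)\le\lfloor n/3\rfloor$ from Observation~\ref{obs:decomp}) there is a $k$ in the interior of the middle third with $g(k)\neq g(k+q)$ and $k,k+q$ both lying in the region $[B(h),n-B(h)]$ (which contains the middle third) where $f$ equals $g$; the core lemma then gives $\pdeg_\varepsilon(f)=\Omega(q)$. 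Placing this pair strictly inside $[B(h),n-B(h)]$ requires some care when $\per(g)$ and $B(h)$ are both comparable to $n$; one handles that by first restricting to a sub-cube on which the periodic part survives with the same period.

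For (b), by definition of $B(h)$ we may assume (replacing $f$ by $f(1-x_1,\dots,1-x_n)$ if needed) that $h(B(h)-1)\neq h(B(h))$ while $\spec h\equiv 0$ on $[B(h),n-B(h)]$. Restricting $f$ to a sub-cube with $m=\Theta(B(h))\le 2B(h)$ free variables, $B(h)-\lfloor m/2\rfloor$ variables set to $1$, and the rest to $0$ (nonnegativity of the counts uses $B(h)\le n/3$), we obtain $\tilde f\in\sB_m$ whose spectrum has a jump near weight $m/2$. Running the argument of (a) inside $\{0,1\}^m$ produces a power $q$ of $p$ with $q=\Theta(\sqrt{m\log(1/\varepsilon)})=\Theta(\sqrt{B(h)\log(1/\varepsilon)})$ and a weight near $m/2$ where $\tilde f$ changes value, and the core lemma (with $m$ for $n$) gives $\pdeg_\varepsilon(\tilde f)=\Omega(\sqrt{B(h)\log(1/\varepsilon)})$, whence $\pdeg_\varepsilon(f)\ge\pdeg_\varepsilon(\tilde f)$. (When $B(h)=O(\log(1/\varepsilon))$ this $q$ cannot meet the range constraint $100q<k<m-100q$ inside an $O(B(h))$-dimensional cube; but then $\sqrt{B(h)\log(1/\varepsilon)}=O(\log(1/\varepsilon))$, so (b) is subsumed by (c).) For (c) — the only part not using Lemma~\ref{lem:main} — if $\per(g)$ is comparable to $n$ or to $\log(1/\varepsilon)$ then (a) already yields $\Omega(\log(1/\varepsilon))$ (using $\log(1/\varepsilon)\le n$ and $\log(1/\varepsilon)\le\sqrt{n\log(1/\varepsilon)}$), so assume $\per(g)$ is small; recall $\per(g)$ is a power of $p$ here. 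Pick $m=\min\{\lfloor\tfrac12\log_2(1/\varepsilon)\rfloor,\lfloor n/3\rfloor\}$ (so $\per(g)<m$, $\varepsilon 2^m\le\sqrt\varepsilon<1$, and $m\le n-B(h)+1$) and restrict $f$ to a sub-cube with $m$ free variables and $B(h)-1$ variables set to $1$: the resulting $\tilde f\in\sB_m$ has $\spec\tilde f$ equal to the length-$(m+1)$ window $[B(h)-1,B(h)-1+m]$ of $\spec f$, which equals the $\per(g)$-periodic function $w\mapsto\spec g(B(h)-1+w)$ — the spectrum of a multilinear polynomial of degree $\le\per(g)<m$ — except that its value at $w=0$ is flipped (since $\spec h(B(h)-1)=1$ and $\spec h$ vanishes on $[B(h),n-B(h)]$). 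Hence the coefficient of $x_1\cdots x_m$ in the exact multilinear polynomial of $\tilde f$, namely $\sum_{w=0}^m\binom mw(-1)^{m-w}\spec\tilde f(w)$, equals $\pm1\neq 0$ in $\F$, so no polynomial of degree $<m$ computes $\tilde f$ exactly; since $\varepsilon 2^m<1$, any $\varepsilon$-error probabilistic polynomial for $\tilde f$ has some polynomial in its support with $0$ errors over $\{0,1\}^m$, forcing $\pdeg_\varepsilon(f)\ge\pdeg_\varepsilon(\tilde f)\ge m=\Omega(\log(1/\varepsilon))$.

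The main obstacle is the combinatorial bookkeeping, concentrated in parts (a) and (b): in every regime of $(n,\per(g),B(h),\varepsilon)$ one must exhibit a power $q$ of $p$ that is simultaneously (i) not a multiple of the relevant period, so Corollary~\ref{cor:string} guarantees a genuine change of value; (ii) of the correct magnitude $\Theta$ of the target quantity; and (iii) small enough to satisfy both the error hypothesis $q=O(\sqrt{n\log(1/\varepsilon)})$ and the range hypothesis $100q<k<n-100q$ of Lemma~\ref{lem:main}, with $k$ lying where $f$ agrees with its periodic part. When the period and $B(h)$ are both comparable to $n$ this forces one to first restrict to a smaller cube, and in part (b) it forces the split between the $\sqrt{B(h)\log(1/\varepsilon)}$ regime (handled by Lemma~\ref{lem:main}) and the $\log(1/\varepsilon)$ regime (handled in (c)). Aligning the constants across these restrictions, and checking that the three sub-bounds recombine to the stated min/sum in each of the three cases of the theorem, is routine but delicate; everything else is a direct application of the core lemma, Corollary~\ref{cor:string}, and Observation~\ref{obs:decomp}.
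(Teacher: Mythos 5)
Your overall plan matches the paper's: a ``core lemma'' played by Corollary~\ref{cor:Dij}, a periodic‑part lower bound (Lemma~\ref{lem:preg-f-weak-char+}) via Corollary~\ref{cor:string}, a bounded‑part lower bound (Lemma~\ref{lem:preg-h-char+}) by restricting to a cube of size $\Theta(B(h))$, and a combination step. One genuine difference is your part (c): the paper obtains the additive $\Omega(\log(1/\varepsilon))$ term by restricting $h$ to an AND function and invoking the threshold lower bound of \cite{STV} (Lemma~\ref{lem:thr-lbd}), whereas you give a self‑contained argument — restrict to $m = \Theta(\log(1/\varepsilon))$ variables, observe the coefficient of $x_1\cdots x_m$ is $\pm 1 \pmod p$ because the window of $\spec f$ is a low‑degree periodic function flipped at one endpoint, and use $\varepsilon 2^m < 1$ to force exact computation. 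That is cleaner and removes a dependency on~\cite{STV}. A second structural difference: the paper's bounded‑part bound is a lower bound on $\pdeg(h)$, converted to a bound on $\pdeg(f)$ via the decomposition $h = f\oplus g$ and the inequality $\pdeg(h) \le 2(\pdeg_{\varepsilon/2}(f)+\pdeg_{\varepsilon/2}(g))$; your (b) and (c) argue directly about restrictions of $f$. This is fine provided you check, as you implicitly do, that the jump in $\spec h$ at $B(h){-}1, B(h)$ survives XOR with the period‑$\per(g)$ function $\spec g$ once one steps a power of $p$ that is a multiple of $\per(g)$.

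There is a real gap in your stated ``core lemma''. Condition (\ref{eq:mainlem0b}) of Lemma~\ref{lem:main} requires the nonvanishing probability on $\{0,1\}^n_K$ to be at least $e^{-\delta^2 n/100\alpha}$, and that quantity tends to $1$ as $q\to 0$; for $q = o(\sqrt n)$ this forces $1-4\varepsilon \ge 1 - \Theta(q^2/n)$, i.e. $\varepsilon = O(q^2/n)$, which is \emph{not} implied by ``$\varepsilon\le\varepsilon_0$'' for a universal constant $\varepsilon_0$. Equivalently, Lemma~\ref{lem:Dij} (and hence Corollary~\ref{cor:Dij}) has both an upper bound $t\le\sqrt{n\ln(1/\varepsilon)/2}$ \emph{and} a lower bound $t\ge 10\sqrt n$ on the step size $t$. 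Your core lemma only records the upper cap $q \le c_0\sqrt{n\log(1/\varepsilon)}$; as stated, and with the proof you sketch, it fails in the regime $q \ll \sqrt n$ (which is exactly the regime needed whenever the target quantity is $\per(g)$ or $\sqrt{B(h)\log(1/\varepsilon)}$ and these are small compared to $\sqrt n$). The fix is what the paper does: restrict further to an $m$-variable sub‑cube with $m = \Theta(q^2/\ell)$ so that $q$ becomes $\Theta(\sqrt{m\ell})$ with $\ell\ge 100$, and only then invoke the lemma. You allude to restricting to sub‑cubes in your final paragraph, but you attribute the need for it to fitting the jump inside the middle third; the lower bound on $q$ is a separate constraint (and it is the one that drives the sub‑cube size), and it should be made explicit in the core lemma or in each invocation of it. Relatedly, your sketch of (b) says ``running the argument of (a) inside $\{0,1\}^m$'' — but (a)'s argument uses $\per(g)$ via Corollary~\ref{cor:string}, whereas in (b) the relevant structure is the step in $\spec h$; you need the version of the argument that locates a change of value $\spec \tilde f(k) \neq \spec\tilde f(k+q)$ caused by the $h$-step (taking $q$ a multiple of $\per(g)$ so the periodic part drops out), which is a slightly different combinatorial step than in (a).
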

	
	Note that this matches the upper bound construction from Theorem~\ref{thm:ubd-STV}.

	\subsubsection{Some Preliminaries}
	
	\begin{definition}[Restrictions]
	Given functions $f\in \sB_n$ and $g\in \sB_m$ where $m\leq n$, we say that $g$ is a restriction of $f$ if there is some $a\in [0,n-m]$ such that the identity
	\[
	g(x)=f(x1^a0^{n-m-a})
	\]
	holds for every $x\in \{0,1\}^n$. Or equivalently, that $g$ can be obtained from $f$ by setting some inputs to $0$ and $1$ respectively.\footnote{Note that exactly which inputs are set to $0$ or $1$ is not important, since we are dealing with \emph{symmetric} Boolean functions.}
	\end{definition}
	
	We will use the following obvious fact freely.
	\begin{observation}
	\label{obs:pdeg-restriction}
	If $g$ is a restriction of $f$, then for any $\delta > 0$, $\pdeg_{\delta}(g) \leq \pdeg_\delta(f).$
	\end{observation}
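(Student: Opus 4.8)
The plan is to take an optimal probabilistic polynomial for $f$ and hard-wire the fixed inputs to obtain one for $g$. Concretely, let $d = \pdeg_\delta(f)$ and let $\bm{P}\in \F[x_1,\ldots,x_n]$ be a $\delta$-error probabilistic polynomial for $f$ of degree at most $d$. Since $g$ is a restriction of $f$, there is an $a\in [0,n-m]$ with $g(x) = f(x1^a0^{n-m-a})$ for all $x\in\{0,1\}^m$. I would then define the probabilistic polynomial $\bm{Q}\in\F[x_1,\ldots,x_m]$ by $\bm{Q}(x_1,\ldots,x_m) := \bm{P}(x_1,\ldots,x_m,1,\ldots,1,0,\ldots,0)$, where the last $n-m$ coordinates are set to $a$ ones followed by $n-m-a$ zeros.

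First I would observe that substituting the field elements $0,1\in\F$ for the last $n-m$ variables can only decrease (never increase) the degree of every polynomial in the support of $\bm{P}$, so $\deg(\bm{Q})\le\deg(\bm{P})\le d$. Next, for any fixed $x\in\{0,1\}^m$, the random variable $\bm{Q}(x)$ is literally $\bm{P}(x1^a0^{n-m-a})$, so
\[
\prob{\bm{Q}}{\bm{Q}(x)\neq g(x)} = \prob{\bm{P}}{\bm{P}(x1^a0^{n-m-a})\neq f(x1^a0^{n-m-a})} \leq \delta,
\]
using that $\bm{P}$ is a $\delta$-error probabilistic polynomial for $f$ and that $f(x1^a0^{n-m-a}) = g(x)$. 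Hence $\bm{Q}$ is a $\delta$-error probabilistic polynomial for $g$ of degree at most $d$, which yields $\pdeg_\delta(g)\le d = \pdeg_\delta(f)$.

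There is essentially no obstacle here: the only point requiring even a (trivial) check is that specializing variables to field constants does not raise the degree, which is immediate from the monomial expansion. As a minor bookkeeping remark, since we work throughout with multilinear representations one may additionally multilinearize $\bm{Q}$ (replacing each $x_i^r$ by $x_i$) without changing its values on $\{0,1\}^m$ or increasing its degree, though this is not needed for the statement as phrased.
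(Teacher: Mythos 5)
Your proof is correct and is exactly the intended argument: the paper states this as an obvious fact without proof, and the substitution of the hard-wired bits into the probabilistic polynomial for $f$, together with the trivial observation that specializing variables to constants cannot increase degree, is precisely what is meant.
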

	
	In earlier work with Tripathi and Venkitesh~\cite{STV}, we showed the following near-optimal lower bound on the probabilistic degrees of Threshold functions.
		
	\begin{lemma}[Lemma 27 in~\cite{STV}]
	\label{lem:thr-lbd}
	 Assume $t \geq 1$. For any $\varepsilon\in [2^{-n},1/3],$
	\[
	\pdeg_{\varepsilon}(\Thr_n^t)= \Omega(\sqrt{\min\{t,n+1-t\}\log(1/\varepsilon)} + \log(1/\varepsilon)).
	\]
	\end{lemma}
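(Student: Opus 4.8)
Here is how I would attack Lemma~\ref{lem:thr-lbd}.

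\medskip
\noindent\textbf{Setup and plan.} After negating all variables if necessary (this preserves probabilistic degree, since $\Thr_n^t(\ol{x})=1-\Thr_n^{n+1-t}(x)$ and complementing the output of a probabilistic polynomial is free) I may assume $t\le n+1-t$, so the target is $\Omega(\sqrt{t\log(1/\varepsilon)}+\log(1/\varepsilon))$; and by the error-reduction part of Fact~\ref{fac:pdeg} it suffices to treat $\varepsilon\le\varepsilon_0$ for a small absolute constant $\varepsilon_0$ (for $\varepsilon\in[\varepsilon_0,1/3]$ the bound reads $\Omega(\sqrt t)$, implied by the case $\varepsilon=\varepsilon_0$). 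I may also assume $n$ exceeds any fixed constant, as otherwise $\varepsilon\ge2^{-n}$ is bounded below and the bound is $\Omega(1)$. The plan is to prove two lower bounds separately --- (a) $\pdeg_\varepsilon(\Thr_n^t)=\Omega(\min\{\sqrt{t\log(1/\varepsilon)},t\})$ and (b) $\pdeg_\varepsilon(\Thr_n^t)=\Omega(\min\{\log(1/\varepsilon),n-t+1\})$ --- and then note that, since $n-t+1\ge(n+1)/2$ and $\log(1/\varepsilon)\le n$, the maximum of the two is always $\Omega(\sqrt{t\log(1/\varepsilon)}+\log(1/\varepsilon))$: if $t\ge\log(1/\varepsilon)$ the first summand dominates and $(a)=\Omega(\sqrt{t\log(1/\varepsilon)})$, while if $t<\log(1/\varepsilon)$ the second dominates and $(b)=\Omega(\log(1/\varepsilon))$.

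\medskip
\noindent\textbf{Bound (a), via the robust Heged\H{u}s lemma.} Assume $t$ is larger than a suitable constant (otherwise (a) is a trivial $\Omega(1)$ bound). Let $\bm{P}$ be an $\varepsilon$-error probabilistic polynomial for $\Thr_n^t$ of degree $d$. Set $k:=t-1$ (a weight on which $\Thr_n^t$ vanishes), let $q$ be the largest power of $p$ with $q\le c_0\min\{\sqrt{t\log(1/\varepsilon)},t\}$ for a small absolute constant $c_0$, and set $K:=k+q$ (a weight on which $\Thr_n^t$ equals $1$, as $q\ge1$). Since $\bm{P}$ errs with probability at most $\varepsilon$ at every point, averaging over a uniform point of $\{0,1\}^n_k$ and of $\{0,1\}^n_K$ gives $\avg{\bm{P}}{\psi_k(\bm{P})}\le\varepsilon$ and $\avg{\bm{P}}{1-\psi_K(\bm{P})}\le\varepsilon$, so by Markov's inequality some polynomial $P$ in the support of $\bm{P}$ has $\psi_k(P)\le4\varepsilon$ and $\psi_K(P)\ge1-4\varepsilon$. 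With $\alpha=k/n$ and $\delta=q/n$, the choice of $q$ makes $\delta^2 n/\alpha=q^2/k=\Theta(\log(1/\varepsilon))$, so for $c_0$ small enough the threshold $e^{-100\delta^2 n/\alpha}$ is $\varepsilon^{\Omega(1)}$, which exceeds $4\varepsilon$ once $\varepsilon\le\varepsilon_0$; one likewise checks $1-4\varepsilon\ge e^{-\delta^2 n/100\alpha}$ and the size constraints $100q<k<n-100q$ (using $q\le c_0 t$ and $n-k\ge(n+1)/2$). Hence Lemma~\ref{lem:main} applies to $P$ with $K=k+q$ and yields $d\ge\deg(P)=\Omega(q)=\Omega(\min\{\sqrt{t\log(1/\varepsilon)},t\})$, using that $p$ is a constant so $q$ is within a constant factor of its target. (If one insists on using only the special case, Lemma~\ref{lem:Dij}, one can first restrict $\Thr_n^t$ to $\Thr^t_{2t}$, which works for $t=\Omega(\log(1/\varepsilon))$ but forces a separate treatment of the boundary $n=2t-1$, so applying Lemma~\ref{lem:main} directly is cleaner.)

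\medskip
\noindent\textbf{Bound (b), via a restriction to $\OR$.} Fixing $t-1$ of the inputs of $\Thr_n^t$ to $1$ turns it into $\OR_m$ with $m:=n-t+1\ge(n+1)/2$, so by Observation~\ref{obs:pdeg-restriction} it suffices to show $\pdeg_\varepsilon(\OR_m)=\Omega(\min\{\log(1/\varepsilon),m\})$. Let $\bm{Q}$ be an $\varepsilon$-error probabilistic polynomial for $\OR_m$ of degree $d'$. Since $\OR_m$ vanishes only at $\mathbf{0}$, we have $\prob{\bm{Q}}{\bm{Q}(\mathbf{0})\neq 0}\le\varepsilon$, and averaging over a uniform $\bm{a}\in\{0,1\}^m$ (which is $\mathbf{0}$ with probability $2^{-m}$) gives $\prob{\bm{Q},\bm{a}}{\bm{Q}(\bm{a})=0}\le\varepsilon+2^{-m}$; a Markov plus union-bound argument then produces a fixed $Q$ in the support with $Q(\mathbf{0})=0$ and $\prob{\bm{a}\sim\{0,1\}^m}{Q(\bm{a})=0}\le4(\varepsilon+2^{-m})$, and in particular $Q\not\equiv0$. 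Now I invoke the elementary fact that any nonzero multilinear polynomial $R$ of degree at most $d'$ with $R(\mathbf{0})=0$ vanishes on at least a $2^{-d'-1}$ fraction of $\{0,1\}^m$: picking a nonempty monomial $\prod_{i\in S}x_i$ of minimum size ($1\le|S|\le d'$) with nonzero coefficient in $R$ and setting every variable outside $S$ to $0$ kills all smaller monomials and the constant term, so $R$ restricts to a nonzero scalar multiple of $\prod_{i\in S}x_i$, which is $0$ at all but one of the $2^{m-|S|}$ points of that subcube. Applying this to $Q$ gives $2^{-d'-1}\le4(\varepsilon+2^{-m})$, i.e.\ $d'=\Omega(\min\{\log(1/\varepsilon),m\})$.

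\medskip
\noindent\textbf{Main obstacle.} The genuinely non-routine part is (b): the $\log(1/\varepsilon)$ term cannot be routed through Lemma~\ref{lem:main}, because the only Hamming weight on which $\OR_m$ vanishes is $0$, which is nowhere near the middle of the cube, so the hypothesis $100q<k<n-100q$ fails outright --- this is why the separate combinatorial argument is forced on us. For (a) the one delicate point is the constant bookkeeping: one must pick $c_0$ and $\varepsilon_0$ so that the perturbed error $4\varepsilon$ stays below the threshold $e^{-100\delta^2 n/\alpha}$ of Lemma~\ref{lem:main} \emph{uniformly} over $\varepsilon\in[2^{-n},\varepsilon_0]$ and $t\in(C,(n+1)/2]$, which in effect pins $q$ at $\Theta(\sqrt{t\log(1/\varepsilon)})$ rather than anything larger, together with discarding the degenerate cases where $n$, $t$, or $\log(1/\varepsilon)$ is bounded by a constant.
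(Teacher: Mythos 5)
Your high-level plan is sound: symmetrize to $t\le n+1-t$, prove two separate lower bounds, and take the maximum; and your bound (a) via Lemma~\ref{lem:main} is essentially correct modulo the constant bookkeeping you flag (choose $c_0$ small and $\varepsilon_0$ small, discard $t$ below a threshold depending on $p$ and $c_0$, so that $q=\Theta(\min\{\sqrt{t\log(1/\varepsilon)},t\})$ and the thresholds $e^{-100\delta^2 n/\alpha}$ and $e^{-\delta^2 n/100\alpha}$ safely straddle $4\varepsilon$ and $1-4\varepsilon$). For the record, the paper does not itself prove this lemma --- it cites it from~\cite{STV}, which predates Lemma~\ref{lem:main} --- so deducing (a) from the robust Heged\H{u}s lemma is a genuinely different route than the one the paper relies on.

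However, bound (b) has a real gap. The ``elementary fact'' you invoke --- that a nonzero multilinear $R$ of degree at most $d'$ with $R(\mathbf{0})=0$ must vanish on at least a $2^{-d'-1}$ fraction of $\{0,1\}^m$ --- is false. Over $\F_p$ with $p\ge 5$, the degree-$1$ polynomial $R(x)=\sum_{i=1}^m i\,x_i$ satisfies $R(\mathbf{0})=0$, is nonzero, and vanishes on only a $(1/p+o(1))$-fraction of $\{0,1\}^m$ (the residue $\sum_i i x_i\bmod p$ equidistributes as $m\to\infty$), which is strictly less than $2^{-1-1}=1/4$. Your proof sketch contains the matching slip: after fixing all variables outside $S$ to $0$, the resulting subcube has $2^{|S|}$ points (the free coordinates are those \emph{in} $S$), not $2^{m-|S|}$, so you have produced only $2^{|S|}-1$ zeros of $R$, an exponentially small fraction of the cube. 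Consequently $2^{-d'-1}\le4(\varepsilon+2^{-m})$ is never established and (b) is unproved. The standard route to $\pdeg_\varepsilon(\OR_m)=\Omega(\min\{\log(1/\varepsilon),m\})$ is simpler and sidesteps this entirely: set $\ell:=\min\{\lfloor\log_2(1/(2\varepsilon))\rfloor,m\}$ and restrict to $\OR_\ell$. An $\varepsilon$-error probabilistic polynomial $\bm{Q}$ for $\OR_\ell$ has expected total error count at most $2^\ell\varepsilon<1$ over all $2^\ell$ inputs, so some $Q$ in its support makes zero errors, i.e.\ $Q$ is the (unique) multilinear representation of $\OR_\ell$, and $\deg Q=\ell$; for $\varepsilon$ below a fixed constant this is $\Omega(\min\{\log(1/\varepsilon),m\})$, which is what (b) requires.
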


	(The corresponding lemma in~\cite{STV} is only stated for $t\leq n/2$. However, as  $\Thr^{n+1-t}_n(x) = 1-\Thr^t_n(1-x_1,\ldots,1-x_n),$ the above lower bound holds for $t > n/2$ also.)
	
	The following classical results of Smolensky prove optimal lower bounds on the probabilistic degrees of some interesting classes of symmetric functions. 
	
	\begin{lemma}[Smolensky's lower bound for Majority function~\cite{Szegedy-thesis,Smolensky93}]
		\label{lem:majlbd}
		For any field $\F$, any $\varepsilon \in (1/2^n, 1/5),$ we have
		\[\pdeg_{\varepsilon}^\F(\Maj_n) = \Omega(\sqrt{n\log(1/\varepsilon)}).\]
	\end{lemma}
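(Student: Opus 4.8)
The quickest route is to observe that the Majority function is itself a Threshold function: $\Maj_n = \Thr_n^{\lfloor n/2\rfloor+1}$, since an input has Hamming weight exceeding $n/2$ exactly when it has weight at least $\lfloor n/2\rfloor+1$. Then $\min\{t,n+1-t\}=\Theta(n)$ for $t=\lfloor n/2\rfloor+1$, so Lemma~\ref{lem:thr-lbd} immediately gives $\pdeg_\varepsilon^\F(\Maj_n) = \Omega(\sqrt{n\log(1/\varepsilon)}+\log(1/\varepsilon)) = \Omega(\sqrt{n\log(1/\varepsilon)})$ for all $\varepsilon\in[2^{-n},1/3]$, which covers the stated range. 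If instead one wants a self-contained argument that reconstructs Smolensky's original proof idea through the machinery of this paper (e.g.\ to avoid invoking the threshold bound of~\cite{STV}), I would proceed as follows.

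First I would reduce to the regime where $\varepsilon$ is below a small constant $\varepsilon_0=\varepsilon_0(p)$. For $\varepsilon\in[\varepsilon_0,1/5)$, the first item of Fact~\ref{fac:pdeg} (error reduction, applied with target error $\varepsilon_0$) gives $\pdeg_{\varepsilon_0}^\F(\Maj_n)\le O(\log(1/\varepsilon_0)/\log(1/\varepsilon))\cdot\pdeg_\varepsilon^\F(\Maj_n)$; since $\log(1/\varepsilon)=\Theta(1)$ in this range, this yields $\pdeg_\varepsilon^\F(\Maj_n)=\Omega(\pdeg_{\varepsilon_0}^\F(\Maj_n))=\Omega(\sqrt n)=\Omega(\sqrt{n\log(1/\varepsilon)})$, so it suffices to treat $\varepsilon\le\varepsilon_0$.

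Now let $\bm P$ be an $\varepsilon$-error probabilistic polynomial for $\Maj_n$ of degree $d$. Let $q$ be the largest power of $p$ that is at most $c'\sqrt{n\ln(1/\varepsilon)}$ for a small absolute constant $c'$; since $\ln(1/\varepsilon)<n$ this gives $q=\Theta(\sqrt{n\log(1/\varepsilon)})$ and $q<n/200$. Take $k=\lfloor n/2\rfloor$ and $K=k+q$: then $\Maj_n$ is $0$ on level $k$ and $1$ on level $K$ (this uses only $q\ge 1$, so parity plays no role), hence $\mathbf{E}_{\bm P}[\psi_k(\bm P)]\le\varepsilon$ and $\mathbf{E}_{\bm P}[1-\psi_K(\bm P)]\le\varepsilon$. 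By Markov's inequality and a union bound there is a fixed $P$ in the support of $\bm P$ with $\deg(P)\le d$, $\psi_k(P)\le 3\varepsilon$ and $\psi_K(P)\ge 1-3\varepsilon$. With $\alpha=\lfloor n/2\rfloor/n\approx 1/2$ and $\delta=q/n$, one checks that $e^{-100\delta^2n/\alpha}\ge\varepsilon^{O(c'^2)}$ and $e^{-\delta^2n/100\alpha}\le\varepsilon^{\Omega(c'^2/p^2)}$, so taking $c'$ small and $\varepsilon_0$ small (depending on $p$) makes the two bounds $\psi_k(P)\le 3\varepsilon$ and $\psi_K(P)\ge 1-3\varepsilon$ imply hypotheses (\ref{eq:mainlem0a}) and (\ref{eq:mainlem0b}) of Lemma~\ref{lem:main}. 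Applying Lemma~\ref{lem:main} gives $\deg(P)=\Omega(q)$, hence $d=\Omega(q)=\Omega(\sqrt{n\log(1/\varepsilon)})$, with the constant depending on $p$, as in the statement.

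The one genuinely delicate point is the interplay between the error $\varepsilon$ and the level gap $q$: the right-hand side of (\ref{eq:mainlem0b}) is essentially $e^{-q^2/50n}$, which is extremely close to $1$ when $q=\Theta(\sqrt{n\log(1/\varepsilon)})$ is much smaller than $n$, whereas Markov only delivers $\psi_K(P)\ge 1-3\varepsilon$. This is precisely why the constant-error regime cannot be fed into Lemma~\ref{lem:main} directly and must first be pushed down, via error reduction, to an error that is small relative to $q^2/n$. Everything else is routine bookkeeping: verifying $100q<k<n-100q$, checking that a power of $p$ of the correct order exists (costing only a factor $p$), and the Markov/union-bound estimates.
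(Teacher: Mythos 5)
The paper does not prove Lemma~\ref{lem:majlbd}; it quotes it as a classical result of Smolensky (citing Szegedy's thesis and Smolensky's 1993 paper), so there is no internal argument to compare against. Your second route --- restricting Majority to the two Hamming levels $\lfloor n/2\rfloor$ and $\lfloor n/2\rfloor+q$ and invoking the paper's Main Lemma (Lemma~\ref{lem:main}) after a preliminary error reduction --- is a legitimate and rather illuminating derivation: it shows that the robust Heged\H{u}s machinery developed in this paper recovers Smolensky's bound for Majority as a special case. Your computations are essentially right, including the key observation that hypothesis~(\ref{eq:mainlem0b}) demands $\psi_K(P)$ exceed a quantity extremely close to $1$ when $q\ll n$, which is exactly why constant error cannot be fed into Lemma~\ref{lem:main} directly and must first be pushed down to $\varepsilon_0(p)$ by error reduction.

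The one genuine gap is field generality. Lemma~\ref{lem:majlbd} is stated for \emph{any} field $\F$, whereas Lemma~\ref{lem:main} requires $q$ to be a power of a fixed positive characteristic $p$; your Route~2 therefore only establishes the case $\charac(\F)=p>0$ fixed, with a hidden constant degrading as $p$ grows, and gives nothing at all in characteristic zero. (You actually write ``with the constant depending on $p$, as in the statement,'' but the statement has no such qualifier --- that caveat appears in the paper's Theorem~\ref{thm:pdeglbd}, not here.) Smolensky's original argument, which the cited references contain, is a dimension-counting argument that works uniformly over all fields. Your first route, via Lemma~\ref{lem:thr-lbd}, would formally cover the full statement if that lemma holds over any field --- the numerics $\min\{t,n+1-t\}=\Theta(n)$ at $t=\lfloor n/2\rfloor+1$ are correct --- but Lemma~\ref{lem:thr-lbd} is itself a cited result from~\cite{STV} rather than something this paper proves, and its proof near $t\approx n/2$ is in substance Smolensky's Majority bound, so invoking it to prove Lemma~\ref{lem:majlbd} is not a self-contained derivation and is at risk of circularity.
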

	
	\begin{lemma}[Smolensky's lower bound for MOD functions~\cite{Smolensky87}]
		\label{lem:modlbd}
		For $2\leq b\leq n/2$, any $\F$ such that $\charac(\F)$ is either zero or coprime to $b$, any $\varepsilon \in (1/2^n, 1/(3b))$, there exists an $i\in[0,b-1]$ such that
		\[\pdeg_{\varepsilon}^\F(\MOD^{b,i}_n) = \Omega(\sqrt{n\log(1/b\varepsilon)}).\]
	\end{lemma}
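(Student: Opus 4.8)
The plan is to run Smolensky's polynomial method, getting the $\sqrt{n\log(1/(b\varepsilon))}$ bound (rather than merely $\sqrt n$) by the same amplification that powers Lemma~\ref{lem:main}. The first step is to reduce to a single root-of-unity target. Since $\charac(\F)$ is $0$ or coprime to $b$, some finite extension $\F'$ of $\F$ contains a primitive $b$-th root of unity $\omega$. Suppose for contradiction that $\pdeg_{\varepsilon}^{\F}(\MOD^{b,i}_n)\le d$ for \emph{every} $i\in[0,b-1]$, witnessed by $\varepsilon$-error probabilistic polynomials $\bm{P}_i$ of degree $\le d$. Then $\bm{P}:=\sum_{i=0}^{b-1}\omega^i\bm{P}_i$, with independent randomness, is a probabilistic polynomial over $\F'$ of degree $\le d$ with $\prob{\bm{P}}{\bm{P}(a)\neq\omega^{|a|}}\le b\varepsilon$ for every $a\in\{0,1\}^n$: with probability $\ge 1-b\varepsilon$ all the $\bm{P}_i$ are simultaneously correct at $a$, and then $\bm{P}(a)=\sum_i\omega^i\cdot[\,|a|\equiv i\pmod{b}\,]=\omega^{|a|}$. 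Applying the error-reduction part of Fact~\ref{fac:pdeg} first (which costs only a constant factor in degree, since $\varepsilon<1/(3b)$ forces $\log(1/\varepsilon)=\Omega(\log b)$) lets me assume $b\varepsilon$ is below any prescribed small absolute constant; I keep calling the resulting error $\varepsilon$.

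For $\charac(\F)=p>0$ I would then invoke the robust Heged\H{u}s lemma directly. Put $m=\lfloor n/2\rfloor$ and let $q$ be the largest power of $p$ with $q^2\le c\,n\ln(1/(b\varepsilon))$ and $q<n/200$, for a suitably small constant $c$; then $q=\Theta(\sqrt{n\log(1/(b\varepsilon))})$, except when $\varepsilon<2^{-\Omega(n)}$, in which case the second constraint binds and $q=\Theta(n)$. Since $p$ is coprime to $b$, no power of $p$ is divisible by $b$, so $\omega^{m-q}\neq\omega^m$. Using the per-point error bound on $\bm{P}$ together with Markov's inequality and a union bound, fix the randomness of $\bm{P}$ so that the resulting polynomial $P$ satisfies $\prob{a\sim\{0,1\}^n_{m}}{P(a)\neq\omega^m}\le 3b\varepsilon$ and $\prob{a\sim\{0,1\}^n_{m-q}}{P(a)=\omega^m}\le 3b\varepsilon$ (the second because $P(a)=\omega^m$ forces $P(a)\neq\omega^{m-q}$). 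Then $Q:=P-\omega^m$ vanishes on all but a $3b\varepsilon$-fraction of $\{0,1\}^n_m$ and is non-zero on all but a $3b\varepsilon$-fraction of $\{0,1\}^n_{m-q}$, so the choice of $q$ (and $b\varepsilon$ being small) makes the hypotheses of Lemma~\ref{lem:main} hold with $k=m$, $K=m-q$, $\alpha=1/2$, $\delta=q/n$. That lemma gives $\deg(Q)=\Omega(q)$, and since $\deg(Q)\le\deg(P)=O(d)$ this yields $d=\Omega(q)=\Omega(\sqrt{n\log(1/(b\varepsilon))})$, where in the extreme regime one uses $\log(1/(b\varepsilon))\le n$ (as $\varepsilon>1/2^n$) to see the bound is $\Omega(n)$ there.

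In characteristic $0$ Lemma~\ref{lem:main} does not apply, and I would instead use Smolensky's original counting argument. Fixing $\bm{P}$'s randomness and averaging produces a single $P$ over $\F'$ of degree $\le d$ that agrees with $x\mapsto\omega^{|x|}$ on a set $G\subseteq\{0,1\}^n$ of density $\ge 1-b\varepsilon$; passing to a prime factor of $b$ one may take $b$ prime. Rewriting each monomial $\chi_S$ with $|S|$ large as $\chi_{[n]}\cdot\chi_{[n]\setminus S}$ and replacing $\chi_{[n]}$ by $P$ shows that every $\F'$-valued function on $G$ is represented on $G$ by a polynomial of degree $\le\lceil n/2\rceil+O(d)$; comparing the number of such polynomials with $|\F'|^{|G|}$ forces $d=\Omega(\sqrt n)$, which already proves the statement when $\log(1/(b\varepsilon))=\Theta(1)$. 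To recover the full $\log(1/(b\varepsilon))$ factor one performs the counting only over $G$ intersected with a band of $\Theta(\sqrt{n\log(1/(b\varepsilon))})$ Hamming levels near the middle, exploiting that $b\varepsilon$ is tiny so that $P$ is correct on almost all of this band — the same amplification phenomenon, driven by the smallness of the error, that the parameter $\ell$ captures in Lemma~\ref{lem:Dij}.

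The routine parts are the root-of-unity reduction and the $\Omega(\sqrt n)$ counting. The main obstacle is the amplification to $\Omega(\sqrt{n\log(1/(b\varepsilon))})$: in characteristic $p$, verifying that a single power-of-$p$ gap $q$ of the right order simultaneously satisfies both error hypotheses of Lemma~\ref{lem:main} and handling the regime $\varepsilon<2^{-\Omega(n)}$; in characteristic $0$, making the band-restricted counting actually go through. A secondary nuisance is reducing composite $b$ to a prime factor coprime to $\charac(\F)$ so that the degree-reduction identity behaves as cleanly as in the $\MOD_2$ case.
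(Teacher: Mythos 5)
The paper does not prove Lemma~\ref{lem:modlbd} at all: it is cited as a classical result of Smolensky (and its refinements), and it plays no role in any of the paper's arguments beyond being stated for comparison with Corollary~\ref{cor:lbd-modq}. So there is no internal proof to compare against; I can only assess your sketch on its own terms.

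Your root-of-unity reduction (pass to an extension $\F'\supseteq\F$ containing a primitive $b$-th root of unity $\omega$, combine $\bm{P}_i$ into $\bm{P}=\sum_i\omega^i\bm{P}_i$ approximating $a\mapsto\omega^{|a|}$ with error $b\varepsilon$, and apply error reduction to make $b\varepsilon$ as small as you like at constant degree cost) is clean and correct. However, the two continuations have genuine gaps.

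In the positive-characteristic branch, using Lemma~\ref{lem:main} requires $q$ to be a power of $p=\charac(\F)$ in the window $\Theta(\sqrt{n\log(1/(b\varepsilon))})$. That is possible only when $p$ is a fixed constant, so that consecutive powers of $p$ are within a constant factor of each other. But the lemma as stated allows any $p$ coprime to $b$, including $p$ growing with $n$ (for example $p>n$, which is automatically coprime to $b\le n/2$). In that regime the only power of $p$ below $n/200$ is $q=1$ and the argument yields nothing. This is exactly the regime where the problem behaves like characteristic zero, and you would need the counting argument there too. More to the point, invoking the paper's main theorem to re-derive a 1980s preliminary is the reverse of the logical order: a self-contained counting proof, valid over every field, is both available and more honest here.

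In the characteristic-zero branch the sketch is too coarse at both of its load-bearing steps. The degree-reduction identity $\chi_S=\chi_{[n]}\cdot\chi_{[n]\setminus S}$ is a $\{-1,1\}$ (i.e.~$b=2$) fact; for $b\ge3$ the substitution $y_i=1+(\omega-1)x_i\in\{1,\omega\}$ gives $y_i^b=1$ rather than $y_i^2=1$, so the correct identity is $\chi_S=\chi_{[n]}\cdot\prod_{i\notin S}y_i^{\,b-1}$ followed by multilinearization of each $y_i^{\,b-1}$ to a degree-one factor; this works, but it is not what you wrote, and the reduction to a prime factor of $b$ is not needed once you do it properly. More seriously, the claim that ``performing the counting only over a band of $\Theta(\sqrt{n\log(1/(b\varepsilon))})$ Hamming levels'' recovers the $\log(1/\varepsilon)$ factor is asserted, not argued. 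The monomial basis does not restrict cleanly to a slice or a band, the dimension count changes in a nontrivial way, and this is precisely the part of the statement that goes beyond Smolensky's $\Omega(\sqrt n)$ and requires the kind of slice/closure machinery that, e.g., \cite{KS} or Lemma~\ref{lem:Dij} provide. As written, the central amplification step of the lemma remains unproved.
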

	
		We now show how to use our robust version of Heged\H{u}s's lemma to prove Theorem~\ref{thm:pdeglbd}. In fact, Lemma~\ref{lem:Dij} will suffice for this application.

	\subsubsection{Strategy and two simple examples}
	\label{sec:lbdchar+}
	
	The probabilistic degree lower bounds below will use the following corollary of Lemma~\ref{lem:Dij}.
	
	\begin{corollary}
	\label{cor:Dij}
		Let $n$ be a growing parameter and assume $\varepsilon \in [2^{-n/100},e^{-200}].$ Assume $t$ is an integer such that $t$ is a power of $p$ and furthermore, $t = \sqrt{n\ell}$ for some $\ell\in \mathbb{R}$ such that $100\leq \ell\leq \frac{1}{2}\cdot \ln(1/\varepsilon).$  Let $h\in \sB_n$ be any function such that $\spec h(\lfloor n/2\rfloor) \neq \spec h(\lfloor n/2\rfloor -t)$. Then, $\pdeg_{\varepsilon}(h) = \Omega(t).$
	\end{corollary}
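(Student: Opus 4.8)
The plan is to derive Corollary~\ref{cor:Dij} from Lemma~\ref{lem:Dij} by exhibiting a polynomial that solves the relevant distinguishing problem between the two layers $\lfloor n/2\rfloor - t$ and $\lfloor n/2\rfloor$, and then reading off the degree bound. First I would let $\bm{P}$ be an $\varepsilon$-error probabilistic polynomial for $h$ of degree $d = \pdeg_\varepsilon(h)$; the goal is to show $d = \Omega(t)$, which combined with $t = \Theta(t)$ gives the claim. Since $h$ is symmetric, on each fixed layer $\{0,1\}^n_m$ the function $h$ is constant, equal to $\spec h(m)$. Without loss of generality (by replacing $\bm{P}$ with $1-\bm{P}$ if needed, or by tracking which of the two spectrum values is $0$) I would arrange that $\spec h(\lfloor n/2\rfloor - t) = 0$ and $\spec h(\lfloor n/2\rfloor) = 1$; the hypothesis $\spec h(\lfloor n/2\rfloor) \neq \spec h(\lfloor n/2\rfloor - t)$ guarantees exactly one of these two normalized sub-cases occurs, and in the other case one negates the roles of the two layers, appealing to Remark~\ref{rem:negation}.

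Next I would pass from the probabilistic polynomial to a single deterministic polynomial using an averaging/Markov argument, exactly as is done in the proof of Theorem~\ref{thm:coinproblem}. For a uniformly random $\bm{a}\sim\{0,1\}^n_{\lfloor n/2\rfloor - t}$ we have $\prob{\bm{P},\bm{a}}{\bm{P}(\bm{a}) = h(\bm{a}) = 0^{c}} $ — more precisely $\avg{\bm{P}}{\psi_{\lfloor n/2\rfloor - t}(\bm{P})} \le \varepsilon$ (probability that $\bm{P}$ errs, since $h \equiv 0$ there), and similarly $\avg{\bm{P}}{1 - \psi_{\lfloor n/2\rfloor}(\bm{P})} \le \varepsilon$ because $h \equiv 1$ on the middle layer and a nonzero value of a deterministic realization of $\bm{P}$ that equals $h$ is nonzero. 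Hence $\avg{\bm{P}}{\psi_{\lfloor n/2\rfloor}(\bm{P})} \ge 1 - \varepsilon \ge e^{-\ell/2}$ using $\varepsilon \le e^{-200}$ and $\ell \le \frac12\ln(1/\varepsilon)$, so $e^{-\ell/2}\le \varepsilon^{-1/4}\cdot$, wait — I should just note $1-\varepsilon \ge 1/2 \ge e^{-\ell/2}$ since $\ell \ge 100$. By Markov's inequality applied to the two nonnegative random variables $\psi_{\lfloor n/2\rfloor - t}(\bm{P})$ and $1 - \psi_{\lfloor n/2\rfloor}(\bm{P})$, there is a fixed realization $P$ in the support of $\bm{P}$ with $\psi_{\lfloor n/2\rfloor - t}(P) \le 3\varepsilon$ and $\psi_{\lfloor n/2\rfloor}(P) \ge 1 - 3\varepsilon \ge e^{-\ell/2}$ (again using $\ell\ge 100$ and $\varepsilon$ small).

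Now I would apply Lemma~\ref{lem:Dij} to $P$ with its error parameter set to $\varepsilon' := 3\varepsilon$. I need to check the hypotheses of that lemma: $t$ is a power of $p$ (given); $t = \sqrt{n\ell}$ with $100 \le \ell \le \frac12\ln(1/\varepsilon)$ (given), and I need $\ell \le \frac12\ln(1/\varepsilon')= \frac12\ln(1/(3\varepsilon))$, which holds up to a harmless adjustment of constants since $\ln(1/(3\varepsilon)) = \ln(1/\varepsilon) - \ln 3$ and $\varepsilon$ is small (alternatively one absorbs the factor $3$ by taking $\ell$ slightly below the stated upper bound, or by noting $\varepsilon' \le e^{-199}$); also $\varepsilon' \in [2^{-n/100}, e^{-200}]$ needs the lower endpoint, which follows from $\varepsilon \ge 2^{-n/100}$ as long as $n$ is large, and the upper endpoint from $\varepsilon \le e^{-200}/3$ — so strictly I would state the corollary with $\varepsilon \in [2^{-n/100}, e^{-201}]$ or simply remark that the constants in the interval can be adjusted. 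Condition~(\ref{eq:n/2-t}) becomes $\psi_{\lfloor n/2\rfloor - t}(P) \le 3\varepsilon = \varepsilon'$, which holds, and condition~(\ref{eq:n/2}) becomes $\psi_{\lfloor n/2\rfloor}(P) \ge e^{-\ell/2}$, which also holds. Lemma~\ref{lem:Dij} then gives $\deg(P) \ge t/25$, and since $\deg(P) \le \deg(\bm{P}) = \pdeg_\varepsilon(h)$, we conclude $\pdeg_\varepsilon(h) \ge t/25 = \Omega(t)$, as desired.

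The only genuinely delicate point — and the step I expect to require the most care — is the bookkeeping of the absolute constants in the error interval and in the hypothesis $\ell \le \frac12\ln(1/\varepsilon)$ after inflating $\varepsilon$ to $3\varepsilon$ via Markov; everything else is a routine symmetrization-plus-averaging reduction identical in spirit to the coin-problem argument. One clean way to sidestep the constant-chasing entirely is to first apply the error-reduction fact (Fact~\ref{fac:pdeg}, item~1) to replace $\bm{P}$ by an $(\varepsilon/10)$-error probabilistic polynomial of degree $O(\pdeg_\varepsilon(h))$ before symmetrizing, so that after the Markov step the resulting deterministic polynomial still satisfies the hypotheses with the original $\varepsilon$; this only costs a constant factor in the final degree bound, which is absorbed by the $\Omega(\cdot)$.
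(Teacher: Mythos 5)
Your proposal is correct and uses essentially the same approach as the paper: reduce from probabilistic polynomial to a single deterministic polynomial by averaging, then apply Lemma~\ref{lem:Dij}. The main technical difference is cosmetic: you invoke Markov separately on the two nonnegative quantities $\psi_{\lfloor n/2\rfloor-t}(\bm{P})$ and $1-\psi_{\lfloor n/2\rfloor}(\bm{P})$ and take a union bound, which inflates $\varepsilon$ to $3\varepsilon$ and triggers the constant-chasing you rightly flag; the paper instead first does error reduction to $\varepsilon/2$ (Fact~\ref{fac:pdeg}, item~1) and then averages the \emph{sum} of the two error quantities, so a single averaging step produces a $P$ with both quantities $\leq\varepsilon$ and no constant inflation occurs. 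Your ``cleaner way'' at the end is exactly the paper's route.

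One small inaccuracy worth fixing: in the WLOG step, the right normalization is to replace $h$ by $1-h$ (equivalently $\bm{P}$ by $1-\bm{P}$), which swaps the two spectrum values while keeping the same layers $\lfloor n/2\rfloor$ and $\lfloor n/2\rfloor-t$; appealing to Remark~\ref{rem:negation} does not do this — that remark replaces $x_i\mapsto 1-x_i$ and moves you to layers $\lceil n/2\rceil$ and $\lceil n/2\rceil+t$, which is a different reduction and not what you want here. Since you also state the correct $1-\bm{P}$ fix in the same sentence, this is only a slip in exposition, not a gap in the argument.
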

	
	\begin{proof}
	By error reduction for probabilistic polynomials (Fact~\ref{fac:pdeg} item 1), it suffices to prove an $\Omega(t)$ lower bound on $\pdeg_{\varepsilon/2}(h).$
	
	Assume without loss of generality that $\spec h(\lfloor n/2\rfloor) =1$ and $\spec h(\lfloor n/2\rfloor -t)=0.$ Let $\bm{P}$ be an $(\varepsilon/2)$-error probabilistic polynomial for $h$. Then, we have
	\begin{align*}
	\prob{\bm{P},\bm{a}\sim \{0,1\}^n_{\lfloor n/2\rfloor}}{\bm{P}(\bm{a}) \neq 1} &\leq \varepsilon/2\\
	\prob{\bm{P},\bm{b}\sim \{0,1\}^n_{\lfloor n/2\rfloor-t}}{\bm{P}(\bm{b}) \neq 0} &\leq \varepsilon/2
	\end{align*}
	
	Thus, we have 
	\[
	\avg{\bm{P}}{\prob{\bm{a}\sim \{0,1\}^n_{\lfloor n/2\rfloor}}{\bm{P}(\bm{a}) \neq 1}  + \prob{\bm{b}\sim \{0,1\}^n_{\lfloor n/2\rfloor-t}}{\bm{P}(\bm{b}) \neq 0}}\leq \varepsilon,
	\]
	and hence, by averaging, there is a polynomial $P$ in the support of the distribution of $\bm{P}$ such that 
	\[
	\prob{\bm{a}\sim \{0,1\}^n_{\lfloor n/2\rfloor}}{P(\bm{a}) \neq 1}  + \prob{\bm{b}\sim \{0,1\}^n_{\lfloor n/2\rfloor-t}}{P(\bm{b}) \neq 0} \leq \varepsilon.
	\]
	Applying Lemma~\ref{lem:Dij} to $P$ yields
	\[
	\deg(\bm{P}) \geq \deg(P)  = \Omega(t).\qedhere
	\]
	\end{proof}
	
	To illustrate the usefulness of Corollary~\ref{cor:Dij}, we prove optimal lower bounds on the probabilistic degrees for two interesting classes of functions (both of which will be subsumed by Theorem~\ref{thm:pdeglbd}). 
	
	\begin{corollary}
	\label{cor:lbd-modq}
	Let $\varepsilon \in (0,1/3]$ be a constant. Let $q$ be any integer relatively prime to $p$ such that $q\leq 0.99n.$ Then the $\varepsilon$-error probabilistic degrees of $\EThr_n^{\lfloor n/2 \rfloor}$ and $\MOD_n^q$ are $\Omega(\sqrt{n}).$
	\end{corollary}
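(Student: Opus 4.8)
The plan is to deduce both bounds from Corollary~\ref{cor:Dij}. That corollary reduces the task to producing, in each case, a symmetric function $h$ on $n'=\Theta(n)$ variables and a power $t$ of $p$ with $t=\Theta(\sqrt{n'})$ such that $\spec h(\lfloor n'/2\rfloor)\neq\spec h(\lfloor n'/2\rfloor-t)$; then Corollary~\ref{cor:Dij} gives $\pdeg(h)=\Omega(t)=\Omega(\sqrt n)$ at a suitably small error, and Observation~\ref{obs:pdeg-restriction} transfers the bound to the target function. Since here $\varepsilon\in(0,1/3]$ is only a constant, whereas Corollary~\ref{cor:Dij} requires error at most $e^{-200}$, I would first replace $\varepsilon$ by $\varepsilon_1:=\min\{\varepsilon,\varepsilon_0\}$ for a small enough absolute constant $\varepsilon_0=\varepsilon_0(p)$ (chosen so that $\tfrac12\ln(1/\varepsilon_0)\ge 200 p^2$); by error reduction (Fact~\ref{fac:pdeg}, item~1) this costs only a factor $O(1)$, depending on $p$ and $\varepsilon$, so it suffices to lower bound $\pdeg_{\varepsilon_1}(h)$. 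For such $\varepsilon_1$ the interval $[100n,\,\tfrac{n}{2}\ln(1/\varepsilon_1)]$ has multiplicative width at least $p^2$ and hence contains some $p^{2j}$; taking $t=p^j$ and $\ell=t^2/n'$ yields a power of $p$ with $t=\sqrt{n'\ell}$, $\ell\in[100,\tfrac12\ln(1/\varepsilon_1)]$, and $t=\Theta_p(\sqrt{n'})$, exactly as Corollary~\ref{cor:Dij} demands.

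For $\EThr_n^{\lfloor n/2\rfloor}$ I would take $n'=n$ and $h=\EThr_n^{\lfloor n/2\rfloor}$ itself. Its spectrum is $1$ at $\lfloor n/2\rfloor$ and $0$ at every other weight, so $\spec h(\lfloor n/2\rfloor)=1\neq 0=\spec h(\lfloor n/2\rfloor-t)$ for every $t$ with $1\le t\le\lfloor n/2\rfloor$. Corollary~\ref{cor:Dij} then gives $\pdeg_{\varepsilon_1}(\EThr_n^{\lfloor n/2\rfloor})=\Omega(t)=\Omega(\sqrt n)$, and hence $\pdeg_\varepsilon(\EThr_n^{\lfloor n/2\rfloor})=\Omega(\sqrt n)$.

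For $\MOD_n^q$ (we may assume $q\ge 2$, the case $q=1$ being degenerate) I would pass to a restriction. Fixing $a_1$ input bits to $1$ and the remaining fixed bits to $0$ turns $\MOD_n^q=\MOD_n^{q,0}$ into $\MOD_{n'}^{q,i}$ with $i\equiv-a_1\pmod q$, and by Observation~\ref{obs:pdeg-restriction} the probabilistic degree of this restriction lower-bounds that of $\MOD_n^q$. Set $n'=n-q$ (so $n'\ge 0.01n=\Theta(n)$ since $q\le 0.99n$, and $n-n'=q$), let $m'=\lfloor n'/2\rfloor$, and pick $a_1\in\{0,\dots,q-1\}$ with $-a_1\equiv m'\pmod q$; since $n-n'=q\ge a_1$ this is a legitimate restriction and realizes $h:=\MOD_{n'}^{q,i}$ with $i\equiv m'\pmod q$, so $\spec h(m')=1$. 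Because $\gcd(p,q)=1$ and $q\ge 2$, no power of $p$ is divisible by $q$, so $t\not\equiv 0\pmod q$, whence $m'-t\not\equiv i\pmod q$ and $\spec h(m'-t)=0\neq 1=\spec h(m')$. Applying Corollary~\ref{cor:Dij} with $n'$ in place of $n$ gives $\pdeg_{\varepsilon_1}(h)=\Omega(t)=\Omega(\sqrt{n'})=\Omega(\sqrt n)$, and combining this with Observation~\ref{obs:pdeg-restriction} and the error-reduction step yields $\pdeg_\varepsilon(\MOD_n^q)=\Omega(\sqrt n)$.

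The main point requiring care is the choice of restriction for $\MOD_n^q$: one must arrange that the residue class $i$ defining the restricted $\MOD$ function equals $m'\bmod q$, rather than an arbitrary residue, because then the inequality $\spec h(m')\neq\spec h(m'-t)$ holds automatically for \emph{every} admissible $t$, using only $q\nmid t$; if instead $i$ were pinned to a nonzero residue we would need a power of $p$ lying in a prescribed nonzero residue class modulo $q$, which need not exist. The remaining ingredients — that $t$ can be chosen simultaneously a power of $p$, of order $\sqrt n$, with $\ell$ in the admissible range; that $m'-t\ge 0$; and that $n'=\Theta(n)$ throughout — are routine verifications handled by the estimates above.
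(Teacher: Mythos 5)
Your proof is correct and follows essentially the same route as the paper: for $\EThr_n^{\lfloor n/2\rfloor}$ you apply Corollary~\ref{cor:Dij} directly, and for $\MOD_n^q$ you restrict to $n-q$ variables by fixing $a_1 \equiv -\lfloor(n-q)/2\rfloor \pmod q$ bits to $1$ so that the restricted MOD function takes value $1$ at weight $\lfloor(n-q)/2\rfloor$, then use that a power of $p$ is coprime to $q$ to get the required spectral flip. The only cosmetic difference is the order of operations around the error parameter: you first shrink $\varepsilon$ to a constant $\varepsilon_1$ depending on $p$ and then choose $t$ a power of $p$ in the resulting admissible window, whereas the paper picks $t$ as the smallest power of $p$ exceeding $\sqrt{100n}$, sets $\varepsilon_0 = e^{-2t^2/n}$ to match Corollary~\ref{cor:Dij} exactly, and applies error reduction at the end — both are routine and equivalent.
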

	
	Known lower bounds (Lemmas~\ref{lem:majlbd} and~\ref{lem:modlbd}) can be used to prove similar lower bounds to the one given above, but with additional log-factor losses~(see Lemma~\ref{lem:modlbd}, which requires the error to be subconstant, and \cite{STV}). However, we do not know how to prove the above tight (up to constants) lower bound without appealing to Lemma~\ref{lem:Dij}. In particular, we do not know how to prove the above in characteristic $0$. 
	
	\begin{proof}
	We use Corollary~\ref{cor:Dij}. We will use $\EThr_n^{\lfloor n/2\rfloor}$ and $\MOD_n^q$ to construct functions that distinguish between weights $\lfloor n/2\rfloor$ and $\lfloor n/2\rfloor-t$ for suitable $t = \Omega(\sqrt{n}).$ Corollary~\ref{cor:Dij} then implies the required lower bound.
	
	For $h=\EThr_n^{\lfloor n/2\rfloor}$, note that $\spec h(\lfloor n/2\rfloor) \neq \spec h(\lfloor n/2\rfloor - t)$ for any $t < \lfloor n/2\rfloor$. In particular, setting $t$ to be the smallest power of $p$ such that $t \geq \sqrt{100 n}$ and $\varepsilon_0 = e^{-2t^2/n},$ we get by Corollary~\ref{cor:Dij} that $\pdeg_{\varepsilon_0}(h) = \Omega(t) = \Omega(\sqrt{n}).$ By error-reduction for probabilistic polynomials (Fact~\ref{fac:pdeg} item 1), we also have the same lower bound (up to constant factors) for any $\varepsilon \leq 1/3.$ This proves the claim in the case that $h = \EThr_n^{\lfloor n/2\rfloor}.$
	
	For $h = \MOD_n^q$, we make some minor modifications to the above idea. Let $r\in [0,q-1]$ be such that $r + \lfloor (n-q)/2\rfloor \equiv 0\pmod{q}.$ Define $h'\in \sB_{n-q}$ by 
	\[
	h'(x) = h(x1^r0^{q-r}).
	\]
	Set $t$ to be the smallest power of $p$ such that $t \geq \sqrt{100 (n-q)}$ and $\varepsilon_0 = e^{-2t^2/(n-q)}.$ Note that $\spec h'(\lfloor (n-q)/2\rfloor) = \spec h(r+\lfloor (n-q)/2\rfloor) = 1$ as $r + \lfloor (n-q)/2\rfloor \equiv 0\pmod{q}.$ On the other hand,  $r + \lfloor (n-q)/2\rfloor -t \not\equiv 0\pmod{q}$ as $t$ is a power of $p$ and hence not divisible by $q$, which implies that $\spec h'(\lfloor (n-q)/2\rfloor -t ) = 0$. Thus, by Corollary~\ref{cor:Dij}, we get $\pdeg_{\varepsilon_0}(h') = \Omega(t) = \Omega(\sqrt{n}).$ 
	\end{proof}

	\subsubsection{Proof of Theorem~\ref{thm:pdeglbd}}
	
	The proof of this theorem closely follows our probabilistic degree lower bounds in~\cite{STV} with careful modifications to avoid the log-factor losses therein.
	
	Let $f\in \sB_n$ be arbitrary and let $(g,h)$ be a standard decomposition of $f$.

	We start with a lemma that proves lower bounds on $\pdeg_\varepsilon(f)$ as long as $\per(g)$ is large.
	
	\begin{lemma}
	\label{lem:aperiodic}
	Fix any $\varepsilon\in [2^{-n},1/3].$ Assume that $f$ is such that $\per(g) > \sqrt{n\log(1/\varepsilon)}.$ Then
	\[
	\pdeg_\varepsilon(f) = \Omega(\sqrt{n\log(1/\varepsilon)}).
	\]
	\end{lemma}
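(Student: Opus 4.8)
The plan is to reduce this to a statement about distinguishing two Hamming weights near $n/2$ and then invoke Corollary~\ref{cor:Dij}. Since $\per(g) = b > \sqrt{n\log(1/\varepsilon)}$, the periodic part $g$ takes different values at two weights that differ by less than $b$; moreover we want a difference that is a power of $p$, so that Lemma~\ref{lem:Dij} (via Corollary~\ref{cor:Dij}) applies. The first step is to choose $t$ to be an appropriate power of $p$ with $t = \Theta(\sqrt{n\log(1/\varepsilon)})$ and $t < b$: concretely, let $t$ be the largest power of $p$ that is at most (a small constant times) $\sqrt{n\log(1/\varepsilon)}$, so that $\ell := t^2/n$ lies in the window $[100, \tfrac12\ln(1/\varepsilon)]$ required by Corollary~\ref{cor:Dij}. (If $\sqrt{n\log(1/\varepsilon)}$ is below an absolute constant there is nothing to prove; and the gap between consecutive powers of $p$ costs only a constant factor since $p$ is fixed.)

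Next I would locate a suitable pair of weights. Because $b = \per(g) > t$ and $b$ is the \emph{exact} period, Corollary~\ref{cor:string} tells us that the length-$b$ windows of $\spec g$ starting at different residues mod $b$ are all distinct; in particular $\spec g$ is not $t$-periodic, so somewhere in $[0,n]$ there are two indices at distance exactly $t$ where $\spec g$ differs. The issue is that we need this witnessing pair to sit symmetrically around the midpoint of some sub-cube — i.e. at weights $\lfloor m/2\rfloor$ and $\lfloor m/2\rfloor - t$ of a restriction to $m$ variables, with $m = \Theta(n)$ — so that Corollary~\ref{cor:Dij} can be fed the restricted function. I would handle this by restricting $f$ (setting some coordinates to $0$ and some to $1$, using Observation~\ref{obs:pdeg-restriction}) to an $m$-variable function $f'$ on $m = \Theta(n)$ variables, chosen so that (i) the periodic part of $f'$ still has period $b$ (restricting by a multiple of $b$, or a bounded amount, on each side preserves the relevant window of the spectrum in the middle third, by Observation~\ref{obs:decomp} and the definition of the standard decomposition), and (ii) the midpoint $\lfloor m/2\rfloor$ and $\lfloor m/2\rfloor - t$ fall in a region where $\spec{f'}$ agrees with $\spec g$ (again the middle-third region), and there $\spec{f'}(\lfloor m/2\rfloor) \neq \spec{f'}(\lfloor m/2\rfloor - t)$ because $g$ is not $t$-periodic. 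Since $\pdeg_\varepsilon(f) \geq \pdeg_\varepsilon(f')$, it suffices to lower-bound $\pdeg_\varepsilon(f')$.

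Then apply Corollary~\ref{cor:Dij} to $f'$ with this choice of $t$: it yields $\pdeg_\varepsilon(f') = \Omega(t) = \Omega(\sqrt{n\log(1/\varepsilon)})$, and hence the same bound for $f$. I expect the main obstacle to be the bookkeeping in the reduction step — guaranteeing that after restriction the witnessing pair of differing spectrum values lands exactly symmetrically about the midpoint of the restricted cube while keeping the number of remaining variables $\Theta(n)$. The key facts that make this work are that $\per(g) \le \lfloor n/3\rfloor$ and $B(h) \le \lceil n/3\rceil$ (Observation~\ref{obs:decomp}), so that the middle third of the cube is governed purely by $g$, and that $t = o(n)$, so there is ample room to slide the window into the desired symmetric position by a restriction of size $O(n)$ (in fact a bounded amount, plus possibly a multiple of $b$). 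A secondary technical point is the error-reduction step already absorbed inside Corollary~\ref{cor:Dij}, which lets us start from the error $\varepsilon_0 = e^{-2\ell}$ natural to Lemma~\ref{lem:Dij} and then conclude for the target $\varepsilon$; here one must check $\varepsilon \in [2^{-n}, 1/3]$ is compatible with the window $\ell \le \tfrac12\ln(1/\varepsilon)$, which holds after possibly shrinking $t$ by a constant factor.
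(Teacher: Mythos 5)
Your overall approach is the same as the paper's: choose $t$ a power of $p$ of order $\sqrt{n\log(1/\varepsilon)}$, locate a pair of weights at distance $t$ where the spectrum of $f$ differs, restrict to an $m$-variable subcube so that this pair lands at $\lfloor m/2\rfloor-t$ and $\lfloor m/2\rfloor$, and then invoke Corollary~\ref{cor:Dij}. The parameter discussion (powers of $p$, the window $\ell\in[100,\tfrac12\ln(1/\varepsilon)]$, error reduction at the boundary cases) is also handled similarly.

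There is, however, a genuine gap in the step where you locate the witnessing pair. You argue via Corollary~\ref{cor:string} that ``$\spec g$ is not $t$-periodic, so somewhere in $[0,n]$ there are two indices at distance exactly $t$ where $\spec g$ differs,'' and you later assert that after a suitable restriction $\spec{f'}(\lfloor m/2\rfloor)\neq\spec{f'}(\lfloor m/2\rfloor-t)$ ``because $g$ is not $t$-periodic.'' But global non-$t$-periodicity of $g$ only gives you \emph{some} pair, which could sit near the edges of $[0,n]$, where $\spec f$ and $\spec g$ do \emph{not} necessarily agree (they only agree on the middle third $I=[\lceil n/3\rceil+1,\lfloor 2n/3\rfloor]$). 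You cannot then conclude that a pair where $\spec g$ differs is a pair where $\spec f$ differs, which is what the restriction argument needs. Translating the pair by a multiple of $b$ to land it in $I$ does not obviously work either, since $b$ can be as large as $\lfloor n/3\rfloor$, essentially the length of $I$. The paper instead gets the pair inside $I$ directly from the \emph{minimality} of $\per(g)$ in the standard decomposition: since $\per(g)>m$, no symmetric function of period $m$ agrees with $f$ on $I$, hence $\spec f|_I$ cannot be $m$-periodic (else one could extend it $m$-periodically to all of $[0,n]$ and contradict minimality), so there exist $r,r+m\in I$ with $\spec f(r)\neq\spec f(r+m)$. This is the missing observation; once you replace your appeal to Corollary~\ref{cor:string} with this minimality argument, the rest of your bookkeeping (choosing $a$ so that the pair maps to $\lfloor m/2\rfloor-t$ and $\lfloor m/2\rfloor$, keeping $m=\Theta(n)$, and checking $a,\,n-m-a\geq 0$) goes through exactly as you sketch.
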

	
	\begin{proof}
	We first prove the lemma under the assumption that $\varepsilon\in [2^{-n/1000},e^{-10000p^2}].$  	
	
	Fix $m$ to be the largest power of $p$ upper bounded by $ \frac{1}{4}\sqrt{n\log(1/\varepsilon)}$. 
	
	Since $\per(g) > \sqrt{n\log(1/\varepsilon)} \geq m,$ there is no function $g'\in \sB_n$ that has period $m$ and agrees with $f$ on the interval $I := [\lceil n/3\rceil+1, \lfloor 2n/3\rfloor].$ Thus, there exists some $r\in I$ such that $r + m \in I$ and $\spec f(r) \neq \spec f(r + m).$ 
	
	Let $k = \lceil n/2\rceil$. Note that $r \geq \lceil n/3\rceil \geq  k/2$ and $r + m\leq \lfloor 2n/3\rfloor.$ Define $F\in \sB_k$ by setting 
	\[
	F(x) = f(x1^{a}0^{b})
	\]
	where $a= r+m-\lfloor k/2\rfloor$ and $b = n-k-a$ (it can be checked that $a,b$ are non-negative for parameters $r,m,k$ as above). Note that $\spec F(\lfloor k/2\rfloor) = \spec f(\lfloor k/2\rfloor + a) = \spec f(r+m)$ and similarly that $\spec F(\lfloor k/2\rfloor - m) = \spec f(r).$ We thus obtain $\spec F(\lfloor k/2\rfloor) \neq \spec F(\lfloor k/2\rfloor - m).$

	Note that by the bounds on $\varepsilon$ assumed above
	\begin{equation}
	\label{eq:char+aperiodic}
	m\geq \frac{1}{4p} \sqrt{n\log(1/\varepsilon)}\geq 20\sqrt{n}.
	\end{equation}	
	 Using Corollary~\ref{cor:Dij}, we hence get 
	\[
	\pdeg_{\varepsilon}(f) \geq \pdeg_{\varepsilon/2}(F) = \Omega(m) = \Omega(\sqrt{n\log(1/\varepsilon)})
	\]
	which proves the lemma under the assumption on $\varepsilon$ above. (We use the bounds on $\varepsilon$ to ensure that $2^{-k/200}\leq \varepsilon\leq e^{-2m^2/k},$ which is part of the hypothesis of Corollary~\ref{cor:Dij}.)
	
	If $\varepsilon \in [2^{-n},2^{-n/10000p^2}],$ then for $\varepsilon_0 = 2^{-n/10000p^2},$ we have
	\[
	\pdeg_{\varepsilon}(f) \geq \pdeg_{\varepsilon_0}(f) = \Omega(\sqrt{n\log(1/\varepsilon_0)}) = \Omega(\sqrt{n\log(1/\varepsilon)})
	\]
	which implies the desired lower bound.\footnote{Note that we assume that the characteristic is a fixed positive constant and hence the $\Omega(\cdot)$ can hide constants depending on $p$.}
	
	On the other hand, if $\varepsilon > e^{-10000p^2}$, we proceed as follows. We construct $F$ as above, but we may no longer have $m\geq 20\sqrt{n}$ as implied by (\ref{eq:char+aperiodic}). However, for $F'\in \sB_{k'}$ defined by
	\[
	F'(x) = F(x0^t1^t)
	\]
	for suitably chosen $t\leq k/2$, we can ensure that $m \in [10\sqrt{k'},20\sqrt{k'}].$ Note that $\spec F'(\lfloor k'/2\rfloor) = \spec F(\lfloor k/2\rfloor)$ and $\spec F'(\lfloor k'/2\rfloor - m) = \spec F(\lfloor k/2\rfloor - m)$. Hence, for $\varepsilon_1 = e^{-10000}$, Corollary~\ref{cor:Dij}  implies 
	\[
	\pdeg_{\varepsilon_1}(f) \geq \pdeg_{\varepsilon_1}(F') = \Omega(m) = \Omega(\sqrt{n\log(1/\varepsilon_1)}).
	\]
	By error reduction (Fact~\ref{fac:pdeg} item 1), the same lower bound holds for $\pdeg_\varepsilon(f)$ as well.
	\end{proof}
	
	The next lemma allows us to prove a weak lower bound on $\pdeg_\varepsilon(f)$ depending only on its periodic part $g$.	
	
	\begin{lemma}\label{lem:preg-f-weak-char+}
	For any $\varepsilon\in [2^{-n},1/3],$
		\[
		\pdeg_\varepsilon(f)\ge\begin{cases}
		\Omega(\sqrt{n\log(1/\varepsilon)}),&\tx{if }\per(g)\tx{ is not a power of }p\\
		\Omega(\min\{\per(g),\sqrt{n\log(1/\varepsilon)}\}),&\tx{if }\per(g)\tx{ is a power of }p.
		\end{cases}
		\]
	\end{lemma}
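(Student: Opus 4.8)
Write $b := \per(g)$. The plan is to reduce, in every case, to a single application of Corollary~\ref{cor:Dij} to a suitable restriction of $f$, after first normalising the error parameter. Exactly as in the proof of Lemma~\ref{lem:aperiodic}, I would first pass to a convenient range of $\varepsilon$: for $\varepsilon$ larger than a suitable absolute constant, apply error reduction (Fact~\ref{fac:pdeg}, item 1) starting from a fixed constant error; and for $\varepsilon < 2^{-n/1000}$, use the monotonicity $\pdeg_\varepsilon(f) \geq \pdeg_{\varepsilon_0}(f)$ with $\varepsilon_0 = 2^{-n/1000}$, observing that the target bound at $\varepsilon_0$ is $\Omega$ of the target bound at $\varepsilon$ (the former is $\Theta(n)$ in the first case and $\Omega(b)$ in the second, $b \leq \lfloor n/3\rfloor$, and the target at $\varepsilon$ is at most $n$). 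So I may assume $\varepsilon \in [2^{-n/1000}, e^{-C}]$ for a large absolute constant $C$. Now if $b > \sqrt{n\log(1/\varepsilon)}$, Lemma~\ref{lem:aperiodic} already gives $\pdeg_\varepsilon(f) = \Omega(\sqrt{n\log(1/\varepsilon)})$, which dominates the claimed bound in both branches (in the second branch $\min\{b,\sqrt{n\log(1/\varepsilon)}\} = \sqrt{n\log(1/\varepsilon)}$ here); so I may further assume $b \leq \sqrt{n\log(1/\varepsilon)}$, and note this forces $b \leq \sqrt{n\log(1/\varepsilon)} \leq n/30$.

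The core step is to exhibit a power of $p$, call it $t$, together with a transition of $\spec f$ at gap $t$ inside the middle third $I := [\lceil n/3\rceil, \lfloor 2n/3\rfloor]$, i.e. positions $m, m-t \in I$ with $\spec f(m) \neq \spec f(m-t)$, where $t = \Omega(\min\{b,\sqrt{n\log(1/\varepsilon)}\})$ and $t$ is small enough for Corollary~\ref{cor:Dij} to apply after a restriction. If $b$ is a power of $p$, take $t$ to be the largest power of $p$ that is at most $\min\{b/p,\ n/20\}$; since $b \leq \lfloor n/3\rfloor$, this is a constant multiple of $b$, so $t = \Omega(b) = \Omega(\min\{b,\sqrt{n\log(1/\varepsilon)}\})$. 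If $b$ is not a power of $p$, take instead $t$ to be a power of $p$ of size $\Theta(\sqrt{n\log(1/\varepsilon)})$ with a small enough leading constant. In both cases $t \not\equiv 0 \pmod{b}$: this is automatic when $b$ is not a power of $p$ (no power of $p$ is a multiple of $b$), and when $b = p^j$ it holds because $t \leq b/p < b$. Since $\spec g$ is $b$-periodic, a transition at gap $t$ exists iff one at gap $t \bmod b$ does; as $1 \leq t \bmod b < b = \per(g)$, the minimality of $\per(g)$ forbids $\spec g$ from being $(t \bmod b)$-periodic, so some transition at gap $t$ exists. Using $b$-periodicity together with $b \leq \lfloor n/3\rfloor$ and $t \leq n/20$ (Observation~\ref{obs:decomp} and $b \leq n/30$ give that $I$ has length larger than $b+t$), I may slide the location of this transition so that both endpoints lie in $I$. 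Finally $B(h) \leq \lceil n/3\rceil$ (Observation~\ref{obs:decomp}), so $\spec h$ is constant on $I$, and the transition of $\spec g$ inside $I$ is therefore also a transition of $\spec f = \spec g \oplus \spec h$.

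Given such a transition, say between $m$ and $m-t$ in $I$, I would restrict $f$ to a symmetric function $F \in \sB_k$ by fixing $a := m - \lfloor k/2\rfloor$ of the inputs to $1$ and the remaining $n-k-a$ to $0$, so that $\spec F(\lfloor k/2\rfloor) \neq \spec F(\lfloor k/2\rfloor - t)$; here $k$ is a suitable even integer with $k \leq n$ and $t = \sqrt{k\ell}$ for some $\ell \geq 100$ (so $k$ is a constant multiple of $\min\{t^2,n\}$, hence at least a large absolute constant, as otherwise $b$ and the claimed bound are constants), and $a, n-k-a \geq 0$ because $m$ can be taken within $O(b)$ of $\lfloor n/2\rfloor$ and $k \leq n$. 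I would then pick an error $\varepsilon'$ with $\varepsilon \leq \varepsilon' \leq e^{-200}$, $2^{-k/100} \leq \varepsilon'$, and $\ell \leq \tfrac12\ln(1/\varepsilon')$: when $t \leq 10\sqrt{n}$ one takes $k = t^2/100$ (so $\ell = 100$) and when $t > 10\sqrt{n}$ one is forced to take $k = \Theta(n)$ (so $\ell = \Theta(t^2/n)$), and in each sub-case the hypotheses $b \leq \sqrt{n\log(1/\varepsilon)}$, $b \leq \lfloor n/3\rfloor$ and $\varepsilon \leq e^{-C}$ can be checked to leave enough slack to meet all three conditions simultaneously (this is where the cap $n/20$ on $t$ is used). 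Corollary~\ref{cor:Dij} applied to $F$ with error $\varepsilon'$ then yields $\pdeg_{\varepsilon'}(F) = \Omega(t)$, and by Observation~\ref{obs:pdeg-restriction} and $\varepsilon' \geq \varepsilon$ we conclude $\pdeg_\varepsilon(f) \geq \pdeg_{\varepsilon'}(F) = \Omega(t) = \Omega(\min\{b,\sqrt{n\log(1/\varepsilon)}\})$, which is the claimed bound in both branches.

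The one place I expect to need genuine care is the bookkeeping in the last paragraph: choosing $t$, $k$ and $\varepsilon'$ so that the somewhat rigid hypotheses of Corollary~\ref{cor:Dij} — in particular the lower bound $\varepsilon' \geq 2^{-k/100}$ paired with $\ell \leq \tfrac12\ln(1/\varepsilon')$ — hold simultaneously with $\varepsilon' \geq \varepsilon$, across the regime where $k$ is a genuine restriction and the regime where $k = \Theta(n)$, and uniformly over every fixed characteristic $p$, including small $p$ for which $b = \per(g)$ can be comparatively large; this is precisely why $t$ is capped at $n/20$ (rather than simply at $b/p$) in the power-of-$p$ case. The remaining steps — the periodicity and minimal-period argument for the existence of the gap-$t$ transition, sliding it into $I$, and the interference-free behaviour of $h$ on $I$ — are routine.
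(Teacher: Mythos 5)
Your plan is correct and follows the paper's overall strategy---normalize $\varepsilon$, dispose of the case $\per(g) > \sqrt{n\log(1/\varepsilon)}$ via Lemma~\ref{lem:aperiodic}, exhibit a gap-$t$ transition of $\spec f$ in the middle third with $t$ a power of $p$ of size $\Omega(\min\{\per(g),\sqrt{n\log(1/\varepsilon)}\})$, recentre by restriction, and feed it to Corollary~\ref{cor:Dij}---but it streamlines the power-of-$p$ case in a real and useful way. In the paper, when $b=\per(g)$ is a power of $p$, the transition is not extracted directly: the paper restricts $f$ to a function $F\in\sB_m$ for a carefully chosen $m$ and error $\delta$ satisfying (P1)--(P4), invokes Corollary~\ref{cor:string} (the word-combinatorics fact that distinct windows of a period-$b$ spectrum differ) to show the periodic part $G$ of $F$ still has $\per(G)\ge b$, arranges $\sqrt{m\log(1/\delta)}<b$, and only then appeals to Lemma~\ref{lem:aperiodic}, which in turn locates a power-of-$p$ gap and calls Corollary~\ref{cor:Dij}. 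You instead observe directly that $t:=b/p$ is a power of $p$ strictly below $\per(g)$, so minimality of the period forces a gap-$t$ transition of $\spec g$; this sidesteps both Corollary~\ref{cor:string} and the detour through Lemma~\ref{lem:aperiodic}, at the acceptable cost of proving $\Omega(b/p)$ rather than $\Omega(b)$ (the theorem's $\Omega$ already hides $p$-dependent constants). Your non-power-of-$p$ case matches the paper's (largest power of $p$ below $\Theta(\sqrt{n\log(1/\varepsilon)})$; its residue mod $b$ is nonzero and less than $b$), and the parameter-matching you flag at the end is exactly the kind of routine verification the paper carries out for (P1)--(P4). One tiny caveat: you take the middle third to be $I=[\lceil n/3\rceil,\lfloor 2n/3\rfloor]$, whereas the standard decomposition only guarantees $\spec f=\spec g$ on $[\lceil n/3\rceil+1,\lfloor 2n/3\rfloor]$; shrinking $I$ by one on the left fixes this with no loss.
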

	
	\begin{proof}	
	By Fact~\ref{fac:pdeg} item 1 (error reduction), we know that $\pdeg_\varepsilon(g) = \Theta(\pdeg_\delta(g))$ as long as $\delta = \varepsilon^{\Theta(1)}.$ In particular, we may assume without loss of generality that $\varepsilon\in [2^{-n/10000},e^{-10000p^2}].$
	
		Let $b := \per(g)$. If $\per(g) > \sqrt{n\log(1/\varepsilon)}$, we are done by Lemma~\ref{lem:aperiodic}. So we assume that $b\leq \sqrt{n\log(1/\varepsilon)}.$ In particular, this implies that $b \leq n/100.$
	
	We have two cases.
	
		\subparagraph*{\(b\) is not a power of \(p\).}  Let $n_1$ be the largest power of $p$ upper bounded by $\frac{1}{4}\sqrt{n\log(1/\varepsilon)}.$ By the constraints on $\varepsilon$, we have $10\sqrt{n}\leq n_1 \leq n/100.$ 
		
		Let $b_1\in [0,b-1]$ such that $b_1\equiv n_1 \pmod{b}$; note that $b_1\neq 0$ as $b$ is not a power of $p$. As $b_1$ is smaller than $b = \per(g)$, there must exist $r \in [0, n-b_1]$ such that 
		\[
		\spec g(r) \neq \spec g(r+b_1).
		\]
		Assume that we choose the smallest $r \geq n/2$ so that this condition holds. Then we have $r\leq n/2 + b \leq 51\cdot n/100.$ Fix this $r$. As $\spec g(r) \neq \spec g(r+b_1),$ we also have $\spec g(r) \neq \spec g(r+b_1+k\cdot b)$ for any integer $k$ such that $0\leq r+b_1+kb\leq n.$ In particular, as $b_1\equiv n_1\pmod{b},$ we note that $\spec g(r) \neq \spec g(r+n_1).$ As $n_1\leq n/100,$ we have
		\[
		n/2 \leq r\leq r+n_1 \leq n/2 + n/50.
		\]
		As $\spec g(i) = \spec f(i)$ for all $i\in [\lceil n/3\rceil + 1, \lfloor 2n/3\rfloor ]$, we have $\spec f(r) \neq \spec f(r+n_1).$ Without loss of generality, we assume that $\spec f(r) = 0$ and $\spec f(r+n_1) = 1.$
		
		Let $m = \lceil n/2\rceil.$ We define $F\in \sB_m$ as follows.
		\[
		F(x) = f(x1^a0^{n-m-a})
		\]
		where $a$ is chosen so that $\spec F(\lfloor m/2\rfloor) = \spec f(r+n_1) = 1$. This also has the consequence that $\spec F(\lfloor m/2\rfloor - n_1) = \spec f(r) = 0.$ By Corollary~\ref{cor:Dij}, we get $\pdeg_{\varepsilon}(F) = \Omega(n_1) = \Omega(\sqrt{n\log(1/\varepsilon)}),$ proving the lemma in this case.

		\subparagraph*{\(b\) is a power of \(p\).} In this case, we first choose parameters $m,\delta$ with the following properties.
		\begin{enumerate}
		\item[(P1)] $m\in [n]$ with $m\geq 20b$ and $m\equiv n \pmod{2}.$
		\item[(P2)] $1/3\geq \delta \geq \max\{\varepsilon,1/2^{m}\}.$
		\item[(P3)] $\sqrt{m\log(1/\delta)} < b.$
		\item[(P4)] $\sqrt{m\log(1/\delta)} = \Omega(\min\{b,\sqrt{n\log(1/\varepsilon)}\}) = \Omega(b).$  (Recall that $b\leq \sqrt{n\log(1/\varepsilon)}$.)
		\end{enumerate}
		
		We will show below how to find $m,\delta$ satisfying these properties. Assuming this for now, we first prove the lower bound on $\pdeg_\varepsilon(f).$  
		
		Define $F\in \sB_m$ as follows. 
		\[
		F(x) = f(x0^{t}1^{t})
		\]
		for $t = (n-m)/2$. We observe that if $(G,H)$ is a standard decomposition of $F$, then $\per(G) \geq b.$ To see this, note that by Corollary~\ref{cor:string}, we have 
		\[
		\spec g|_{[\lfloor n/2\rfloor, \lfloor n/2\rfloor + b-1 ]} \neq \spec g|_{[\lfloor n/2\rfloor + i, \lfloor n/2\rfloor + i + b-1 ]}
		\]
		for any $i\in [b-1]$. As $f$ and $g$ agree on inputs of weight from $[\lfloor n/3\rfloor + 1,\lfloor 2n/3\rfloor],$ the same non-equality holds for $\spec f$ also. Further, as $\spec F(\lfloor m/2\rfloor + j) = \spec f(\lfloor n/2\rfloor + j)$ for $j\leq m/2$, we also get
		\[
		\spec F|_{[\lfloor m/2\rfloor, \lfloor m/2\rfloor + b-1 ]} \neq \spec F|_{[\lfloor m/2\rfloor + i, \lfloor m/2\rfloor + i + b-1 ]}.
		\]
		for any $i\in [b-1]$ (we have used here the fact that $m\geq 20b$ which holds by (P1)). Finally, as $F$ and $G$ agree on inputs of weight from $[\lfloor m/3\rfloor+1, \lfloor 2m/3\rfloor]\supseteq [\lfloor m/2\rfloor, \lfloor m/2\rfloor + 2b],$ the above non-equality holds for $G$ as well. This implies that $G$ cannot have period smaller than $b$.
		
		By (P3), we have $\per(G) > \sqrt{m\log(1/\delta)}.$ Lemma~\ref{lem:aperiodic} above and (P4) now imply that $\pdeg_\delta(F) = \Omega(\min\{b,\sqrt{n\log(1/\varepsilon)}\}).$ However, as $\delta\geq \varepsilon$ (by (P2)) and $F$ is a restriction of $f$, the same lower bound holds for $\pdeg_\varepsilon(f)$ as well. This proves the lemma modulo the existence of $m,\delta$ as above. We justify this now. 
		
		\begin{enumerate}
		\item If $b\leq 10\sqrt{n},$ we take $m$ to be the largest integer such that $m\equiv n \pmod{2}$ and $m\leq b^2/100.$ The parameter $\delta$ is set to $1/3.$ 
		\item If $10\sqrt{n} < b \leq n/100,$ then we take $m$ to be the largest integer such that $m\equiv n \pmod{2}$ and $m\leq n/2.$   The parameter $\delta = \max\{\varepsilon,2^{-b^2/2m}\}.$
		\end{enumerate}
		Note that as observed above, we have $b \leq n/100,$ and hence, the above analysis subsumes all cases.
		
		In each case, the verification of properties (P1)-(P4) is a routine computation. (We assume here that $b$ is greater than a suitably large constant, since otherwise the statement of the lemma is trivial.) This concludes the proof.
\end{proof}		

	We now prove a lower bound on $\pdeg_\varepsilon(h).$
	
	\begin{lemma}
	\label{lem:preg-h-char+}
	Assume $B(h)\geq 1$. Then, $\varepsilon\in [2^{-n},1/3],$
		\[
		\pdeg_\varepsilon(h)=\Omega(\sqrt{B(h)\log(1/\varepsilon)}+\log(1/\varepsilon)).
		\]
	\end{lemma}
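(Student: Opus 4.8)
The plan is to prove separately that $\pdeg_\varepsilon(h) = \Omega(\log(1/\varepsilon))$ and that $\pdeg_\varepsilon(h) = \Omega(\sqrt{B(h)\log(1/\varepsilon)})$; since $\pdeg_\varepsilon(h)$ is at least the larger of the two quantities, and a maximum is at least half a sum, this yields the lemma. Write $k := B(h)$ and assume $k$ exceeds a suitable absolute constant (otherwise $\sqrt{k\log(1/\varepsilon)} + \log(1/\varepsilon) = O(\log(1/\varepsilon))$ and only the first bound is needed). Since $h$ is the bounded part of $f$, Observation~\ref{obs:decomp} gives $k \le \lceil n/3\rceil$. As $h$ is $k$-bounded but not $(k-1)$-bounded, $\spec h$ is constant on $[k, n-k]$ but not on $[k-1, n-k+1]$, so $\spec h(k-1) \ne \spec h(k)$ or $\spec h(n-k) \ne \spec h(n-k+1)$; replacing $h$ by $x \mapsto h(1-x_1, \ldots, 1-x_n)$ if necessary (which affects neither $B(\cdot)$ nor $\pdeg_\varepsilon(\cdot)$), we may assume $\spec h(k-1) \ne \spec h(k)$. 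Set $c := \spec h(k) = \cdots = \spec h(n-k)$, so $\spec h(k-1) = 1-c$.

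For the bound $\Omega(\log(1/\varepsilon))$: restrict $h$ to $m := n - 2k + 1 = \Omega(n)$ variables by fixing $k-1$ of the inputs to $1$ and $k$ of them to $0$. The resulting $F \in \sB_m$ has $\spec F = (1-c, c, c, \ldots, c)$, so $F$ equals $\Thr^1_m$ up to a global negation; hence $\pdeg_\varepsilon(h) \ge \pdeg_\varepsilon(F) = \pdeg_\varepsilon(\Thr^1_m)$ by Observation~\ref{obs:pdeg-restriction}. Applying Lemma~\ref{lem:thr-lbd} with error $\max\{\varepsilon, 2^{-m}\}$ (to stay in its admissible range) and using $\log(1/\varepsilon) \le n = O(m)$ gives $\pdeg_\varepsilon(h) = \Omega(\min\{\log(1/\varepsilon), m\}) = \Omega(\log(1/\varepsilon))$.

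For the bound $\Omega(\sqrt{k\log(1/\varepsilon)})$: if $\log(1/\varepsilon)$ exceeds $c_0 k$ for a small enough absolute constant $c_0$, then $\sqrt{k\log(1/\varepsilon)} = O(\log(1/\varepsilon))$ and we are done by the previous paragraph, so assume $\log(1/\varepsilon) \le c_0 k$. Assume also first that $\varepsilon$ is below a fixed small constant $\varepsilon_0 = \varepsilon_0(p)$. Let $q$ be the largest power of $p$ at most $\frac{1}{11}\sqrt{(k-1)\ln(1/4\varepsilon)}$; then $q = \Theta(\sqrt{k\log(1/\varepsilon)})$ (with the constant depending on $p$), and the bounds $\log(1/\varepsilon)\le c_0 k$ and $k \le \lceil n/3\rceil$ make the side conditions $100 q < k_0 < n - 100 q$ of Lemma~\ref{lem:main} hold, where $k_0 \in \{k-1, k-1+q\}$ is the one of the two layers $\{0,1\}^n_{k-1}, \{0,1\}^n_{k-1+q}$ on which $h$ is $0$ (note $k-1+q \le n-k$, so $\spec h$ equals $1-c$ and $c$ on these two layers respectively). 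Now let $\bm P$ be any $\varepsilon$-error probabilistic polynomial for $h$; averaging its error over these two layers and applying Markov's inequality gives a fixed $P$ in its support with $\prob{\bm a \sim \{0,1\}^n_{k_0}}{P(\bm a) \ne 0} \le 4\varepsilon$ and $\prob{\bm a \sim \{0,1\}^n_{k_1}}{P(\bm a) \ne 0} \ge 1 - 4\varepsilon$, where $k_1$ is the other layer. Since $q = \Theta(\sqrt{k\log(1/\varepsilon)})$ and $k_0 = \Theta(k)$ we have $q^2/k_0 = \Theta(\log(1/\varepsilon))$, which — once $\varepsilon_0$ is chosen small enough, depending on $p$ — makes both $4\varepsilon \le \min\{e^{-100 q^2/k_0}, 1/1000\}$ and $1 - 4\varepsilon \ge e^{-q^2/(100 k_0)}$ hold; these are exactly hypotheses (\ref{eq:mainlem0a}) and (\ref{eq:mainlem0b}) of Lemma~\ref{lem:main} (with $K = k_1 = k_0 \pm q$), so the lemma gives $\deg(P) = \Omega(q)$, hence $\pdeg_\varepsilon(h) = \Omega(q) = \Omega(\sqrt{k\log(1/\varepsilon)})$. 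Finally, for $\varepsilon \in (\varepsilon_0, 1/3]$, run this argument at the fixed error $\varepsilon_0$ to get $\pdeg_{\varepsilon_0}(h) = \Omega(\sqrt k)$ and transfer it to $\varepsilon$ via Fact~\ref{fac:pdeg} item 1, which loses only a constant factor since $\varepsilon_0$ and $\varepsilon$ are both constants; as $\sqrt{k\log(1/\varepsilon)} = O(\sqrt k)$ in this range, this suffices.

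The routine part is verifying the side conditions above and tracking the admissible ranges of $\varepsilon$ (which is why the two case splits appear). I expect the real point to be the use of Lemma~\ref{lem:main} for the $\sqrt{k\log(1/\varepsilon)}$ term: the bounded part $h$ need not contain any threshold-like sub-function on $\Theta(k)$ variables, so the threshold lower bound (Lemma~\ref{lem:thr-lbd}) and Smolensky-type arguments — all of which require a polynomial to approximate a fixed function on a large fraction of the cube — cannot be invoked; all one can extract from $B(h) = k$ is that $h$ genuinely separates the two adjacent Hamming weights $k-1$ and $k$, and it is precisely the robustness of Lemma~\ref{lem:main} (a single pair of layers at Hamming distance $q$ already forces degree $\Omega(q)$) that converts this local separation into the bound $\Omega(\sqrt{k\log(1/\varepsilon)})$.
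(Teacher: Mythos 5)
Your proof is correct but follows a genuinely different route from the paper's. For the $\Omega(\log(1/\varepsilon))$ term both proofs restrict $h$ to a threshold-type function on $\Theta(n)$ variables and invoke Lemma~\ref{lem:thr-lbd} (the paper fixes variables so that the jump of $\spec h$ sits at the top, producing an AND; you fix them so the jump sits at the bottom, producing $\Thr^1_m$ -- cosmetic difference). The real divergence is in the $\Omega(\sqrt{B(h)\log(1/\varepsilon)})$ term. The paper restricts $h$ to a function $H$ on $2B(h)-2$ variables whose periodic part has period $\Omega(B(h))$ and then hands $H$ to Lemma~\ref{lem:aperiodic}, which after a further restriction shifts the interesting pair of layers to the centre of the cube so that Corollary~\ref{cor:Dij} (hence the middle-layer special case, Lemma~\ref{lem:Dij}) applies. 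You instead skip all the restrictions and apply the full Lemma~\ref{lem:main} directly at the layers $k-1$ and $k-1\pm q$ -- which lie near $n/3$, not $n/2$ -- after fixing a good polynomial from the support of $\bm P$ by a Markov argument (the same device as in Corollary~\ref{cor:Dij}, just at off-centre layers). Your approach is shorter and avoids re-deriving a periodicity property of the restricted function, at the price of needing the general-layer statement Lemma~\ref{lem:main} rather than only the special case Lemma~\ref{lem:Dij}. The side-condition bookkeeping you sketch ($q=\Theta(\sqrt{k\log(1/\varepsilon)})$, $100q<k_0<n-100q$, the hypotheses (\ref{eq:mainlem0a})--(\ref{eq:mainlem0b}) holding once $\varepsilon_0(p)$ and $c_0$ are small enough) all checks out, as does the final error-reduction step for $\varepsilon\in(\varepsilon_0,1/3]$. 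One small imprecision in your closing remark: a bounded $h$ can very well contain a threshold-like restriction on $\Theta(k)$ variables (take $h=\EThr^{k-1}_n$, restrict to $2(k-1)$ coordinates), so the issue is not that such sub-functions are absent but that Lemma~\ref{lem:thr-lbd} applies only to actual threshold functions and cannot convert a single adjacent-layer jump of $\spec h$ into a $\sqrt{k\log(1/\varepsilon)}$ bound; your central point -- that the robust Heged\H{u}s lemma is what turns a two-layer separation into this term -- is of course the right one.
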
	
	
	\begin{proof}
	Similar to the proof of Lemma~\ref{lem:preg-f-weak-char+}, we may assume without loss of generality that $\varepsilon\in [2^{-n/10000},e^{-10000p^2}].$ 
	
	Let $B(h) = b.$ Recall (Observation~\ref{obs:decomp}) that $B(h) \leq \lceil n/3\rceil.$ Further, by definition of $B(h),$ we have either $\spec h(b-1)=1$ or $\spec h(n-b+1)=1.$ We assume that $\spec h(n-b+1) = 1$ (the other case is similar).
	
	The lemma is equivalent to showing that $\pdeg_{\varepsilon}(h) = \Omega(\max\{\sqrt{B(h)\log(1/\varepsilon)},\log(1/\varepsilon)\})$. We do this based on a case analysis based on the relative magnitudes of $\log(1/\varepsilon)$ and $b.$
	
	Assume for now that $\varepsilon \leq 2^{-b/1000}$. In this case, we show a lower bound of $\Omega(\log(1/\varepsilon)).$ To see this, set $m = \lceil n/4\rceil$ and consider the restriction $H\in \sB_{m}$ obtained as follows.
	\[
	H(x) = h(x1^{n-b+1-m}0^{b-1}).
	\]
	Note that as $\spec h$ is the constant $0$ function on the interval $[b,n-b],$ the function $H$ is computing the AND function on $m$ inputs. By Lemma~\ref{lem:thr-lbd}, we immediately have $\pdeg_{\varepsilon}(h) \geq \pdeg_\varepsilon(H) = \Omega(\log(1/\varepsilon))$ proving the lemma in this case.
	
	Now assume that $\varepsilon > 2^{-b/1000}.$ In this case, we need to show that $\pdeg_\varepsilon(h)$ is lower bounded by  $\Omega(\sqrt{b\log(1/\varepsilon)}).$ To prove this, consider the restriction $H\in \sB_{2b-2}$ defined by $H(x) = h(x1^{n-2b+2}).$ Since $\spec h$ is the constant $0$ function on the interval $[b,n-b]$ and $\spec h(n-b+1) = 1,$ it follows that the periodic part of $H$ has period $\Omega(b)$. It then follows from Lemma~\ref{lem:aperiodic} that $\pdeg_\varepsilon(h) = \Omega(\sqrt{b\log(1/\varepsilon)}).$ This concludes the proof of the lemma.
	\end{proof}
	
	Now, we are ready to prove Theorem~\ref{thm:pdeglbd}.
	
	\begin{proof}[Proof of Theorem~\ref{thm:pdeglbd}]
	By Lemma~\ref{lem:preg-f-weak-char+}, we already have the desired lower bound on $\pdeg_\varepsilon(f)$ in any of the following scenarios.
	
	\begin{itemize}
	\item $\per(g)$ is not a power of $p$, or
	\item $\per(g)$ is a power of $p$ and $\per(g) \geq \sqrt{n\log(1/\varepsilon)},$ or
	\item $B(h) = 0.$
	\end{itemize}
	
	So from now, we assume that $\per(g)$ is a power of $p$ upper-bounded by $\sqrt{n\log(1/\varepsilon)}$ and that $B(h)\geq 1.$ In this case, Lemma~\ref{lem:preg-f-weak-char+} shows that $\pdeg(f) = \Omega(\per(g)).$ On the other hand, since $B(h) \leq n$ and $\varepsilon\geq 2^{-n},$ the lower bound we need to show is $\Omega(\per(g) + \sqrt{B(h)\log(1/\varepsilon)}+\log(1/\varepsilon)).$ By Lemma~\ref{lem:preg-h-char+}, it suffices to show a lower bound of $\Omega(\per(g)+\pdeg_\varepsilon(h)).$
	
	The analysis splits into two simple cases.
	
	Assume first that $\pdeg_{\varepsilon}(h) \leq 4\cdot\per(g)$. In this case, we are trivially done, because we already have $\pdeg(f) = \Omega(\per(g))$, which is $\Omega(\pdeg(g) + \pdeg_\varepsilon(h))$ as a result of our assumption.
	
	Now assume that $\pdeg_\varepsilon(h) > 4\cdot \per(g).$ We know that $f = g\oplus h$ and hence $h = f\oplus g.$ Hence, we have
	\begin{align*}
	\pdeg_\varepsilon(h) \leq 2(\pdeg_{\varepsilon/2}(f) + \pdeg_{\varepsilon/2}(g)) \leq  O(\pdeg_{\varepsilon}(f)) + 2 \per(g),
	\end{align*}
	where the first inequality is a consequence of Fact~\ref{fac:pdeg} item 2 and the second follows from error-reduction and Theorem~\ref{thm:ubd-STV}. The above yields
	\[
	\pdeg_{\varepsilon}(f) = \Omega( (\pdeg_\varepsilon(h) - 2\cdot \per(g)) ) = \Omega(\pdeg_\varepsilon(h))  = \Omega(\per(g) + \pdeg_\varepsilon(h)).
	\]
	This finishes the proof.
	\end{proof}
	
	\subsection{A Robust Version of Galvin's Problem}
	\label{sec:galvin}

	We recall here a combinatorial theorem of Heged\H{u}s~\cite{Hegedus} regarding set systems. The theorem (and also our robust generalization given below) is  easier to prove in the language of indicator vectors, so we state it in this language.
	
	Given any vectors $u,v\in \F^n$ for any field $\F,$ we define $\ip{u}{v} := \sum_{j\in [n]} u_j v_j.$
	
	\begin{theorem}
	\label{thm:Hegedus}
	Assume $n = 4p$, for a  large enough prime $p$. Let $u^{(1)},\ldots,u^{(m)}\in \{0,1\}^n_{n/2}\subseteq \mathbb{Z}^n$ be such that for each $v\in \{0,1\}^n_{n/2}$, there is an $i\in [m]$ such that $\ip{u^{(i)}}{v} = p.$ Then $m \geq p.$
	\end{theorem}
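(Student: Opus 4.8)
The plan is to argue by contradiction using the polynomial method, following the classical derivation of Galvin-type bounds from Heged\H{u}s's lemma. Suppose $m \le p-1$; note that $m \ge 1$, since otherwise the hypothesis on the $u^{(i)}$ would be vacuously false on the non-empty set $\{0,1\}^n_{n/2}$. Working over the prime field $\F_p$, consider
\[
P(x_1,\dots,x_n) \;=\; \prod_{i=1}^m \big(\ip{u^{(i)}}{x} - p\big) \;\in\; \F_p[x_1,\dots,x_n],
\]
which has degree at most $m \le p-1$. The hypothesis says exactly that $P$ vanishes at every point of Hamming weight $n/2 = 2p$: given such a $v$, there is an $i$ with $\ip{u^{(i)}}{v} = p$, so the $i$-th factor of $P(v)$ vanishes.

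Next I would apply Heged\H{u}s's lemma (Lemma~\ref{lem:Hegedus}) in contrapositive form with $k = 2p$ and $q = p$. The hypotheses hold: $q = p$ is a power of the characteristic, $k = 2p$ lies in $[q, n-q] = [p, 3p]$, and $\deg(P) \le p-1 < q$. Since $P$ vanishes throughout $\{0,1\}^n_k$, the lemma forces it to vanish throughout $\{0,1\}^n_{k+q} = \{0,1\}^n_{3p}$ as well. (Equivalently, applying the lemma to $P(\mathbf 1 - x)$ would push the vanishing down to the weight-$p$ layer; either layer works for the next step.)

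Finally I would derive a contradiction from the fact that $P$ vanishes on the entire weight-$3p$ layer. Writing $S_i := \supp(u^{(i)})$, so that $|S_i| = 2p$, note that for any $T \subseteq [n]$ with $|T| = 3p$ one has $|T \cap S_i| \in [p, 2p]$ by inclusion-exclusion; hence $\ip{u^{(i)}}{\mathbf 1_T} - p = |T \cap S_i| - p$ is divisible by $p$ if and only if $|T \cap S_i| \in \{p, 2p\}$. So $P(\mathbf 1_T) = 0$ forces some $i$ with $|T \cap S_i| \in \{p, 2p\}$, and this must occur for every $3p$-subset $T$. But for a fixed $i$ the number of $3p$-subsets $T$ with $|T \cap S_i| = 2p$ is $\binom{2p}{p}$, and likewise the number with $|T \cap S_i| = p$ is $\binom{2p}{p}$, so the sets $S_1, \dots, S_m$ account for at most $2m\binom{2p}{p}$ subsets in all. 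Since there are $\binom{4p}{3p} = \binom{4p}{p}$ subsets to account for,
\[
m \;\ge\; \frac{1}{2}\cdot\frac{\binom{4p}{p}}{\binom{2p}{p}} \;=\; \frac{1}{2}\prod_{j=1}^{p}\frac{3p+j}{p+j} \;\ge\; 2^{p-1},
\]
using that each factor $\frac{3p+j}{p+j}$ is at least $\frac{4p}{2p} = 2$. For $p$ large this contradicts $m \le p-1$, completing the proof. The argument is short, and I do not expect a genuine obstacle; the only points requiring care are verifying that $k = 2p$, $q = p$ actually satisfy the hypotheses of Lemma~\ref{lem:Hegedus} (especially that $q$ is a prime power and $k$ lies in the required interval), and getting the final counting right — in particular the observation that on the weight-$3p$ layer the relevant inner products are confined to $[p, 2p]$, so that divisibility by $p$ pins them to exactly the two values $p$ and $2p$.
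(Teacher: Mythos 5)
Your proof is correct. The paper does not actually reprove Theorem~\ref{thm:Hegedus} — it only states it and cites Heged\H{u}s — so there is no paper proof to compare against, but your argument is the natural and expected route: form the product polynomial $P = \prod_i(\ip{u^{(i)}}{x} - p)$ over $\F_p$ (where, since $p\equiv 0$, each factor is just $\ip{u^{(i)}}{x}$), note that the covering hypothesis forces $P$ to vanish on the entire weight-$2p$ layer, invoke the contrapositive of Lemma~\ref{lem:Hegedus} with $k=2p$, $q=p$ to push the vanishing up to the weight-$3p$ layer, observe that there $P(\mathbf 1_T)=0$ pins some $|T\cap S_i|$ to $\{p,2p\}$, and then double-count. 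The counting step is fine: for each $i$, exactly $\binom{2p}{p}$ many $3p$-sets $T$ contain $S_i$ and exactly $\binom{2p}{p}$ many meet it in exactly $p$ points, so $2m\binom{2p}{p}\ge\binom{4p}{3p}=\binom{4p}{p}$, giving $m\ge\frac12\prod_{j=1}^p\frac{3p+j}{p+j}\ge 2^{p-1}$ under the contradiction hypothesis $m\le p-1$; this is a contradiction already for every prime $p\ge 2$, so the ``large enough $p$'' hypothesis is not even needed for your argument. Your structure also closely parallels the paper's proof of the robust analogue (Theorem~\ref{thm:robustHegedus}), which replaces the Heged\H{u}s step with the robust main lemma and the exact double count with a probabilistic estimate, so your proposal is well aligned with how the paper thinks about this family of results.
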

	
	The above theorem is nearly tight as can be seen by taking the indicator vectors of the sets $S_i = \{i,(i+1),\ldots,i+(n/2)-1\}$ for $i\in [n/2]$. Improvements on the above theorem (some of them asymptotically tight) were proved recently by Alon et al.~\cite{AKV} and Hrube\v{s} et al.~\cite{HRRY}.
	
Using the robust version of Heged\H{u}s's lemma, we can prove tight robust versions of the above statement.

	\begin{remark}
	\label{rem:robustcaveat}
We can prove a robust generalization (stated below) in a slightly more general setting where the $i$th inner product $\ip{u^{(i)}}{v}$ is supposed to take a value $b_i$ (which is not necessarily~$p$). Similar to Theorem~\ref{thm:Hegedus} above, it is easy to note that our robust version is tight up to constant factors. 

However, if we consider the robust version of the original statement of Theorem~\ref{thm:Hegedus} (where all the inner products take value $p$), then while our lower bound continues to hold, it is not clear whether it is tight (except in the settings where $\varepsilon$ is either a constant or $2^{-\Omega(n)}$). We conjecture that it is. 
	\end{remark}

	We now prove a robust version of Theorem~\ref{thm:Hegedus}.
	
	\begin{theorem}
	\label{thm:robustHegedus}
	Assume $n$ is a growing even integer parameter and $\varepsilon \in [2^{-n}, 1/2].$. Let $u^{(1)},\ldots,u^{(m)}\in \{0,1\}^n_{n/2}\subseteq \mathbb{Z}^n$ and $b_1,\ldots,b_m\leq n$ be such that 
	\[
	\prob{\bm{v}\sim \{0,1\}^n_{n/2}}{\text{$\exists i\in [m]$ s.t. $\ip{u^{(i)}}{\bm{v}} = b_i$}} \geq 1-\varepsilon.
	\]
	 Then $m = \Omega(\sqrt{n\log(1/\varepsilon)}).$
	\end{theorem}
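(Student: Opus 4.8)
The plan is to imitate Heged\H{u}s's own deduction of Theorem~\ref{thm:Hegedus} from Lemma~\ref{lem:Hegedus}, but with the robust lemma (Lemma~\ref{lem:main}) playing the role of the classical one. Since the hypothesis is purely combinatorial, I am free to fix the underlying field to be $\F_p$. I would form the polynomial
\[
P(x)=\prod_{i=1}^{m}\bigl(\ip{u^{(i)}}{x}-b_i\bigr)\in\F_p[x_1,\dots,x_n]
\]
and replace it by its multilinearization, which has degree at most $m$ and agrees with the product on $\{0,1\}^n$. Each factor $\ip{u^{(i)}}{x}-b_i$ vanishes at every $v\in\{0,1\}^n$ with $\ip{u^{(i)}}{v}=b_i$ (integer equality certainly implies congruence modulo $p$), so the hypothesis of the theorem immediately yields
\[
\prob{\bm v\sim\{0,1\}^n_{n/2}}{P(\bm v)\neq 0}\le\varepsilon .
\]

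Next I would set $q$ to be a power of $p$ with $q=\Theta(\sqrt{n\log(1/\varepsilon)})$, chosen small enough that, writing $k=\lfloor n/2\rfloor$, $\alpha=\tfrac12$ and $\delta=q/n$, the displayed bound above is at most $\min\{e^{-100\delta^2 n/\alpha},1/1000\}$; this is exactly the constraint $q=O(\sqrt{n\log(1/\varepsilon)})$, and rounding down to a genuine power of $p$ costs only a constant factor. The two extreme regimes of $\varepsilon$ are dealt with just as in Section~\ref{sec:generalcase}: when $\log(1/\varepsilon)$ is bounded the assertion is a constant lower bound (hence trivial), and when $\log(1/\varepsilon)=\Omega(n)$ one first restricts $P$ and the set system onto $n'\le n$ variables — by fixing a balanced subset of coordinates, so that the restricted $u^{(i)}$'s stay near half weight — so that the relevant parameters land inside the admissible window $100q<k<n-100q$ of Lemma~\ref{lem:main}. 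With $q$ so chosen, I apply Lemma~\ref{lem:main} in contrapositive form, with this $k$ and $K=k+q$: if $\deg P=m$ were $o(q)$, then $P$ would be forced to vanish on all but an $e^{-\delta^2 n/100\alpha}$-fraction of $\{0,1\}^n_{K}$.

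It then remains to contradict this, i.e.\ to exhibit at least an $e^{-\delta^2 n/100\alpha}$-fraction of $v\in\{0,1\}^n_{K}$ at which $P$ does not vanish — equivalently, a non-negligible fraction of $v\in\{0,1\}^n_K$ with $\ip{u^{(i)}}{v}\not\equiv b_i\pmod p$ for every $i\in[m]$. This is the step I expect to be the main obstacle. The tool is the extremal inequality $\ip{u^{(i)}}{v}\in[\max\{0,|v|-\tfrac n2\},\min\{|v|,\tfrac n2\}]$, which for $|v|=K$ forces each inner product into the window $[q,\tfrac n2]$; this, together with an anti‑concentration/correlation estimate for a uniformly random $v\in\{0,1\}^n_K$ (each $\ip{u^{(i)}}{\bm v}$ is a hypergeometric, spread essentially uniformly modulo $p$ with deviations of order $\sqrt n$), should give the required lower bound on the ``good'' fraction, provided the estimate can be made uniform over the set system — a clean lower bound on the probability that all $m$ congruences fail at once. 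Granting this count, we obtain $m=\Omega(q)=\Omega(\sqrt{n\log(1/\varepsilon)})$; and, up to constant factors, this bound is tight for general $b_i$ via the cyclic‑interval construction indicated in Remark~\ref{rem:robustcaveat}.
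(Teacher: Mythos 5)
Your outline is the same overall strategy the paper uses (form the product polynomial, view it over a suitable prime field $\F_p$, and apply Lemma~\ref{lem:main} after establishing anti-concentration modulo $p$ on a nearby layer), and you correctly isolate the step you leave open as the crux. But that step, as you sketch it, does not go through, for two intertwined reasons.

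First, the heuristic ``each $\ip{u^{(i)}}{\bm v}$ is hypergeometric, spread essentially uniformly modulo $p$'' is wrong in the regime that matters. Here $p$ must be chosen of size $\Theta(\sqrt{n\log(1/\varepsilon)})$ (and one should take it to be a genuine prime of this magnitude, not the fixed characteristic of a pre-chosen field), whereas the hypergeometric variable $\ip{u^{(i)}}{\bm v}$ has standard deviation only $\Theta(\sqrt n)$. For subconstant $\varepsilon$ this window is \emph{smaller} than $p$, so the distribution is far from equidistributed modulo $p$: it is essentially supported on a single residue window around its mean $\approx n/4 + q/2$. Consequently, if some $b_i$ happens to lie in that window modulo $p$, the event $\ip{u^{(i)}}{\bm v}\equiv b_i\ (\mathrm{mod}\ p)$ has probability $\Theta(1)$ rather than anything small, and your union bound collapses. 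The ``window $[q,n/2]$'' observation does not help, since the problematic $b_i$'s live comfortably inside it.

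Second, and relatedly, you never control which residues $b_i \bmod p$ can appear. The paper resolves this by a preliminary \emph{balance reduction}: an index $i$ is called balanced if $|b_i-n/4|\le t$ for $t=\Theta(\sqrt{n\log(1/\varepsilon)})$; unbalanced indices contribute negligibly on $\{0,1\}^n_{n/2}$ (the hypergeometric tail kills them), so one may delete them and work with the balanced factors only, at the cost of doubling $\varepsilon$. For balanced $i$, and with $p$ chosen in $[10t,20t]$, the quantity $\Delta_i = b_i - \lceil s/2\rceil$ (where $s$ is the shifted weight) lands in the \emph{middle third} of $[0,p]$, so every integer $\Delta_i+jp$ has magnitude $\Omega(p)$, and Claim~\ref{clm:hypergeometric} gives $\Pr[\ip{u^{(i)}}{\bm v}\equiv b_i\ (\mathrm{mod}\ p)] \le \exp(-\Omega(p^2/n))\cdot O(1/\sqrt n)$, which is small enough to union-bound. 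Without the balance step there is simply no control on $\Delta_i \bmod p$, and the estimate you hope for ``should give the required lower bound'' is false for adversarial $b_i$. So the missing piece in your sketch is not a routine computation; it is precisely the balance filtering and the matching choice of $p$ that makes the residues $b_i \bmod p$ avoid the peak of the hypergeometric distribution.
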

	
	The theorem can easily seen to be tight up to constant factors. For $t = C\cdot \sqrt{n\log(1/\varepsilon)}$, set $m = 2t+1$ and take $u^{(1)} = u^{(2)} = \cdots = u^{(m)} = 1^{n/2}0^{n/2}$ and $b_1 = (n/4) -t,b_2 = (n/4)-t+1,\ldots,b_m = (n/4)+ t.$ By standard Chernoff bounds for the Hypergeometric distribution, we immediately get that this set of hyperplanes satisfy the above condition for a large enough choice of the constant $C$.
	
	We need the following standard bound on binomial coefficients. For completeness, we include the proof in Appendix~\ref{appsec:hypergeometric}.

   	\begin{restatable}{claim}{clmhypergeometric}
	\label{clm:hypergeometric}
	Let $n$ be an even integer and $m$ a non-negative integer with $m\leq n/2$. Then, for any $k,\ell\in \{0,\ldots,\lfloor m/2\rfloor\}$ with $\ell \leq k$, we have
	\[
	\frac{\binom{n/2}{\lfloor m/2\rfloor - k}\binom{n/2}{\lceil m/2\rceil + k}}{\binom{n/2}{\lfloor m/2\rfloor-\ell}\binom{n/2}{\lceil m/2\rceil+\ell}}\leq \exp(-\Omega((k^2-\ell^2)/m)).
	\]
	\end{restatable}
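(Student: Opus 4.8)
\emph{Proof plan.} The plan is to telescope. Write $N=n/2$, $a=\lfloor m/2\rfloor$ and $b=\lceil m/2\rceil$, so that $a+b=m$ and $b-a\in\{0,1\}$, and express the left-hand side as
\[
\prod_{j=\ell}^{k-1} r_j,\qquad\text{where}\qquad r_j:=\frac{\binom{N}{a-j-1}\binom{N}{b+j+1}}{\binom{N}{a-j}\binom{N}{b+j}}.
\]
Using the two elementary identities $\binom{N}{t-1}/\binom{N}{t}=t/(N-t+1)$ and $\binom{N}{s+1}/\binom{N}{s}=(N-s)/(s+1)$, this per-step ratio has the closed form $r_j=\frac{(a-j)(N-b-j)}{(N-a+j+1)(b+j+1)}$. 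Since $j\le k-1\le a-1$ (this is the only place the hypothesis $k\le\lfloor m/2\rfloor$ is used) and $b+j\le m-1<N$ (using $m\le n/2$), all four factors are positive, so each $r_j$ lies in $[0,1)$ and the product is well defined; the degenerate case $m=0$ forces $k=\ell=0$ and both sides equal $1$.

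First I would compute $1-r_j$ in a clean form. Setting $R=a-j$ and $S=N-b-j$, note that $N-a+j+1=(N+1)-R$ and $b+j+1=(N+1)-S$, while $R+S=N+(a-b)-2j$. Hence the numerator of $1-r_j$ equals
\[
\big((N+1)-R\big)\big((N+1)-S\big)-RS=(N+1)\big((N+1)-R-S\big)=(N+1)\big(1+b-a+2j\big),
\]
which, since $b\ge a$, is at least $(N+1)(2j+1)$. The denominator $(N-a+j+1)(b+j+1)$ is at most $(N+1)(m+1)$, using $a\ge j$ and $b+j+1\le a+b+1=m+1$. Therefore $1-r_j\ge\frac{2j+1}{m+1}$.

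To finish, apply $1-x\le e^{-x}$ to obtain $r_j\le\exp\!\big(-(2j+1)/(m+1)\big)$, multiply over $j=\ell,\dots,k-1$, and observe that the exponents telescope, $\sum_{j=\ell}^{k-1}(2j+1)=k^2-\ell^2$. This gives $\prod_{j=\ell}^{k-1}r_j\le\exp\!\big(-(k^2-\ell^2)/(m+1)\big)=\exp(-\Omega((k^2-\ell^2)/m))$, as claimed. The main obstacle — really the only nontrivial point — is spotting that $1-r_j$ collapses to the single product $(N+1)(1+b-a+2j)$ divided by a product of two linear terms, and checking positivity of all the factors so that both the telescoping and the inequality $1-x\le e^{-x}$ are legitimate; everything after that identity is a one-line estimate.
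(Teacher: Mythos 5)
Your proof is correct and follows the paper's approach exactly: telescope the left-hand side into a product of per-step ratios, expand each one via factorials, and bound with $1-x\le e^{-x}$. Your closed-form computation of one minus the per-step ratio (which you show is at least $(2j+1)/(m+1)$) is a small tightening of the paper's per-step bound $\frac{\lfloor m/2\rfloor-j}{\lceil m/2\rceil+j+1}\le 1-\frac{2j}{m}$, which is vacuous at $j=0$; your version makes the telescoped exponent come out as exactly $(k^2-\ell^2)/(m+1)$ rather than $(k^2-k-\ell^2+\ell)/m$, avoiding a tiny edge-case gap in the paper's chain (e.g.\ $k=1,\ell=0$).
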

	
	Given the above, we can prove Theorem~\ref{thm:robustHegedus} as follows.
	
	\begin{proof}[Proof of Theorem~\ref{thm:robustHegedus}]
	Recall that for any fixed $u\in \{0,1\}^n_{n/2}$ and any $b\in \mathbb{Z}$, the probability that a uniformly random $v\in \{0,1\}^n$ satisfies $\ip{u}{v} = b$ is at most $O(1/\sqrt{n})$. In particular, we must have $m = \Omega(\sqrt{n})$ for any $\varepsilon \leq 1/2.$ This proves the result for $\varepsilon = \Omega(1).$
	
	Hence, we may assume that $\varepsilon$ is smaller than any fixed constant. We can also assume that $\varepsilon \geq 2^{-\delta n}$ for a small enough constant $\delta$. Assume that $m\leq \sqrt{n\log(1/\varepsilon)}$.
	
	We call $i\in [m]$ \emph{balanced} if $|b_i- \frac{n}{4}|\leq t$ where $ t := C\sqrt{n\log(1/\varepsilon)}$ for a large enough constant~$C$. If $i$ is not balanced, then we have for a uniformly random $\bm{v}\sim \{0,1\}^n_{n/2}$,
	\[
	\prob{\bm{v}}{\ip{u^{(i)}}{\bm{v}} = b_i} \leq \frac{\binom{n/2}{n/4+t}\binom{n/2}{n/4-t}}{\binom{n}{n/2}}\leq \exp(-\Omega(t^2/n))\frac{\binom{n/2}{n/4}^2}{\binom{n}{n/2}} < \frac{\varepsilon^2}{\sqrt{n}}.
	\]
 The second inequality above follows from Claim~\ref{clm:hypergeometric}, and the third follows from the Stirling approximation and using the fact that $C$ is a large enough constant. In particular, if $B$ is the set of balanced $i$, we have 
	\[
	\prob{\bm{v}}{\exists i\not\in B,\ \ip{u^{(i)}}{\bm{v}} = b_i} \leq m\cdot \frac{\varepsilon^2}{\sqrt{n}} < \varepsilon
	\]
	where we used the fact that $m\leq \sqrt{n\log(1/\varepsilon)}$. We can thus consider only $\{u^{(i)}\ |\ i\in B\}$, which satisfy the hypothesis with error probability $\varepsilon_1 := 2\varepsilon.$ 
	
	Now consider the polynomial 
	\[
	P(x_1,\ldots,x_n) = \prod_{i\in B}(\ip{u^{(i)}}{x} - b_i).
	\]
	We know that $P$ vanishes at a random point of $\{0,1\}^n_{n/2}$ with probability at least $1-\varepsilon_1.$ Now, fix any prime $p\in [10t,20t]$ (such a prime exists by standard number-theoretic results). We claim that for any $i\in B$ and a uniformly random point $\bm{v}\in \{0,1\}^n_{n/2-p}$, we have  
	\begin{equation}
	\label{eq:robustHeg}
	\prob{\bm{v}}{\ip{u^{(i)}}{\bm{v}}\equiv b_i\pmod{p}} \leq \frac{\varepsilon^2}{\sqrt{n}} 
	\end{equation}
	 for a large enough constant $C$. Informally speaking, the reason for this inequality is as follows: the expected value of $\ip{u^{(i)}}{\bm{v}}$ is $(n/4)-p/2$ and any number $b \equiv b_i\pmod{p}$ is far from this expectation. To prove this, let $s = n/2-p$. Note that $s = \Omega(n)$ as long as $t$ is small enough in relation to $n$, which happens if $\delta$ is assumed to be a small enough constant. Using the fact that $i$ is balanced, we note that 
	 \begin{align*}
			 \Delta_i := b_i - \ceil*{\frac{s}{2}} &\leq \frac{n}{4} + t - \left(\frac{n}{4} - \ceil*{\frac{p}{2}}\right)	\leq \frac{2p}{3}\\
			 \Delta_i &\geq \frac{n}{4} -t - \left(\frac{n}{4} - \ceil*{\frac{p}{2}}\right) \geq \frac{p}{3}.
	 \end{align*}
	  We thus have
	 \begin{align*}
		\prob{\bm{v}}{\ip{u^{(i)}}{\bm{v}}\equiv b_i\pmod{p} \wedge \ip{u^{(i)}}{\bm{v}}\geq \ceil*{\frac{s}{2}}} &= \sum_{j \geq 0} \prob{\bm{v}}{\ip{u^{(i)}}{\bm{v}}= b_i +  jp}\\
		&= \sum_{j \geq 0} \prob{\bm{v}}{\ip{u^{(i)}}{\bm{v}} - \ceil*{\frac{s}{2}}= \Delta_i +  jp}\\
		& = \sum_{j \geq 0}\frac{\binom{n/2}{\ceil*{\frac{s}{2}} + \Delta_i + jp}\binom{n/2}{\floor*{\frac{s}{2}}- \Delta_i- jp}}{\binom{n}{s}}\\
	\text{(by Claim~\ref{clm:hypergeometric})}	&= \frac{\binom{n/2}{\ceil*{\frac{s}{2}}} \binom{n/2}{\floor*{\frac{s}{2}}}}{\binom{n}{s}}\cdot  \sum_{j \geq 0} \exp(-\Omega((\Delta_i+jp)^2)/s))\\
	\text{(Stirling approximation and $s  = \Omega(n)$}&\leq O\left(\frac{1}{\sqrt{n}}\right) \cdot \sum_{j\geq 0} \exp(-\Omega(\Delta_i^2 + j p^2)/s)\\
	\text{($p^2/s \geq C^2$ )} &\leq O\left(\frac{1}{\sqrt{n}}\right) \cdot \exp(-\Omega(\Delta_i^2/s)) \cdot \sum_{j\geq 0} \exp(-\Omega(C^2 j)) \\
	\text{(for large enough $C$ )} &\leq O\left(\frac{1}{\sqrt{n}}\right) \cdot \exp(-\Omega(\Delta_i^2/s))\cdot 2\\
	&= O\left(\frac{1}{\sqrt{n}}\right) \cdot \exp(-\Omega(p^2/s)).
	 \end{align*}
	 In a similar way, we also get
	 \begin{align*}
		\prob{\bm{v}}{\ip{u^{(i)}}{\bm{v}}\equiv b_i\pmod{p} \wedge \ip{u^{(i)}}{\bm{v}}\leq \floor*{\frac{s}{2}}} \leq O\left(\frac{1}{\sqrt{n}}\right) \cdot \exp(-\Omega(p^2/s)).
	 \end{align*}
	 Overall, we thus obtain for any $i\in B$,
	 \begin{align*}
		\prob{\bm{v}}{\ip{u^{(i)}}{\bm{v}}\equiv b_i\pmod{p}} \leq O\left(\frac{1}{\sqrt{n}}\right) \cdot \exp(-\Omega(p^2/s))\leq \frac{\varepsilon^2}{\sqrt{n}}
	 \end{align*}
	 as long as $C$ is a large enough constant. Union bounding over the at most $m\leq \sqrt{n\log(1/\varepsilon)}$ elements of $B$, we see that
	\[
	\prob{\bm{v}\in \{0,1\}^n_{n/2-p}}{P(\bm{v}) \equiv 0 \pmod{p}} \leq \varepsilon.
	\]
	From now on, we consider the polynomial $P$ as an element of $\F_p[x_1,\ldots,x_n].$ At this point, we would like to apply Lemma~\ref{lem:main} to the polynomial $P$ and finish the proof. Unfortunately, the error parameter $\varepsilon_1$ above is not small enough to apply Lemma~\ref{lem:main} directly (we need $\varepsilon_1 \leq \exp(-200p^2/n)$). However, we can do a simple error reduction as in Lemma~\ref{lem:repetition} to ensure that Lemma~\ref{lem:main} is applicable. More precisely, choose $r$ to be a large enough absolute constant so that $\varepsilon_1^{r} \leq \frac{1}{2}\exp(-200p^2/n).$ Now, by Lemma~\ref{lem:repetition} there is a probabilistic polynomial $\bm{P^{(r)}}$ of degree at most $r\cdot \deg(P)$ such that 
	\begin{align*}
	\prob{\bm{v}\sim \{0,1\}^n_{n/2},\bm{P^{(r)}}}{\bm{P^{(r)}}(\bm{v}) = 0} &\leq \varepsilon_1^{2r} \leq \frac{1}{2}\exp(-200p^2/n), \text{ and }\\
	\prob{\bm{v}\sim \{0,1\}^n_{n/2-p},\bm{P^{(r)}}}{\bm{P^{(r)}}(\bm{v}) \neq 0} &\geq (1-\varepsilon)^{r} \geq 1-r\varepsilon \geq \frac{9}{10}
	\end{align*}	 
	where for the last inequality we used the fact that $\varepsilon$ is smaller than some absolute constant. 
	
	By a simple union bound, there is a fixed polynomial $P'\in \F_p[x_1,\ldots,x_n]$ of degree $r\cdot \deg(P) = O(m)$ such that
	\begin{align*}
	\prob{\bm{v}\sim \{0,1\}^n_{n/2}}{P'(\bm{v}) = 0} & \leq \exp(-200p^2/n), \text{ and }\\
	\prob{\bm{v}\sim \{0,1\}^n_{n/2-p}}{P'(\bm{v}) \neq 0} & \geq \frac{1}{2}
	\end{align*}	 	
	Hence, applying Lemma~\ref{lem:main} to the polynomial $P'$, we get $\deg(P') = \Omega(p) = \Omega(\sqrt{n\log(1/\varepsilon)}).$ This yields the desired lower bound on $m.$
	\end{proof}

	\subparagraph*{Acknowledgements.} The author is grateful to Mrinal Kumar, Nutan Limaye, Utkarsh Tripathi and S. Venkitesh for useful discussions, feedback, and encouragement. The author thanks Nutan Limaye for suggesting the robust version of Galvin's problem as an application.  The author is also grateful to the anonymous referees of STOC 2020 and TheoretiCS for their corrections and suggestions. In particular, a referee for the TheoretiCS submission pointed out an extension to the main lemma (Lemma~\ref{lem:main-extension}).

	\printbibliography
	
	\appendix
	
	\section{Lemma~\ref{lem:Hegedus} is implied by Lemma~\ref{lem:main} (up to constant factors)}
	\label{appsec:lem2vslem1}
	
	The following claim shows that if there is a $P$ satisfying the hypotheses of Lemma~\ref{lem:Hegedus}, then there is also a polynomial $Q$ of degree at most $\deg(P)$ satisfying a stronger property, namely, that of not vanishing at too many points of $\{0,1\}^n_{k+q}.$
	\begin{claim}
	\label{clm:remark1}
	Let $\mathbb{F}$ be a field of characteristic $p > 0$. Fix any positive integers $n,k,q$ such that $k\in [q,n-q],$ and $q$ a power of $p$. If there is a polynomial $P\in \F[x_1,\ldots,x_n]$ is any  polynomial that vanishes at \emph{all} $a\in \{0,1\}^n_k$ but does not vanish at \emph{some} $b\in \{0,1\}^n_{k+q}$, then  there is a $Q\in \F[x_1,\ldots,x_n]$ of degree at most $\deg(P)$ such that $Q$ vanishes at all $a\in \{0,1\}^n_k$ but is non-zero at at least a $(1-1/p)$ fraction of the points in $\{0,1\}^n_{k+q}.$
	\end{claim}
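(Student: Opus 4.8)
The plan is to obtain $Q$ as a random $\F_p$-linear combination of the symmetric-group translates of $P$. First I would replace $P$ by its multilinearization (harmless: this does not increase the degree and preserves every evaluation on $\{0,1\}^n$), and for $\pi\in S_n$ set $P^{\pi}(x_1,\dots,x_n):=P(x_{\pi(1)},\dots,x_{\pi(n)})$. Then $\deg(P^{\pi})=\deg(P)$, and since permuting coordinates preserves Hamming weight, each $P^{\pi}$ still vanishes on all of $\{0,1\}^n_{k}$. The crucial (and only nontrivial) observation is transitivity of the $S_n$-action on the layer $W:=\{0,1\}^n_{k+q}$: writing $c^{(\pi)}:=(c_{\pi(1)},\dots,c_{\pi(n)})\in W$, the map $\pi\mapsto c^{(\pi)}$ is surjective onto $W$ for every fixed $c\in W$, so in particular there is some $\pi$ with $c^{(\pi)}=b$, i.e. $P^{\pi}(c)=P(b)\ne 0$. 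Hence, enumerating $S_n=\{\pi_1,\dots,\pi_m\}$ (or any finite subfamily already witnessing this), the sets $T_c:=\{\,i\in[m]:P^{\pi_i}(c)\ne 0\,\}$ are all nonempty. For $\lambda=(\lambda_1,\dots,\lambda_m)\in\F_p^{\,m}$ I then define $Q_\lambda:=\sum_{i=1}^{m}\lambda_i P^{\pi_i}$.

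The second step is the (standard) probabilistic argument fixing a good $\lambda$. By construction $Q_\lambda$ is multilinear of degree at most $\deg(P)$ and vanishes on $\{0,1\}^n_{k}$ since each $P^{\pi_i}$ does. For a fixed $c\in W$ we have $Q_\lambda(c)=\sum_{i\in T_c}\lambda_i v_{i,c}$ with every $v_{i,c}\in\F\setminus\{0\}$. The $\F_p$-linear map $\F_p^{\,m}\to\F$, $\lambda\mapsto\sum_{i\in T_c}\lambda_i v_{i,c}$, is not identically zero (evaluate at a standard basis vector $e_j$ with $j\in T_c$), so its image is an $\F_p$-subspace of $\F$ of size at least $p$, and therefore its kernel has density at most $1/p$; that is, $\Pr_{\lambda\sim\F_p^{\,m}}[Q_\lambda(c)=0]\le 1/p$. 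Summing over $c\in W$ gives $\mathbb{E}_{\lambda}\big[\#\{c\in W:Q_\lambda(c)\ne 0\}\big]\ge (1-1/p)\,|W|$, so there is a fixed $\lambda$ attaining at least this many nonzero points, and $Q:=Q_\lambda$ for that $\lambda$ is the desired polynomial (in particular it is nonzero on some point of $W$, hence not identically zero).

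I expect the only real subtlety to be the (minor) point that $\F$ may be strictly larger than its prime field $\F_p$: one must sample the coefficients $\lambda_i$ from $\F_p$, not from $\F$, and argue the $1/p$ bound via $\F_p$-linearity as above — this is exactly what yields the claimed $(1-1/p)$ fraction, which is also best possible. Everything else is routine: multilinearization, the invariance of both $\deg(\cdot)$ and the layer $\{0,1\}^n_{k}$ under coordinate permutations, and transitivity of $S_n$ on a Hamming layer. In particular, the hypotheses ``$q$ a power of $p$'' and ``$k\in[q,n-q]$'' are used only to guarantee that $W=\{0,1\}^n_{k+q}$ is a nonempty layer (i.e. $0\le k+q\le n$); the argument otherwise makes no use of them.
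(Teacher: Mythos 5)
Your proof is correct. It is close in spirit to the paper's: both rest on the same two ideas, namely (i) $S_n$-transitivity on the layer $\{0,1\}^n_{k+q}$ to promote the single witness $b$ to a witness for \emph{every} point of that layer, and (ii) a random $\F_p$-linear combination of degree-$d$ polynomials vanishing on $\{0,1\}^n_k$, followed by averaging. The executions differ in two small but genuine ways. First, the paper samples $\bm{Q}$ uniformly from the \emph{entire} space $V_{d,k}$ of multilinear polynomials of degree at most $d$ vanishing on $\{0,1\}^n_k$, and invokes the degree-$d$ closure $\mathrm{cl}_d(\{0,1\}^n_k)$ together with the standard fact that a uniform element of $V_{d,k}$ evaluates uniformly at any point outside the closure. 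You instead restrict attention to the $\F_p$-span of the $S_n$-translates $P^{\pi}$ of $P$, which is a subspace of $V_{d,k}$ that already suffices and makes the nonvanishing witnesses entirely explicit. Second, the paper first reduces to the case $\F=\F_p$ by observing that (after normalizing $P(b)=1$) the hypotheses on $P$ form a linear system with $\F_p$-rational coefficients, so a solution exists already over $\F_p$; you avoid this reduction by treating the evaluation map $\lambda\mapsto Q_\lambda(c)$ as an $\F_p$-linear map into $\F$, noting its image is a nonzero $\F_p$-subspace (hence of size at least $p$) and therefore its kernel has density at most $1/p$. Both routes deliver exactly the $(1-1/p)$ bound; yours is a bit more self-contained since it needs neither the closure formalism nor the field-of-definition reduction, while the paper's version exposes the structural fact that the obstruction is really the closure of $\{0,1\}^n_k$. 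One tiny remark: the initial multilinearization step in your argument is harmless but also unnecessary, since the translates $P^{\pi}$ and their $\F_p$-linear combinations already have degree at most $\deg(P)$ and vanish on $\{0,1\}^n_k$ regardless of whether $P$ is multilinear.
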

	
	\begin{proof}
	Let $d = \deg(P)$. Assume without loss of generality that $P(b) = 1.$ Note that $P$ is the solution to the system of linear equations defined by the following constraints on polynomials of degree at most $d$.
	\begin{align*}
		|a| = k &\Rightarrow P(a) = 0 \\
		P(b) &= 1.
	\end{align*}
	As the above linear system is over $\mathbb{F}_p\subseteq \mathbb{F},$ we note that we may assume that $P\in \mathbb{F}_p[x_1,\ldots,x_n].$ From now on, we assume that $\mathbb{F} = \mathbb{F}_p.$

	Consider the degree-$d$ closure $C = \mathrm{cl}_d(\{0,1\}^n_k).$ By the existence of $P$, we see that $b\not\in C.$ However, by symmetry, this implies that no point $b'\in \{0,1\}^n_{k+q}$ lies in $C$. 
	
	Let $V_{d,k}$  denote the vector space of all multilinear polynomials of degree at most $d$ that vanish at all points in $\{0,1\}^n_k.$ Let $\bm{Q}$ be a uniformly random element of $V_{d,k}$. For any $c\in\{0,1\}^n \setminus C$, standard linear algebra implies that $\bm{Q}(c)$ is a uniformly random element of $\mathbb{F} = \mathbb{F}_p.$ In particular, for any $b'\in \{0,1\}^n_{k+q}$, we see that
	\[
	\prob{\bm{Q}}{\bm{Q}(b')\neq 0} = 1-1/p.
	\]
	In particular, there is a $Q\in V_{d,k}$ that is non-zero at at least a $(1-1/p)$ fraction of points in $\{0,1\}^n_{k+q}.$ This yields the statement of the claim.	
	\end{proof}
	
	\section{Proof of Lemma~\ref{lem:string} (the string lemma)}
	\label{appsec:string-lemma}
	
	We begin by recalling the statement of the lemma.

        \lemstring*
	
	\begin{proof}
	 Assume that $|u| = \ell, |v| = m$ and $|w| = \ell+m = n.$  We will show in fact that both $u$ and~$v$ are powers of the same non-empty string $z$. This will clearly imply the lemma.
	
	The proof is by induction on the length of $w$. The base case of the induction corresponds to $n=2$, which is obvious.
	
	We now proceed with the inductive case. Assume w.l.o.g. that $\ell \leq m.$ As $uv = vu$, we see that the first $\ell$ symbols in $v$ match those of $u$, and hence we have $v = uv'$ for some $v'\in \{0,1\}^{m-\ell}.$ If $\ell = m,$ this implies that $u=v$ and we are immediately done. Otherwise, we see that $w = uv'u = v'uu$ for a non-empty string $v'$. Hence, we have $uv' = v'u$. By the induction hypothesis, we know that both $u$ and $v'$ are powers of some non-empty $z$. Hence, so is $v$. This concludes the proof.
	\end{proof}
	
	\section{Proof of Claim~\ref{clm:hypergeometric}}
	\label{appsec:hypergeometric}
We first restate the claim.
        \clmhypergeometric*
	
	\begin{proof}
	It suffices to show that for each $k\in \{0,\ldots,\lfloor m/2\rfloor-1\}$,
	\begin{equation}
	\label{eq:hypergeometric}
	\frac{\binom{n/2}{\lfloor m/2\rfloor - k-1}\binom{n/2}{\lceil m/2\rceil + k+1}}{\binom{n/2}{\lfloor m/2\rfloor-k}\binom{n/2}{\lceil m/2\rceil+k}}\leq \exp(-\Omega(k/m)).
	\end{equation}
	The claim then follows by a simple induction on $k-\ell$.
	
	To prove (\ref{eq:hypergeometric}), we proceed as follows. By an expansion of binomial coefficients in terms of factorials, we see that 
	\begin{align*}
	\frac{\binom{n/2}{\lfloor m/2\rfloor - k-1}\binom{n/2}{\lceil m/2\rceil + k+1}}{\binom{n/2}{\lfloor m/2\rfloor-k}\binom{n/2}{\lceil m/2\rceil+k}} 
		&= \frac{(\lfloor m/2\rfloor -k)(n/2-(\lceil m/2\rceil+k))}{(n/2-(\lfloor m/2\rfloor - k-1))(\lceil m/2\rceil + k+1)}\\
		&\leq \frac{\lfloor m/2\rfloor -k}{\lceil m/2\rceil + k+1}\\
		&\leq \frac{(m/2)-k}{(m/2)} \leq 1-2k/m \leq \exp(-2k/m). \qedhere
	\end{align*}
	\end{proof}

\end{document}